\let\latexaddtocontents\addtocontents
\let\addtocontents\latexaddtocontents
\title{The Min-Entropy of Classical-Quantum Combs for Measurement-Based Applications}
\author{Isaac D. Smith}
\email{isaac.smith@uibk.ac.at}
\author{Marius Krumm}
\author{Lukas J. Fiderer}
\author{Hendrik Poulsen Nautrup}
\author{Hans J. Briegel}
\affiliation{Institute for Theoretical Physics, UIBK, 6020 Innsbruck, Austria}
\newtheorem{Theorem}{Theorem}[section]
\newtheorem{Lemma}[Theorem]{Lemma}
\newtheorem{Proposition}[Theorem]{Proposition}
\theoremstyle{definition}
\newtheorem{Definition}{Definition}[section]
\theoremstyle{definition}
\theoremstyle{definition}
\theoremstyle{definition}
\theoremstyle{definition}
\DeclareMathOperator{\Odd}{Odd}
\DeclareMathOperator{\Tr}{Tr}
\DeclareMathOperator{\diag}{diag}
\DeclareMathOperator{\Od}{Odd}
\DeclareMathOperator{\Pa}{Pa}
\DeclareMathOperator{\BQC}{BQC}
\DeclareMathOperator{\MBQC}{MBQC}
\DeclareMathOperator{\client}{client}
\DeclareMathOperator{\Comb}{Comb}
\renewcommand{\H}{\mathcal{H}}
\renewcommand{\L}{\mathcal{L}}
\newcommand{\M}{\mathcal{M}}
\begin{document}

\maketitle

\begin{abstract}
Learning a hidden property of a quantum system typically requires a series of interactions. In this work, we formalise such multi-round learning processes using a generalisation of classical-quantum states, called classical-quantum combs. Here, ``classical'' refers to a random variable encoding the hidden property to be learnt, and ``quantum'' refers to the quantum comb describing the behaviour of the system. The optimal strategy for learning the hidden property can be quantified by applying the comb min-entropy (Chiribella and Ebler, \textit{NJP}, $2016$) to classical-quantum combs. To demonstrate the power of this approach, we focus attention on an array of problems derived from measurement-based quantum computation (MBQC) and related applications. Specifically, we describe a known blind quantum computation (BQC) protocol using the combs formalism and thereby leverage the min-entropy to provide a proof of single-shot security for multiple rounds of the protocol, extending the existing result in the literature. Furthermore, we consider a range of operationally motivated examples related to the verification of a partially unknown MBQC device. These examples involve learning the features of the device necessary for its correct use, including learning its internal reference frame for measurement calibration. We also introduce a novel connection between MBQC and quantum causal models that arises in this context.
\end{abstract}


\section{Introduction} \label{sec:intro}

With the rapid development of quantum technology, a plethora of increasingly complex quantum devices has become available. Already, a range of noisy-intermediate scale quantum computers (NISQ) \cite{NISQ} are accessible via the internet. Year on year, these devices increase in size and quality, as evidenced by the growing number of addressable qubits, the lengthening coherence times and improving gate fidelity.

An important parallel development to that of the devices themselves, is in how they interconnect. The current and future progress of quantum communication networks \cite{QuantumRepeaters, QuantumNetworkBook, QuantumInternetArrived, Satellite, DiamondNetwork, NetworkEntanglement, QKD307km, 2000kmFibre, QKDreview,cacciapuoti2019quantum} aims to create a global quantum internet of networked quantum computers \cite{QuantumInternetAlliance, Roadmap}, with the attendant benefits in communication and cryptography. Current implementations of such networks largely focus on protocols for quantum key distribution, however a variety of other significant applications appear to be feasible in the near-term, including distributed quantum computation \cite{EfficientDistributedQuantum, DistributedQuantum}, quantum position verification \cite{QuantumPositionVerification}, and blind quantum computation \cite{fitzsimons2017private}. 

With the increasing sophistication of quantum networks, a full description of the underlying physical systems and dynamics quickly becomes intractable. In many cases, a quantum information-theoretic description is both expressive enough to capture the relevant characteristics of the network and manageable enough to make progress on questions surrounding verification and functionality. Since quantum networks consist of a number of communication channels connecting different nodes, it is natural to consider their information-theoretic description as a concatenation of quantum channels connecting different Hilbert spaces. The results operators are commonly referred to as \textit{quantum combs} \cite{chiribella2009theoretical,chiribella2008quantum}.

Just as density matrices and quantum operators are versatile representations of quantum states and single time-step quantum dynamics respectively \cite{nielsen_chuang_2010}, quantum combs are a widely applicable representation of dynamics that occur over multiple time-steps especially in the presence of correlations with an environment. They are represented by operators acting on the Hilbert spaces corresponding to controllable systems only, which constitutes a distinct advantage of their use: any forwarding of information between systems mediated by the environment is modelled by an operator of the same size. Accordingly, quantum combs have become a fruitful tool to describe e.g., non-Markovian noise in quantum systems \cite{pollock2018non} and multi-round quantum information processing protocols \cite{chiribella2009theoretical,chiribella2008quantum}. 

A crucial discovery within the field of theoretical quantum networks is that the optimisation of multi-round quantum protocols with respect to different performance measures can be framed as \textit{semi-definite programs} \cite{chiribella2016optimal}. For a given protocol, the maximal ability to produce quantum correlations with its output leads to a notion of \textit{min-entropy} for quantum combs, a generalisation of the conditional min-entropy of quantum states \cite{konig2009operational,renner2008security}. In this work, we use the comb min-entropy to quantify how well correlations can be generated with an unknown classical parameter underlying the quantum networks via an optimal strategy, thereby quantifying the ultimate limits on what can be learnt about the parameter. These networks are modelled as classical-quantum combs, which are generalisations of classical-quantum states \cite{wilde_2017}, where the classical variable indexes as family of quantum combs rather than quantum states. Due to the broad applicability of both the combs formalism and entropic quantities in quantum information theory, this classical-quantum comb min-entropy framework represents a powerful tool for analysing a range of information processing tasks. In particular, this framework allows for the analysis of the structural properties of a protocol, which cannot be represented using classical-quantum states. We demonstrate the utility of this approach via a range of operationally motivated examples based on measurement-based quantum computation.

Measurement-based quantum computing (MBQC) \cite{raussendorf2001one, briegel2009measurement, raussendorf2003measurement, raussendorf2001computational, jozsa2006introduction, gross2007novel} is a well-known paradigm for quantum computation distinct from the circuit model formalism, in which computation is driven by measurements rather than unitary evolutions. It provides an exemplary test-bed for the application of classical-quantum combs and the comb min-entropy for a number of reasons. Firstly, the computation is inherently sequential: single qubit measurements are performed on graph states \cite{briegel2001persistent,hein2006entanglement,hein2006ent,hein2004multiparty, mhalla2011graph} following a specified order \cite{raussendorf2001computational, browne2007generalized, danos2006determinism, markham2014entanglement, danos2007measurement}. Secondly, due to connections with stabiliser quantum mechanics \cite{van2004graphical,van2005local}, investigations into certain aspects of MBQC remain tractable despite its multi-partite, high-dimensional nature. Finally, there are pragmatic reasons to consider examples based on MBQC: as a universal computation paradigm, it represents one pathway to fully-fledged quantum computation \cite{walther2005experimental,raussendorf2007topological,Tame_07} and forms the basis of a range of cryptographic computation protocols collectively known as blind quantum computation (BQC) \cite{broadbent2009universal,mantri2017flow,morimae2013blind,morimae2014verification, fitzsimons2017private}.

The first classical-quantum comb we consider models a specific BQC protocol, that of Mantri et al. \cite{mantri2017flow}. This protocol consists of entirely classical communication between a client and a (quantum) server, allowing the client to obtain the results of a measurement-based quantum computation while maintaining no quantum capabilities. The classical parameter in this context encodes the secret choice of computation by the client which an untrustworthy server may attempt to discover. Due to the restriction to classical communication, the quantum comb part of the classical-quantum comb, the part that models a round of the protocol, is entirely classical (i.e. we consider a classical-classical comb). The utility of representing the protocol as a comb arises from the ability to apply the min-entropy in the security analysis of the protocol. In contemporary quantum cryptography, the min-entropy is typically taken as the gold standard for statements of security \cite{portmann2022security}. Consequently, by using the comb min-entropy we provide here a stronger security analysis for a single round of the protocol than the original paper \cite{mantri2017flow}. Furthermore, we provide an analysis of the security of the protocol under multiple rounds of computation, indicating that more information is leaked than the in the single-round case. This multi-round analysis, as well as the application of the comb min-entropy for BQC security analysis in general, are novel in this work.

Not all measurement based quantum computation must be performed as part of a cryptographic protocol, it could also be implemented via a device in a laboratory. In such a case, the device must be verified to check that it is functioning correctly. We investigate a variety of scenarios where some aspect of an MBQC device is unknown and whose details are required in order to perform correct computations. For example, the specific measurements applied to perform the computation must be calibrated to the internal state preparation of the device. In all scenarios we consider, the classical-quantum combs are built from a comb based upon the method of measurement adaptation required for MBQC \cite{browne2007generalized,danos2006determinism}. Aside from the pragmatic considerations related to device verification, we also establish a novel connection between MBQC and quantum causal models \cite{barrett2019quantum,costa2016quantum,allen2017quantum,ried2015quantum,fitzsimons2015quantum,chiribella2019quantum,kubler2018two} which may be of foundational interest.

The remainder of this paper is structured as follows. The next subsection provides a brief and informal presentation of the key results in this work. Quantum combs are presented in \Cref{sec:Q_combs}, including classical-quantum combs that are or main focus. The comb min-entropy is presented in \Cref{sec:Q_comb_min_ent}, including some results specific to classical-quantum combs. \Cref{sec:MBQC} introduces the background on MBQC and the BQC protocol of \cite{mantri2017flow} required for example cases we investigate. \Cref{sec:CBQC} introduces the combs for the BQC protocol (\Cref{subsec:D_client}) and provides the security analysis (\Cref{subsec:BQC_results}). \Cref{sec:grey_MBQC} considers a partially unknown MBQC device. Its description in terms of combs is given in \Cref{subsec:gflow_qcomb} with examples related to learning the unknown properties given in \Cref{subsec:meas_planes,subsec:noise,subsec:meas_calibration}. A connection between MBQC and quantum causal models is discussed in \Cref{subsec:caus_inf}. \Cref{sec:discussions} discusses future directions and concludes.

\subsection{Our Contributions}

The foremost contribution of this paper consists in presenting the combination of classical-quantum combs and the comb min-entropy as an apposite framework for analysing quantum information processing protocols. Many protocols of interest are sequential and dependent on unknown parameters, some of which may represent features of the protocol in its entirety. Accordingly, these protocols are aptly represented by classical-quantum combs, with the parameters only able to be learnt through sequential interaction. The use of the comb min-entropy as we propose here provides knowledge of how best to learn these parameters, which can inform how to probe these protocols in practice. The classical-quantum comb min-entropy framework thus constitutes a methodology that is both analytically and numerically applicable in a diverse array of scenarios.

The following is an informal summary of some of the specific technical contributions of this paper, in their order of appearance. 
\begin{itemize}
	\item We established upper and lower bounds for the min-entropy for classical-quantum combs where the quantum combs are all diagonal in the same basis (\Cref{prop:bayesian_classical_comb}). 
	\item Using these bounds, we provide a security analysis of a single as well as multiple rounds of the BQC protocol of \cite{mantri2017flow} (\Cref{thm:single_shot_entropy} and \Cref{thm:multi_round}). For a single-round, our framework raises the original security analysis to modern cryptographic standards which use the min-entropy. The multi-round analysis is new in this work and indicates that the security of the protocol is compromised under multiple rounds.
	\item We provide an array of results quantifying the ability to verify different aspects of an MBQC device, which range from being able to learn the required set of measurements with certainty (\Cref{subsec:meas_planes}) to not being able to learn the structure of the device at all (\Cref{prop:no_causal_learning}). In the latter context, we demonstrate that noise in the device is in fact beneficial for learning its structure (\Cref{subsec:noise}).
	\item We establish a novel connection between MBQC and quantum causal models (\Cref{prop:MBQC_QCM}), and provide examples pertaining to learning an unknown parameter representing causal structure. This connection also has broader consequences e.g., for investigations into the structure of unitary transformations. 
\end{itemize}


\section{Quantum Combs Formalism} \label{sec:Q_combs}

In quantum information theory \cite{nielsen_chuang_2010}, states and evolutions of quantum systems are typically represented as operators on Hilbert spaces. Throughout this work, (finite dimensional) Hilbert spaces are denoted by $\H$ with the associated space of linear operators written $\L(\H)$. When multiple systems are under consideration, subscripts are used to remove ambiguity: i.e. $\H_{A}$ denotes the Hilbert space corresponding to system $A$. The state of system $A$ is represented by a density matrix $\rho \in \L(\H_{A})$ which is both positive semi-definite (PSD) and of unit trace. A quantum channel from system $A$ to system $B$ is a completely-positive and trace-preserving (CPTP) linear map from $\L(\H_{A})$ to $\L(H_{B})$, which can be equivalently represented via its Choi-Jamio\l{}kowski (CJ) state \cite{choi1975completely,jamiolkowski1972linear}: an element $D \in \L(\H_{A} \otimes \H_{B})$ which is PSD and satisfies $\Tr_{B}[D] = I_{A}$, with $\Tr_{B}$ denoting the partial trace over system $B$ and $I_{A}$ the identity operator on system $A$ (for completeness, further details of this equivalence are given in \Cref{subsec:app_qcombs}).

It is a fact that every quantum system is an open quantum system \cite{breuer2002theory}, which is to say, one that interacts with an `environment'. When using CPTP maps to describe the evolution of such a system, a number of assumptions are being made, often implicitly. Namely, it is assumed that the initial joint state of the system and environment is a product state and furthermore that the state of the environment and the joint evolution are time-independent. In some cases, these assumptions are reasonable approximations to the real state of affairs, however both experimental and theoretical evidence indicates that these assumptions are often not valid, and in such cases, neither is the application of the CPTP formalism \cite{pechukas_94,Alicki_95,pechukas_95,Royer_96} (see also \cite{milz2017introduction} for an overview). 

One resolution of this problem is to extend the formalism to allow for the removal of the above assumptions. The formalism of quantum combs \cite{chiribella2008quantum,chiribella2009theoretical} provides one such extension. A quantum comb is a linear operator on a number of Hilbert spaces which represents a sequence of channels, where some channels may take inputs contingent on (partial) outputs of previous ones. Accordingly, these operators provide a fruitful representation of quantum networks, with potentially complex connectivity. The formalism of quantum combs is closely related to that of process tensors in the modelling of non-Markovian open quantum systems \cite{pollock2018non,milz2017introduction}, that of quantum games \cite{gutoski2007toward} and the treatment of quantum causality \cite{oreshkov2012quantum,allen2017quantum,costa2016quantum,barrett2019quantum,barrett2021cyclic,d2018causality}. The definition of quantum comb used throughout this work is as follows; see \Cref{fig:combs_with_link} for the labelling of Hilbert spaces.

\begin{Definition}[\cite{chiribella2008quantum,chiribella2009theoretical}] \label{def:qcomb} A \textbf{quantum comb} is a positive semi-definite operator $D \in \L\left(\bigotimes_{j = 1}^{n} \H_{A_{j}} \otimes \H_{B_{j}}\right)$ for which there exists a sequence of positive semi-definite operators $D_{k} \in \L\left(\bigotimes_{j = 1}^{k} \H_{A_{j}} \otimes \H_{B_{j}}\right)$, $k = 0, ..., n$ which satisfy
\begin{align}
\Tr_{B_{k}}\left[D_{k} \right] = I_{A_{k}} \otimes D_{k-1} \quad \quad \forall k \in \{1, ..., n\}, \label{eq:comb_trace_conds}
\end{align}

with $D_{n} = D$ and $D_{0} = 1$.
\end{Definition}

\begin{figure*}
\includegraphics[width=\textwidth]{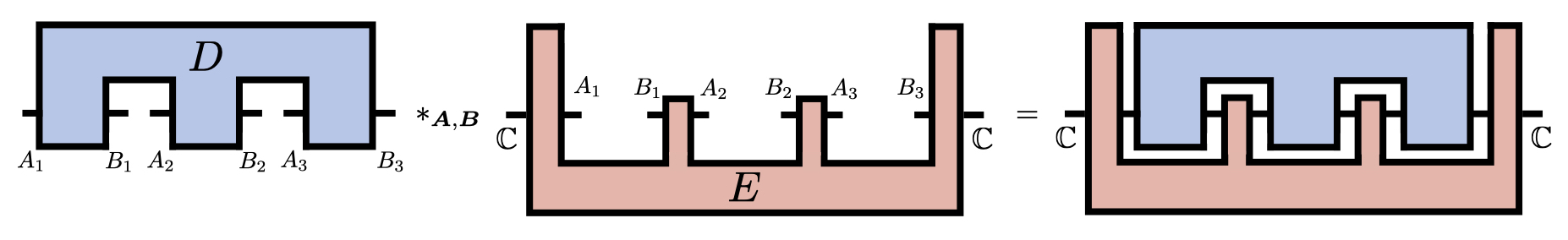}
\centering
\caption{Quantum combs are operators representing the evolution of a system over a series of time steps. The comb $D$ above receives an input initially at time $A_{1}$, produces an output at $B_{1}$, receives an input again at $A_{2}$, and so on from left to right. The operator $E$ is dual to the comb $D$ and represents an element of a generalised instrument, which is analogous to an element of a positive operator-valued measure. The operators $D$ and $E$ can be contracted using the link product, denoted using the $\ast_{\boldsymbol{A},\boldsymbol{B}}$ above, to produce a probability value (represented on the right as the contraction over all non-trivial spaces of the two operators).}
\label{fig:combs_with_link}
\end{figure*}

The requirement of positivity and the sequence of partial traces are inherited from the CPTP maps that underlie the comb (see \Cref{subsec:app_qcombs} for more details). To the extent possible, we use the notation $A_{i}$ for ``incoming'' Hilbert spaces in the comb definition and $B_{j}$ for all ``outgoing'' ones, as in \Cref{fig:combs_with_link}, where ``ingoing'' and ``outgoing'' identify which spaces are present in the partial trace conditions of \Cref{eq:comb_trace_conds}. For additional clarity, we denote the set of combs with these conditions as $\Comb(A_{1} \rightarrow B_{1},...,A_{n} \rightarrow B_{n})$, where the $\rightarrow$ notation mirrors the typical notation of a map $A \rightarrow B$ from input space $A$ to output space $B$. In any case where we wish to emphasise that a Hilbert space is one-dimensional, we replace the corresponding $A_{i}$ or $B_{j}$ with $\mathbb{C}$.

As a sequence of channels, quantum combs represent multi-time evolutions of quantum systems that are deterministic, as witnessed by the normalisation to unity in the above definition. Just as measurement channels are an important sub-class of quantum channels, it will be important for the coming treatment of the comb min-entropy (\Cref{sec:Q_comb_min_ent}) to have an analogous notion for quantum combs. The following definition of a generalised instrument parallels that of a positive operator-valued measure (POVM). 

\begin{Definition}[\cite{chiribella2009theoretical}] \label{def:gen_instrument} A \textbf{probabilistic comb} is a PSD operator $E$ such that $E \leq F$ for some (deterministic) comb $F$. A \textbf{generalised instrument} is a family of probabilistic combs $\{E_{i}\}$ such that $E = \sum_{i} E_{i}$ is a (deterministic) comb.
\end{Definition} 

In this work, we will consider generalised instruments that are dual to combs $D$ (see the operator labelled $E$ in \Cref{fig:combs_with_link}). Due to this duality, the combs $E$ have input and output spaces labelled differently to, but compatibly with, the combs $D$. For example, for $D \in \Comb(A_{1} \rightarrow B_{1}, ..., A_{n} \rightarrow B_{n})$ one can consider combs $E \in \Comb(\mathbb{C} \rightarrow A_{1}, B_{1} \rightarrow A_{2}, ..., B_{n} \rightarrow \mathbb{C})$. Note that for any probabilistic comb $E_{i}$, there exists a generalised instrument $E$ that contains it \cite[Section IV.D,][]{chiribella2009theoretical}.

To denote the contraction of a comb with another operator, such as an element of a generalised instrument (see again \Cref{fig:combs_with_link}) or even just the CJ state of a channel in $\L(\H_{B_{i}} \otimes \H_{A_{i+1}})$, it is convenient to have the following notation:

\begin{Definition}[\cite{chiribella2009theoretical}] \label{def:link_product}
For two operators $M \in \L(\H_{A} \otimes \H_{B})$ and $N \in \L(\H_{B} \otimes \H_{C})$, the \textbf{link product} is defined as:
\begin{align}
M*_{B} N := \Tr_{B}\left[(M^{\top_{B}}\otimes I_{C}) \cdot (I_{A} \otimes N)\right],
\end{align}

where the subscript on $\ast$ indicated the common subsystem $B$ over which $M$ and $N$ contract and where $\top_{B}$ denotes the partial transpose over the system $B$.
\end{Definition}

To conclude this section, we introduce two special types of comb that are pertinent to the applications considered below: classical-quantum combs and classical combs. The former represent families of quantum combs indexed by a (finite, discrete) random variable $X$, including a prior distribution over $X$. This is a direct extension of classical-quantum states such as those used in quantum hypothesis testing \cite{holevo2011probabilistic,holevo73,helstrom1967detection,holevo1973statistical}. The latter type represent combs that are diagonal in a given basis.

\begin{Definition} Let $X$ be a finite and discrete random variable and let $\{\sigma\}_{x \in X}$ denote a family of combs with $\sigma_{x} \in \Comb(A_{1} \rightarrow B_{1}, ..., A_{n} \rightarrow B_{n})$. A \textbf{classical-quantum comb} is a PSD operator
\begin{align}
D = \sum_{x \in X} P(x)\ket{x}\!\!\bra{x} \otimes \sigma_{x}  \label{eq:X_indexed_comb}
\end{align}

where $\{\ket{x}\}_{x}$ is an orthonormal basis of $\H_{X}$ and $P(x)$ denotes the prior distribution over $X$. 
\end{Definition}

A diagrammatic depiction of $D$ is shown in \Cref{fig:classical_quantum_comb}. It is easily verifiable that $D$ is indeed a comb, and in fact can be written as a comb in two different ways, as summarised by the following proposition (see \Cref{subsec:app_qcombs} for the proof):

\begin{Proposition} \label{prop:X_A_independence} The operator $D$ as above is an element of both $\Comb(A_{1} \rightarrow B_{1}, ..., A_{n} \rightarrow B_{n}, \mathbb{C} \rightarrow X)$ and $\Comb(\mathbb{C} \rightarrow X, A_{1} \rightarrow B_{1}, ..., A_{n} \rightarrow B_{n})$.
\end{Proposition}

The latter fact disallows signalling from the inputs to the $\sigma_{x}$ combs to the output at $X$ (cf. \cite[Section III.C,][]{chiribella2009theoretical}). This is significant for our purposes as it establishes the independence of $X$ from any strategy used to learn about its value; the value of $X$ is an ``objective'' feature of the system under investigation.

In some of the applications we consider below (see \Cref{sec:CBQC}), the combs $\sigma_{x}$ can all be diagonalised in the same basis. Accordingly, we can modify the criteria of the general definition of combs given above to account for this case: 

\begin{figure*}
\begin{subfigure}{0.5\textwidth}
\centering
\includegraphics[width=0.87\textwidth]{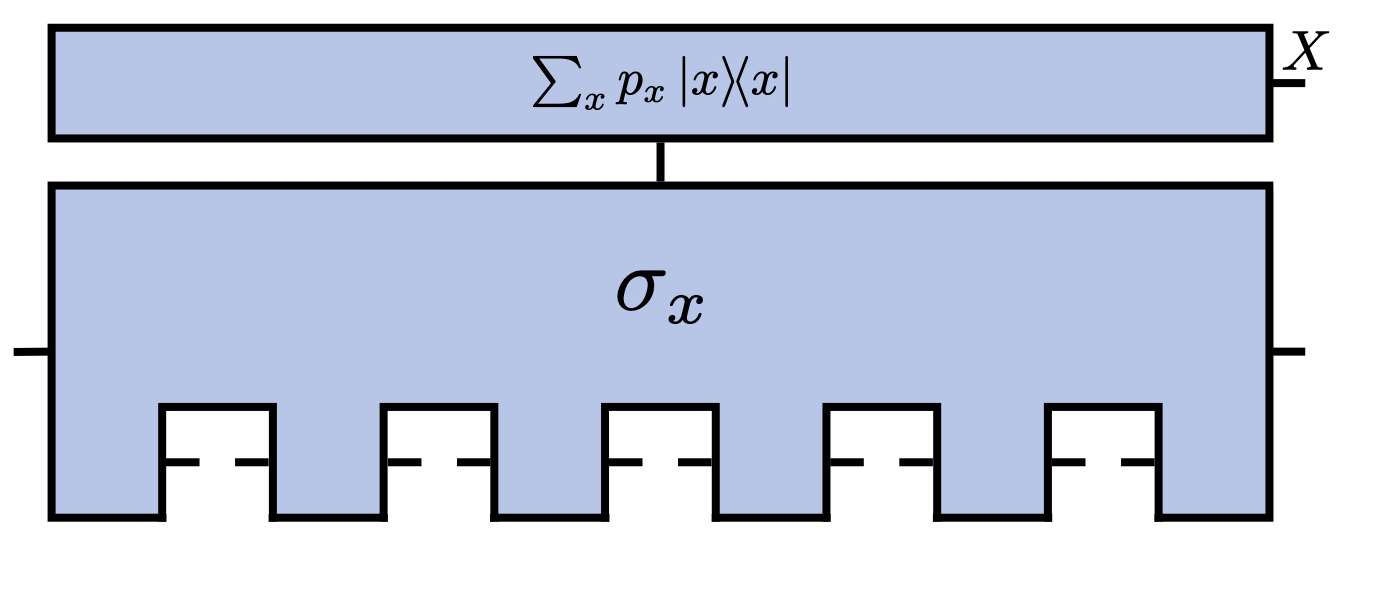}
\caption{Classical-quantum comb $D$}
\label{fig:classical_quantum_comb}
\end{subfigure}
\hfill
\begin{subfigure}{0.5\textwidth}
\centering
\includegraphics[width=\textwidth]{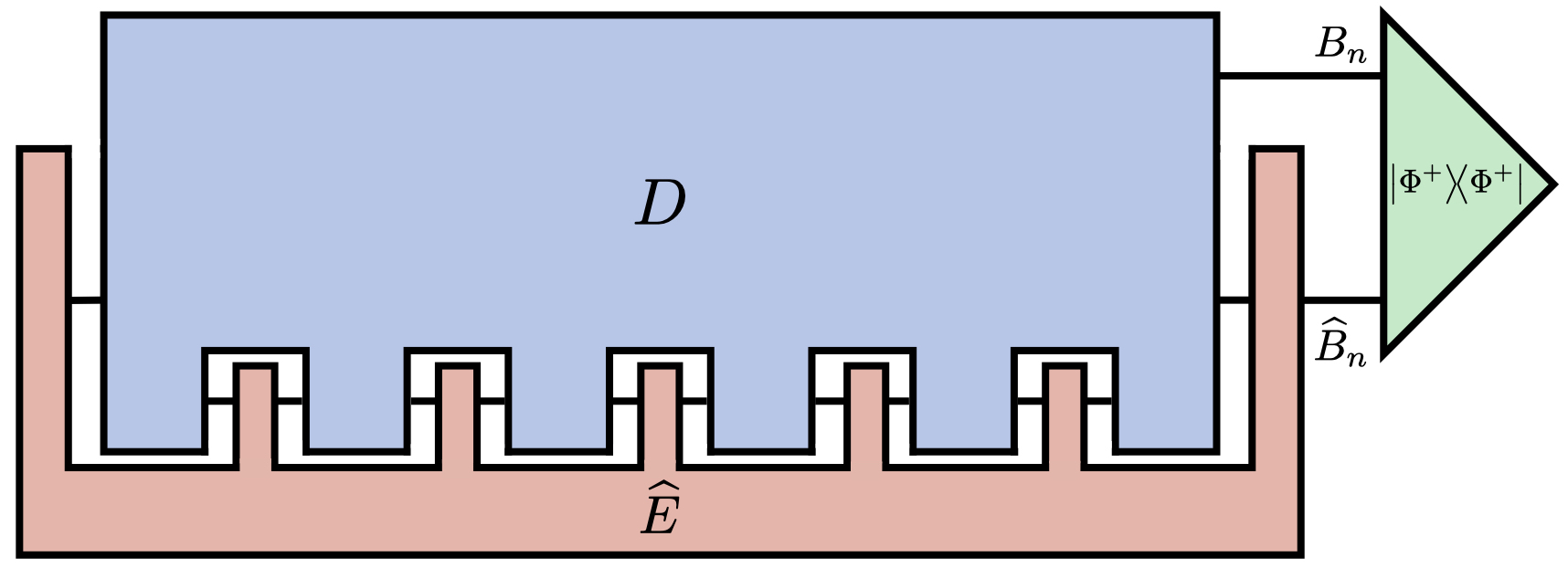}
\caption{The comb min-entropy for a general comb $D$}
\label{fig:comb_min_ent}
\end{subfigure}
\caption{In this work, we consider an extension to classical-quantum states, called classical-quantum combs, where a classical random variable $X$ indexes a set of quantum combs $\sigma_{x}$ (rather than quantum states). (a) The variable $X$ is considered to be unknown and as such is represented as an inaccessible system (upper part). The combs $\sigma_{x}$ are contingent on the value of $X$ and describe the dynamics of a system which can be interacted with (lower part) over a series of time-steps. By interacting with the accessible system, updated knowledge of $X$ can be obtained. (b) To calculate the comb min-entropy of a comb $D$, one optimises over all possible ways to interact with $D$, denoted by $\widehat{E}$ to maximise correlations with (and hence knowledge of) the output space of $D$. If $D$ is a classical-quantum comb as in (a), then the output space is the unknown classical variable $X$ and the comb min-entropy quantifies how much can be known about $X$ when using the optimal strategy.}
\label{fig:c_q_combs}
\end{figure*}

\begin{Definition} \label{def:classical_comb} A positive semi-definite operator $\sigma \in \L\left(\bigotimes_{j = 1}^{n} \H_{A_{j}} \otimes \H_{B_{j}}\right)$ is a \textbf{classical comb} if it can be written as
\begin{align}
\sigma = \sum_{\boldsymbol{a}, \boldsymbol{b}} P(\boldsymbol{b}|\boldsymbol{a})\ket{\boldsymbol{a},\boldsymbol{b}}\!\!\bra{\boldsymbol{a},\boldsymbol{b}}_{\boldsymbol{A},\boldsymbol{B}}
\end{align}

where $\{\ket{\boldsymbol{a},\boldsymbol{b}} \}$ is an orthonormal basis of $\H_{\boldsymbol{A},\boldsymbol{B}} := \bigotimes_{j = 1}^{n} \H_{A_{j}} \otimes \H_{B_{j}}$ and where $P(\boldsymbol{b}|\boldsymbol{a})$ is a conditional probability kernel for which there exists a sequence of marginalised conditional distributions $P^{(k)}(b_{1},...,b_{k}|a_{1},...,a_{k})$, $k = 0, ..., n$ that satisfy, for all $j = 2, ..., n$:
\begin{align}
\sum_{b_{j}}P^{(j)}&(b_{1},...,b_{j}|a_{1},...,a_{j}) \nonumber\\
&= P^{(j-1)}(b_{1},...,b_{j-1}|a_{1},...,a_{j-1}) \label{eq:marginalisation_independence}
\end{align}

with $P^{(n)} = P$ and $P^{(0)} := \sum_{b_{1}}P^{(1)}(b_{1}| a_{1}) = 1$.
\end{Definition}

The conditions in \Cref{eq:marginalisation_independence} are a straightforward consequence of the partial trace conditions in the general combs definition. It is worth noting that other notions of classical comb exists, such as that presented in \cite{Milz_20} which defines classicality via indistinguishability under contraction with identity channels or de-phasing channels, which is more general than diagonal combs in the case where $\H_{B_{j}} \cong \H_{A_{j+1}}$ for each $j$. Since we consider classical combs for which this restriction on the Hilbert spaces is not possible, the above definition is most appropriate for our purposes.


\section{Comb Min-Entropy} \label{sec:Q_comb_min_ent}

In this section, we define and motivate the primary tool of analysis used in the remainder of this work: the comb min-entropy \cite{chiribella2016optimal}. The comb min-entropy is an extension of the conditional min-entropy for quantum states \cite{renner2008security, konig2009operational, tomamichel2009fully, tomamichel2015quantum} which has found application in e.g., quantum cryptography \cite{renner2008security,portmann2022security}, hypothesis testing \cite{holevo1973statistical,yuen75} and quantum metrology \cite{meyer2023quantum}, due to its interpretation as measuring the distinguishability of the states in question. After defining the comb min-entropy, we consider its operational meaning with particular emphasis on the cases relevant for this work. In the following subsection, we collect a number of results that support the analysis of the specific combs considered in later sections.

To develop some intuition for the comb min-entropy, consider first a probability distribution $P(B)$ over a discrete random variable $B$. Whereas the Shannon entropy measures the average surprisal of the variable $B$, i.e. $H(B) := -\sum_{b} P(B=b)\log(P(B=b))$, the \textit{min-entropy} instead measures the minimal surprisal: $H_{\min}(B) := -\log(\max_{b} P(B=b))$. If $B$ is conditioned on another variable, say $A$, then the \textit{conditional} min-entropy measures the minimal surprisal when optimising over the conditioned variable as well: $H_{\min}(B|A) := -\log(\max_{b,a}P(B=b|A=a))$. The \textit{comb} min-entropy is a generalisation of the conditional min-entropy to the case where the negative logarithm is taken over probabilities arising from quantum operators that are conditioned on multiple inputs:

\begin{Definition}[\cite{chiribella2016optimal}] \label{def:comb_min_entropy} Let $D \in \Comb(A_{1} \rightarrow B_{1}, ..., A_{n} \rightarrow B_{n})$. The \textbf{comb min-entropy} of $B_{n}$ given $A_{1},B_{1},...,A_{n-1},B_{n-1}$ for $D$ is
\begin{align}
H_{\min}&(B_{n}|A_{1},B_{1},...,A_{n-1},B_{n-1})_{D} := -\log \left[\max_{E} D\ast E\right] \label{eq:comb_min_entropy_max}
\end{align}

where the link product is over all systems related to $D$, i.e. $A_{1},B_{1}, ..., A_{n},B_{n}$, and where the maximisation is over probabilistic combs, i.e. over the set 
\begin{align}
\{E : \exists F \text{ s.t. } E \leq F\}, 
\end{align}

where $F \in \Comb(\mathbb{C} \!\rightarrow \! A_{1}, B_{1}\! \rightarrow \! A_{2}, ..., B_{n-1}\! \rightarrow \! A_{n} \otimes B_{n})\}$.
\end{Definition}

In other words, the maximisation is over all probabilistic combs that combine with $D$ to produce a probability value (the term inside the logarithm in \Cref{eq:comb_min_entropy_max} is readily identifiable as the generalised Born rule \cite{chiribella2009theoretical}). Equivalently, we can write the term inside the logarithm in \Cref{eq:comb_min_entropy_max} \cite{konig2009operational} as 
\begin{align}
(\dim B_{n}) \max_{\widehat{E}} \Tr\left[D \ast \widehat{E} \ket{\Phi^{+}}\!\!\bra{\Phi^{+}}_{B_{n},\widehat{B}_{n}} \right] \label{eq:maximum_pairs}
\end{align}

where the maximum is now over the set $\{\widehat{E} : \widehat{E} \leq F, F \in \Comb(\mathbb{C} \rightarrow A_{1}, B_{1} \rightarrow A_{2}, ..., B_{n-1} \rightarrow A_{n} \otimes \widehat{B}_{n})\}$ with $\H_{\widehat{B}_{n}} \cong \H_{B_{n}}$. The link product is over the systems $A_{1},B_{1},...,A_{n-1},B_{n-1}$ and $\ket{\Phi^{+}}$ is the maximally entangled state given by 
\begin{align}
\ket{\Phi^{+}}_{B_{n},\widehat{B}_{n}} := \frac{1}{\sqrt{\dim B_{n}}} \sum_{i} \ket{ii}_{B_{n},\widehat{B}_{n}}
\end{align}

where $\{\ket{i}_{B_{n}}\}$ ($\{\ket{i}\}_{\widehat{B}_{n}}$) is an orthonormal basis of $\H_{B_{n}}$ ($\H_{\widehat{B}_{n}}$).

Despite the equivalence, we find that the expression in \Cref{eq:maximum_pairs} contains greater conceptual clarity, especially for our later purposes. The comb in focus, $D$, represents the multi-step evolution of the system we are interested in with final output at $B_{n}$. The optimisation over probabilistic combs $\widehat{E}$ is an optimisation over all possible strategies for interacting with $D$. The distinction between optimising over $E$ in the original definition and over $\widehat{E}$ above, is that the latter highlights that we are maximising the amount of information about the output of $D$ that is stored in a \textit{separate} system $\widehat{B}_{n}$. This quantity can be expressed, as it is above, via the fidelity with a maximally entangled state on the two spaces, which represents the maximal possible correlations produced between the two spaces by a strategy $\widehat{E}$. For the conditional min-entropy of quantum states, this is termed the maximal singlet fraction \cite{konig2009operational}. 

This perspective of the min-entropy as optimising over strategies that maximise correlations is particularly appropriate for the classical-quantum combs that we consider for the remainder of this work. If $D$ is a classical-quantum comb, then the output space represents the random variable $X$ describing the family of combs $\{\sigma_{x}\}$ underlying $D$. The comb min-entropy for such a comb, denoted $H_{\min}(X|A_{1},B_{1},...,A_{n},B_{n})_{D}$, quantifies the minimum information that remains to be learnt about $X$, when the most informative strategy has been implemented (note, this strategy may be probabilistic and the min-entropy relates to the most informative \textit{branch} of the strategy in a single shot - this contrasts to the average case scenario quantified by the Shannon or von Neumann entropies). Equivalently, it is a measure of how distinguishable the $\sigma_{x}$ are. Since $X$ represents a random variable and not a physical system, the expression in \Cref{eq:maximum_pairs} is preferred since it highlights that the implemented strategy $\widehat{E}$ is aiming to maximise correlations with, and hence knowledge of, $X$. For classical-quantum combs, the term inside the logarithm in the min-entropy definition is called the guessing probability, since it represents the maximum certainty with which $X$ can be guessed (the terminology follows that for the min-entropy for classical-quantum states \cite{konig2009operational}). \Cref{fig:comb_min_ent} depicts the min-entropy for both states and combs from the perspective of \Cref{eq:maximum_pairs}.


\subsection{Results: Classical-Quantum Combs} \label{subsec:comb_min_ent_results}

In this subsection, we collect results for general classical-quantum and classical-classical combs that support the analysis of the specific combs related to blind quantum computing and measurement-based quantum computing in later sections.

In certain circumstances, it is natural to consider a multi-step evolution of a system which occurs in a series of independent rounds. For example, one could envisage an experimental scenario where a quantum system experiences non-Markovian noise during $n$-runs of the experiment, before being reset or recalibrated prior to a second set of $n$-runs. In later sections, we consider such round structure for the $X$-indexed combs of a classical-quantum comb and so establish the notation here. For $m$ rounds, we denote the classical-quantum comb as
\begin{align}
D^{(m)} := \sum_{x \in X}P(x)\ket{x}\!\!\bra{x} \otimes \bigotimes_{j = 1}^{m}\sigma_{x}^{(j)}  \label{eq:m_round_D_comb}
\end{align}

which is an operator in $\Comb\left(A_{1}^{(1)} \!\rightarrow\! B_{1}^{(1)}, ..., A_{n}^{(m)}\!\rightarrow\! B_{n}^{(m)}, \mathbb{C}\! \rightarrow\! X\right)$, with each $\sigma_{x}^{(j)}$ in $\Comb(A_{1}^{(j)} \!\rightarrow\! B_{1}^{(j)}, ..., A_{n}^{(j)}\!\rightarrow\! B_{n}^{(j)})$. Superscripts in brackets are reserved for indicating round number. For combs with such a structure, it is possible to show that the comb min-entropy never increases from round to round (see \Cref{sec:app_min_entropy} for an explicit proof):

\begin{Lemma} \label{lem:min_entropy_non_increasing} For $D^{(m)}$ and $D^{(l)}$ of the form given in \Cref{eq:m_round_D_comb}, where $m \geq l$, 
\begin{align*}
H_{\min}(X|&A_{1}^{(1)},B_{1}^{(1)}, ..., A_{n}^{(l)}, B_{n}^{(l)})_{D^{(l)}} \\
&\geq H_{\min}(X|A_{1}^{(1)},B_{1}^{(1)}, ..., A_{n}^{(m)},B_{n}^{(m)})_{D^{(m)}}.
\end{align*}
\end{Lemma}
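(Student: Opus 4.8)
The plan is to work entirely with the guessing-probability form of the min-entropy from \Cref{eq:comb_min_entropy_max}. Since $H_{\min}(X|\cdots)_{D} = -\log P_{\text{guess}}(X|\cdots)_{D}$ and $-\log$ is monotonically decreasing, the claimed inequality on min-entropies is equivalent to the statement that the guessing probability is non-decreasing in the number of rounds, i.e.
\begin{align*}
P_{\text{guess}}(X|A_{1}^{(1)}, ..., A_{n}^{(l)})_{D^{(l)}} \leq P_{\text{guess}}(X|A_{1}^{(1)}, ..., A_{n}^{(m)})_{D^{(m)}}.
\end{align*}
The guiding intuition is that a guesser with access to $m \geq l$ rounds can always choose to ignore the surplus $m-l$ rounds and reproduce any strategy available with only $l$ rounds; extra access can only help.

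To make this precise, I would take a strategy $E^{(l)}$ achieving the maximum in $P_{\text{guess}}$ for $D^{(l)}$, i.e. an optimal dual comb in $\Comb(\mathbb{C} \rightarrow (A_1^{\inn})^{(1)}, ..., (A_n^{\out})^{(l)} \rightarrow X)$, and construct from it a \emph{feasible} strategy $E^{(m)}$ for $D^{(m)}$ attaining at least the same value. The construction runs $E^{(l)}$ on the first $l$ rounds, routes its guess of $X$ into an internal memory register $M$ (a relabelling of the isomorphic output space), then on each of the rounds $l+1, ..., m$ feeds a fixed normalised input and discards the output while carrying $M$ forward unchanged via the identity channel, and finally emits the stored guess at the last output. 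Formally, $E^{(m)}$ is the link product of $E^{(l)}$ (with $X$-output relabelled to $M$) with a fixed deterministic dual comb $\Xi$ for the trailing rounds acting as the identity on $M$. Since the link product of combs is again a comb (\Cref{def:link_product}) and identity channels, fixed preparations and deterministic strategies are combs, $E^{(m)}$ is a valid element of $\Comb(\mathbb{C} \rightarrow (A_1^{\inn})^{(1)}, ..., (A_n^{\out})^{(m)} \rightarrow X)$ and is therefore feasible.

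It then remains to evaluate $\Tr[D^{(m)}(E^{(m)})^{T}]$. Using the tensor-product structure of $D^{(m)}$ over rounds from \Cref{eq:m_round_D_comb}, write $\tau_x := \bigotimes_{j=l+1}^{m}\sigma_x^{(j)}$ for the contribution of the trailing rounds; each $\tau_x$ is itself a normalised comb. Because $E^{(m)}$ acts by $\Xi$ only on the trailing-round registers and carries $M$ forward by the identity, the contraction factorises for each fixed $x$ into the first-$l$-round part contracted with $E^{(l)}$ and the trailing part $\Tr[\tau_x \Xi^{T}]$. The key point is that contracting a normalised comb $\tau_x$ with a deterministic dual comb that feeds fixed inputs and discards all outputs yields $\Tr[\tau_x \Xi^{T}] = 1$ for every value of $x$, which is precisely the total-probability statement encoded by the trace conditions of \Cref{def:qcomb} (iterated over the time-steps). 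Hence the $x$-dependence of the trailing rounds washes out and
\begin{align*}
\Tr[D^{(m)}(E^{(m)})^{T}] = \sum_{x} P(x)\,\Tr\!\Big[\big(\ket{x}\!\!\bra{x}\otimes \textstyle\bigotimes_{j=1}^{l}\sigma_x^{(j)}\big)(E^{(l)})^{T}\Big] = \Tr[D^{(l)}(E^{(l)})^{T}] = P_{\text{guess}}(X|A_{1}^{(1)}, ..., A_{n}^{(l)})_{D^{(l)}}.
\end{align*}
Since $E^{(m)}$ is feasible, $P_{\text{guess}}(X|A_{1}^{(1)}, ..., A_{n}^{(m)})_{D^{(m)}} \geq \Tr[D^{(m)}(E^{(m)})^{T}]$, which gives the desired inequality and hence the lemma.

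I expect the main obstacle to be the bookkeeping in formalising $E^{(m)}$ as a valid dual comb with the correct causal ordering: the register $X$ must be positioned as the final output, so the guess produced after round $l$ has to be routed through memory and carried past rounds $l+1, ..., m$, all while respecting the link-product and transpose conventions so that the contraction genuinely factorises. The substantive content is the normalisation identity $\Tr[\tau_x \Xi^{T}] = 1$; this should follow directly from the defining trace conditions of a comb applied to the product comb $\tau_x$, but it deserves a careful verification to confirm it holds uniformly in $x$.
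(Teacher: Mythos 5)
Your proof is correct, but it travels the dual road to the one the paper takes. You work with the maximisation form of the min-entropy (\Cref{eq:comb_min_entropy_max}): you take an optimal guessing strategy $E^{(l)}$ for $D^{(l)}$ and extend it to a feasible strategy for $D^{(m)}$ that ignores the trailing rounds (fixed preparations, discarded outputs, guess carried through memory to the final output), so that $P_{\text{guess}}(D^{(m)}) \geq P_{\text{guess}}(D^{(l)})$. The paper instead works with the minimisation form (\Cref{eq:comb_min_entropy_min}): it takes an optimal certificate $(\lambda, \Gamma)$ with $I_{X} \otimes \lambda\Gamma \geq D^{(m)}$ and produces a feasible certificate for $D^{(l)}$ by partial-tracing $\Gamma$ over the trailing rounds and dividing by $\prod_{t=l+1}^{m}\dim(\boldsymbol{A}^{\inn})^{(t)}$; the computation that makes this work is that the identically normalised partial trace of $D^{(m)}$ equals $D^{(l)}$, because each $\sigma_{x}^{(j)}$ is a normalised comb and hence has trace equal to the product of its input dimensions. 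Note that both arguments pivot on the same fact about the trailing combs, expressed in dual languages: your identity $\Tr[\tau_{x}\Xi^{T}] = 1$ (uniform in $x$) is exactly the paper's statement that $\Tr[\sigma_{x}^{(j)}]$ is the $x$-independent constant $\dim(\boldsymbol{A}^{\inn})^{(j)}$. What each approach buys: yours is operationally transparent (``extra rounds can be ignored'') and directly exhibits a strategy witnessing the inequality, but it carries the bookkeeping burden you yourself flag — verifying that the composite $E^{(m)}$ is a dual comb with the guess register correctly routed to the final output, which implicitly uses closure of combs under link-product composition, a fact the paper never states explicitly. The paper's route avoids constructing any strategy at all: it needs only that the partial trace is a positive map (so the operator inequality survives) and that the normalised partial trace of a comb is a comb, and it stays in the same SDP formulation the paper uses for all its numerics. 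Your argument would be fully rigorous once you spell out the causal-ordering constraints satisfied by $E^{(m)}$, but as a blind reconstruction it is sound.
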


Suppose now that we consider a classical-quantum comb where each $\sigma_{x}$ is diagonal in the same basis $\{\ket{\boldsymbol{a},\boldsymbol{b}}\}$ of $\H_{\boldsymbol{A},\boldsymbol{B}}$. That is, for each $x$ we can write
\begin{align}
\sigma_{x} = \sum_{\boldsymbol{a},\boldsymbol{b}}P(\boldsymbol{b}|x,\boldsymbol{a})\ket{\boldsymbol{a},\boldsymbol{b}}\!\!\bra{\boldsymbol{a},\boldsymbol{b}}_{\boldsymbol{A},\boldsymbol{B}}.
\end{align}

where $P(\boldsymbol{b}|x,\boldsymbol{a})$ satisfies the required marginalisation conditions (\Cref{eq:marginalisation_independence}) for each $x$. Accordingly, we can write the classical-quantum comb $D$ as
\begin{align}
D = \sum_{x, \boldsymbol{a},\boldsymbol{b}} P(x)P(\boldsymbol{b}|x, \boldsymbol{a})\ket{x,\boldsymbol{a},\boldsymbol{b}}\!\!\bra{x,\boldsymbol{a},\boldsymbol{b}}_{X,\boldsymbol{A},\boldsymbol{B}}.
\end{align}

As a consequence of \Cref{prop:X_A_independence}, we know that the variable $X$ is independent of all the inputs to $D$ (the $\boldsymbol{a}$). In conjunction with the conditional version of Bayes' rule, this allows us to rewrite $D$ as follows:
\begin{align}
D = \sum_{x, \boldsymbol{a}, \boldsymbol{b}} P(x|\boldsymbol{a},\boldsymbol{b})P(\boldsymbol{b}|\boldsymbol{a})\ket{x,\boldsymbol{a},\boldsymbol{b}}\!\!\bra{x,\boldsymbol{a},\boldsymbol{b}}_{X,\boldsymbol{A},\boldsymbol{B}}.
\end{align}

With $D$ in this form, we can establish the following bounds for its comb min-entropy.

\begin{Proposition} \label{prop:bayesian_classical_comb} Let $D$ be as above. Then
\begin{align}
-\log(\eta) \leq H_{\text{min}}(X|A_{1},B_{1},...,A_{n},B_{n})_{D} \leq -\log(\gamma)
\end{align}

where
\begin{align}
\eta &:= \sum_{\boldsymbol{b}} \max_{x, \boldsymbol{a}}P(x|\boldsymbol{a},\boldsymbol{b}) P(\boldsymbol{b}|\boldsymbol{a}), \\
\gamma &:= \frac{1}{|\boldsymbol{A}|}\sum_{\substack{\boldsymbol{a},\boldsymbol{b}}} \max_{x}P(x|\boldsymbol{a},\boldsymbol{b}) P(\boldsymbol{b}|\boldsymbol{a}).
\end{align}

with $|\boldsymbol{A}| := \prod_{i=1}^{n} \dim \H_{A_{i}}$.
\end{Proposition}

The proof is given in \Cref{sec:app_min_entropy}.

If the classical-quantum $D$ happens to be both mutli-round and with all $\sigma_{x}$ diagonal in the same basis, then 
\begin{align}
D = \sum_{\substack{x,\\ \boldsymbol{a}^{(1:m)},\\\boldsymbol{b}^{(1:m)}}} P(x|\boldsymbol{a}^{(1:m)},\boldsymbol{b}^{(1:m)}) \prod_{j = 1}^{m} P(\boldsymbol{b}^{(j)}|\boldsymbol{a}^{(j)}) \ket{x\boldsymbol{a}\boldsymbol{b}}\!\!\bra{x\boldsymbol{a}\boldsymbol{b}}
\end{align}

where $\boldsymbol{a}^{(1:m)}$ in the conditional probability is shorthand for $a^{(1)},...,a^{(m)}$ (similarly for $\boldsymbol{b}^{(1:m)}$; note the superscripts are dropped inside the bra and ket, as is the subscript indicating the respective Hilbert spaces), and the above result applies with $\eta$ and $\gamma$ replaced by
\begin{align}
\eta^{(m)} &:= \sum_{\boldsymbol{b}^{1:m}} \max_{x, \boldsymbol{a}^{(1:m)}}P(x|\boldsymbol{a}^{(1:m)},\boldsymbol{b}^{(1:m)}) \prod_{j = 1}^{m} P(\boldsymbol{b}^{(j)}|\boldsymbol{a}^{(j)}), \\
\gamma^{(m)} &:= \frac{1}{|\boldsymbol{A}|^{m}}\sum_{\substack{\boldsymbol{a}^{(1:m)},\\\boldsymbol{b}^{(1:m)}}} \max_{x}P(x|\boldsymbol{a}^{(1:m)},\boldsymbol{b}^{(1:m)}) \prod_{j = 1}^{m} P(\boldsymbol{b}^{(j)}|\boldsymbol{a}^{(j)}).
\end{align}


\section{Measurement-Based and Blind Quantum Computation} \label{sec:MBQC}

The previous sections introduced quantum combs and the comb min-entropy at a certain level of generality. In subsequent sections, we will consider specific classical-quantum combs representing certain quantum information processing tasks, and their min-entropy as a way of quantifying how well an unknown parameter can be learnt. This section introduces the required background for the chosen examples treated in this work, which relate to measurement-based quantum computation (MBQC) \cite{raussendorf2001one, briegel2009measurement, raussendorf2003measurement, raussendorf2001computational, jozsa2006introduction} and blind quantum computing (BQC) \cite{fitzsimons2017private,broadbent2009universal,mantri2017flow,morimae2013blind,morimae2014verification}. 


\subsection{Measurement-Based Quantum Computation} \label{subsec:MBQC}

\begin{figure*}
\begin{subfigure}{0.5\textwidth}
\centering
\includegraphics[width=0.9\textwidth]{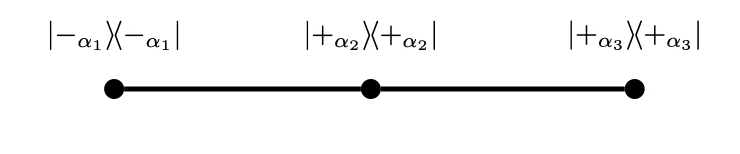}
\caption{}
\label{fig:graph_state_minus}
\end{subfigure}
\hfill
\begin{subfigure}{0.5\textwidth}
\centering
\includegraphics[width=0.9\textwidth]{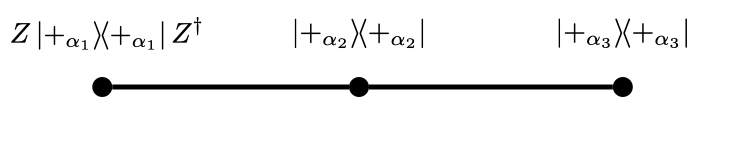}
\caption{}
\label{fig:graph_state_Z_plus_outcome}
\end{subfigure}
\\
\vspace{0.5cm}
\\
\begin{subfigure}{0.5\textwidth}
\centering
\includegraphics[width=0.9\textwidth]{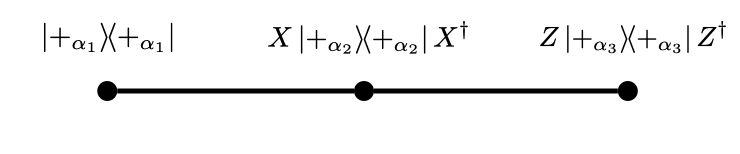}
\caption{}
\label{fig:graph_state_X_Zs}
\end{subfigure}
\hfill
\begin{subfigure}{0.5\textwidth}
\centering
\includegraphics[width=0.9\textwidth]{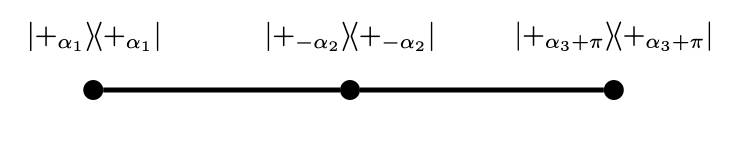}
\caption{}
\label{fig:graph_state_adjusted_angles}
\end{subfigure}
\caption{The properties of graph states allows for the correction of wrong measurement outcomes during measurement based computation. Figures (a)-(d) depict a three qubit graph state with measurements on all qubits in the $XY$-plane, and are all equivalent. Figure (a) depicts a negative measurement outcome obtaining on qubit $1$. For measurements in the $XY$-plane, this is related to a positive measurement outcome by $Z$ operators, shown in (b). Since a $Z$ on qubit $1$ is part of the stabiliser $K_{2} = Z_{1}X_{2}Z_{3}$, one can equivalently conjugate qubits $2$ and $3$ by $X$ and $Z$ respectively, as in (c). Instead of acting on the qubits, we can act on the planed measurements instead, which results in the change of measurement angle shown in (d).}
\label{fig:meas_corrections_explained}
\end{figure*}

Measurement-based quantum computation \cite{raussendorf2001one, briegel2009measurement, raussendorf2003measurement, raussendorf2001computational, jozsa2006introduction} is quantum computing paradigm in which computation is driven by single-qubit measurements on certain types of highly entangled states. Since measurements are inherently indeterministic, any deterministic computation in this framework requires an adaptive correction method. This correction method, termed gflow \cite{browne2007generalized, danos2006determinism}, plays a key role in the following sections, so much of this subsection is devoted to its explication.

MBQC is performed on entangled, multi-partite quantum states called graph states, which take their name from the connection to mathematical graphs. Let $G$ be a (simple, connected) graph on vertex set $V = \{1, ..., n\}$ and edge set $E$. One can define a graph state, denoted $\ket{G}$, as
\begin{align}
\ket{G} := \prod_{(i,j) \in E} CZ_{ij}\ket{+}^{\otimes n} \label{eq:graph_state_CZs}
\end{align}

where each vertex of the graph is assigned a qubit in the $\ket{+}$ state and where each edge $(i,j)$ in the graph is associated to a controlled Pauli-$Z$ gate, denoted $CZ_{ij}$, between the corresponding qubits. Equivalently, $\ket{G}$ can be specified as the unique stabiliser state \cite{gottesman1997stabilizer} of the set of stabilisers generated by
\begin{align}
\left\{ K_{v} := X_{v}  \bigotimes_{v' \in N_{v}^{G}} Z_{v'} \vert v \in V \right\}
\end{align}

where $N_{v}^{G}$ denotes the set of neighbours of $v$ in $G$. We will use the notation $\rho_{G}$ for $\ket{G}\!\!\bra{G}$.

A desired computation to be performed on $\ket{G}$ is specified by the projection operator corresponding to the positive outcomes of a list of single-qubit measurements. As measurements are indeterminate, the chances of obtaining only the desired outcomes are vanishingly small. However, in certain cases the symmetries of the graph state $\ket{G}$, namely the stabilisers $K_{v}$, can be leveraged to \textit{effectively} obtain the desired projection even when undesired measurement outcomes obtain. The catch is that this process only works for specific types of single-qubit measurements and requires adaptation of some measurements conditioned on the outcomes of others (introducing a time-ordering).

To understand how this works, consider the stabilisers $K_{v}$ and products thereof. Clearly, these are product operators consisting of single qubit unitaries $Z$, $X$ and their product $XZ$ (up to a sign). Now consider the states $\ket{\pm_{\alpha}}_{mp}$ where the subscript $mp$ stands for ``measurement plane'' which are the $XY$-, $XZ$- and $YZ$-planes of the Bloch sphere (i.e., $mp \in \{XY, XZ, YZ\}$) and where $\alpha$ indicates the angle of rotation from one of the axes of the corresponding plane ($\alpha \in [0, 2\pi)$). These states are defined as
\begin{align}
\ket{\pm_{\alpha}}_{XY} &:= \frac{1}{\sqrt{2}}(\ket{0} \pm e^{-i\alpha}\ket{1}), \\
\ket{\pm_{\alpha}}_{XZ} &:= \frac{1}{2}((1 \pm e^{-i\alpha})\ket{0} + i(1 \mp e^{-i\alpha})\ket{1}), \\
\ket{\pm_{\alpha}}_{YZ} &:= \frac{1}{2}((1 \pm e^{-i\alpha})\ket{0} + (1 \mp e^{-i\alpha})\ket{1}). \\
\end{align}

These states satisfy the following relations involving the single qubit unitaries identified above:
\begin{align}
\begin{split}
\ket{+_{\alpha}}_{XY} &= Z\ket{-_{\alpha}}_{XY},  \label{eq:meas_plane_syms}\\
\ket{+_{\alpha}}_{XZ} &= iXZ\ket{-_{\alpha}}_{XZ}, \\
\ket{+_{\alpha}}_{YZ} &= X\ket{-_{\alpha}}_{YZ}.
\end{split}
\end{align}

As a consequence, the projections $\ket{-_{\alpha}}\!\!\bra{-_{\alpha}}_{mp}$ for any measurement plane $mp$ is related to $\ket{+_{\alpha}}\!\!\bra{+_{\alpha}}_{mp}$ via conjugation by a ``piece'' of a stabiliser: a piece corresponding to a $Z$ operator for measurements in the $XY$-plane; a piece corresponding to the product of $X$ and $Z$, which arises from a product of stabilisers, for measurements in the $XZ$-plane (note the $i$ factor drops out in the conjugation); and a piece corresponding to an $X$ operator for measurements in the $YZ$-plane. The undesired outcome can thus be effectively corrected by ``completing'' the remainder of the stabiliser (provided one exists), as we now explain with an example. 

Consider a linear graph state on four qubits, such as if the three-qubit graph states depicted in \Cref{fig:meas_corrections_explained} were extended by one qubit to the right. Suppose that computational measurements are performed in the $XY$-plane on qubits $1$, $2$ and $3$ using measurement angles $\alpha_{1},\alpha_{2},\alpha_{3}$ respectively The desired output state on the final qubit is given by
\begin{align}
\rho_{\text{output}} = \Tr_{1,2,3}\left[\left(\bigotimes_{i = 1}^{3}\ket{+_{\alpha_{i}}}\!\!\bra{+_{\alpha_{i}}} \right)\rho_{G}\right].
\end{align}

where the $XY$ subscripts on the measurements have been dropped for simplicity. If instead, the first measurement produces a negative outcome, we have the following situation:
\begin{align}
\Tr_{1,2,3}&\left[\left(\ket{-_{\alpha_{1}}}\!\!\bra{-_{\alpha_{1}}} \otimes \bigotimes_{i=2}^{3} \ket{+_{\alpha_{i}}}\!\!\bra{+_{\alpha_{i}}} \right) \rho_{G}\right] \\
&= \Tr_{1,2,3}\left[\left(\ket{-_{\alpha_{1}}}\!\!\bra{-_{\alpha_{1}}} \otimes \bigotimes_{i=2}^{3} \ket{+_{\alpha_{i}}}\!\!\bra{+_{\alpha_{i}}} \right) K_{2} \rho_{G}K_{2}^{\dagger}\right] \\
&= \Tr_{1,2,3}\left[\left(\bigotimes_{i = 1}^{3}\ket{+_{\alpha_{i}}}\!\!\bra{+_{\alpha_{i}}}\right) (X_{2}Z_{3})\rho_{G}(X_{2}Z_{3})^{\dagger}\right].
\end{align}

We see that, via the stabiliser $K_{2} = Z_{1}X_{2}Z_{3}$, the desired outcome is recovered by applying $X$ and $Z$ unitaries on other qubits in the graph state. In fact, these unitaries can be absorbed into a change of angle for the corresponding qubits via the following relations:
\begin{align}
\begin{split} 
X^{\dagger}\ket{\pm_{\alpha}}\!\!\bra{\pm_{\alpha}}_{XY}X &\equiv \ket{\pm_{-\alpha \bmod 2\pi}}\!\!\bra{\pm_{-\alpha \bmod 2\pi}}_{XY}; \label{eq:XY_angle_adapts} \\
Z^{\dagger}\ket{\pm_{\alpha}}\!\!\bra{\pm_{\alpha}}_{XY}Z &\equiv \ket{\pm_{\alpha + \pi \bmod 2\pi}}\!\!\bra{\pm_{\alpha + \pi \bmod 2\pi}}_{XY}.
\end{split}
\end{align}

In order to make use of these relations in the example above, the measurements on qubits $2$ and $3$ must occur \textit{after} the outcome of measurement $1$ is known. The need to correct undesired outcomes and the above method for doing so induces a time-ordering in the computation. The above relations play a key role in the blind quantum computing protocol discussed below. Similar relations exist for measurements in the $XZ$- and $YZ$-planes.

This correction method relies on the existence of an appropriate stabiliser for each measurement and a compatible order of measurements so that all undesired outcomes can be accounted for. Such a set of stabilisers and order of measurements does not exist in all cases and it turns out that existence can be determined solely from the mathematical properties of the graph underlying the graph state. The literature on gflow \cite{browne2007generalized, danos2006determinism}, which stands for the `generalised flow' of corrections through the graph state, develops a complete methodology to specify when such a correction method exists and how it should proceed. Intuitively, gflow specifies \textit{how} a measurement outcome can be corrected (i.e. via which stabiliser), \textit{where} it can be corrected (i.e. which qubits receive an update), and \textit{when} a qubit should be measured (i.e. which other measurements must precede it). The formal definition is the following.

\begin{Definition}[\cite{browne2007generalized}] \label{def:gflow} Let $G = (V, E)$ be a graph, $I$ and $O$ be input and output subsets of $V$ respectively, and $\omega: O^{c} \rightarrow \{XY, XZ, YZ\}$ be a map assigning measurement planes to qubits (the superscript $c$ denotes set complement). The tuple $(G,I,O, \omega)$ has \textbf{gflow} if there exists a map $g: O^{c} \rightarrow \mathcal{P}(I^{c})$, where $\mathcal{P}$ denotes the powerset, and a partial order over $V$ such that the following hold for all $v \in O^{c}$:
\begin{enumerate}
	\item if $v' \in g(v)$ and $v' \neq v$, then $v < v'$;
	\item if $v' \in \Odd(g(v))$ and $v' \neq v$, then $v < v'$;
	\item if $\omega(v) = XY$, then $v \notin g(v)$ and $v \in \Odd(g(v))$;
	\item if $\omega(v) = XZ$, then $v \in g(v)$ and $v \in \Odd(g(v))$;
	\item if $\omega(v) = YZ$, then $v \in g(v)$ and $v \notin \Odd(g(v))$;
\end{enumerate}

where $\Odd(K) := \{\tilde{v} \in V : |N_{\tilde{v}}^{G} \cap K| = 1 \bmod 2 \}$ for any $K \subseteq V$. 
\end{Definition}

It is known that the existence of gflow is both necessary and sufficient for deterministic MBQC \cite[Theorems 2 and 3,][]{browne2007generalized}. Furthermore, polynomial time algorithms exist for determining whether a given tuple $(G,I,O, \omega)$ supports gflow \cite{mhalla2008finding, de2008finding}. Many distinct gflows for a given $(G,I,O, \omega)$ can exist, and characterising or counting all gflows for a given graph (as the input and output sets vary) in general remains an open problem.

We can understand this definition in light of the discussion preceding it as follows. The map $g$ assigns a stabiliser to each qubit being measured (the `\textit{how}' of the correction method): $g(v)$ is a subset of $V$ and identifies the stabiliser
\begin{align*}
K_{g(v)} := \prod_{w \in g(v)} K_{w}.
\end{align*}

Every element in the set $g(v)$ receives an $X$-correction from the above product and every element of $\Odd(g(v))$ receives a $Z$-correction; the vertices in their intersection receive both. Together $g(\cdot)$ and $\Odd(\cdot)$ specify the `\textit{where}' of the correction method. The partial order and the first two conditions enforce that every correction conditioned on the measurement at $v$ happens in the future of that measurement (the `\textit{when}' of the method). The remaining three conditions enforce that the component of the product that acts on the qubit $v$ is precisely the required symmetry associated to the measurement plane $\omega(v)$, as in \Cref{eq:meas_plane_syms}.

Derived from the definition of gflow, it is useful to define, for each $v \in V$, the set of vertices whose corrections induce an $X$-operation on $v$ and the set whose correction induce a $Z$-operation:
\begin{align}
\begin{split}
\mathcal{X}_{v} &:= \{v' \in V : v \in g(v')\setminus\{v\}\}; \label{eq:X_Z_correction_sets} \\
\mathcal{Z}_{v} &:= \{v' \in V: v \in \Odd(g(v'))\setminus\{v\} \}.
\end{split}
\end{align}

We allow for $\mathcal{X}_{v}$ or $\mathcal{Z}_{v}$ to be empty (such as when $v \in I$ for example).


\subsection{Blind Quantum Computation} \label{subsec:BQC}

Blind quantum computing (BQC) refers to an array of cryptographic quantum computational protocols (see e.g. \cite{broadbent2009universal,mantri2017flow,morimae2013blind,morimae2014verification} and the review \cite{fitzsimons2017private}), many of which build upon aspects derived from MBQC. In very broad terms, a BQC protocol consists of a client, who has limited computational power, interacting with a server (multi-party variants also exist), who has quantum computational capabilities, in order to carry out a desired computation in such a way that the latter is ``blind'' to the details. A range of theoretical tools for modelling quantum cryptographic protocols and analysing their security exist (see e.g. \cite{portmann2017causal,maurer2005abstract} and the review \cite{portmann2022security}). We provide an outline of some aspects of cryptographic security analysis based on the min-entropy at the start of the next section.

In this work, we consider a specific BQC protocol due to Mantri et al. \cite{mantri2017flow} and the remainder of this subsection is devoted to its description. This protocol is distinct from many other BQC proposals since it considers purely classical communication between client and server. At its core, the protocol leverages properties of gflow and MBQC outlined above, including the non-uniqueness of gflows for a given graph state and the ability to effectively implement a unitary by adapting measurement angles (recall \Cref{eq:XY_angle_adapts}). The protocol also uses standard cryptographic primitives such as one-time pads, which are a string of single-use bits drawn at random to obscure the true value of the secret data.

The protocol parameters consist of a choice of graph $G$, a total order on the vertices of $G$, and a discrete set of angles $\mathcal{A}$ which satisfies the following property:
\begin{align}
\mathcal{A} = \{(-1)^{x}\alpha + z\pi \bmod 2\pi: \alpha \in \mathcal{A}; x, z \in \mathbb{Z}_{2}\}. \label{eq:angle_set}
\end{align}

All measurements are made in the $XY$-plane (which is no restriction on the universality of the resulting computations - see \cite{mantri2017universality}) and so the above property enforces that $\mathcal{A}$ is closed under the angle transformations given in \Cref{eq:XY_angle_adapts}. The specific graph, total order and angle set are agreed upon by both the client and server prior to the commencement of a specific computation. In secret, the client also chooses their desired computation, that is, a list of measurement angles $\boldsymbol{\alpha} \in \mathcal{A}^{n}$ (where $n$ is the number of vertices in $G$) and designated input and output sets $I$ and $O$, a bit-string one-time pad $\boldsymbol{r} \in \mathbb{Z}_{2}^{n}$ uniformly at random, and a gflow compatible with $(G, I, O)$ and the total order.

One round of computation proceeds as follows:
\begin{enumerate}
	\item The server initialises the graph state $\rho_{G}$.
	\item For $i = 1, ..., n$ according to the total order, the following sequence is repeated:
	\begin{enumerate}
		\item The user reports a measurement angle $\alpha'_{i}$ to the server, where
		\begin{align}
		\alpha'_{i} := (-1)^{\bigoplus_{j \in \mathcal{X}_{i}} c_{j}}\alpha_{i} + \left(r_{i} \oplus \bigoplus_{j \in \mathcal{Z}_{i}} c_{j}\right)\pi \bmod 2\pi \label{eq:adapt_meas_angles}
		\end{align}

		with $c_{j}$ denoting the measurement outcome for qubit $j < i$ recorded by the user based on the outcome reported by the server and with $\mathcal{X}_{i}$ and $\mathcal{Z}_{i}$ denoting the corrections sets defined by the gflow (recall \Cref{eq:X_Z_correction_sets}).
		\item The server measures $\mathcal{M}_{\alpha'_{i}}$ and reports $c'_{i} = 0$ for a positive outcome and $c'_{i} = 1$ for a negative outcome.
		\item The user records $c_{i} = c'_{i} \oplus r_{i}$.
	\end{enumerate}
	\item The outcomes pertaining to the output qubits (which are known only to the user) are processed to obtain the results of the computation.
\end{enumerate}

There are two things worth noting about the use of the bit-strings $\boldsymbol{r}$. Firstly, their utility for obscuring angles is a direct consequence of the $Z$-relation given in \Cref{eq:XY_angle_adapts} (computationally, a positive measurement outcome for angle $\alpha$ and negative measurement outcome for $\alpha + \pi$ are equivalent). Secondly, the presence of $r_{i}$ in both the equation for $\alpha'_{i}$ as well as for any $\alpha'_{k}$ for which $i \in \mathcal{X}_{k}$ or $i \in \mathcal{Z}_{k}$ places constraints on the set of possible reported angles $\boldsymbol{\alpha'}$ for a given choice of true angles $\boldsymbol{\alpha}$. This latter fact indicates that some amount of information is leaked during the protocol.

The above protocol is known to be correct \cite[Theorem 1,][]{mantri2017flow}: if both the client and server behave accordingly, the correct computation is performed. However, it also known that this protocol is not verifiable: the client has no way of knowing whether the server actually prepares a graph state, measures according to the reported angles, and communicates the actual measurement outcomes. It is thus the security of the protocol that is of key concern, which we take up in the coming section.

To conclude, we remark on some minor differences between the analysis in \cite{mantri2017flow} and the one below. Mantri et al. place a further condition on the definition of gflow, largely for the purpose of simplifying a counting argument (see \cite[Theorem 3,][]{mantri2017flow}), which effectively singles out one gflow for every pair $(I,O)$ given $G$ and they thus identify a choice of computation with a choice of angles and choice of gflow. Here, we work with the unconstrained definition of gflow and so identify a choice of computation as a choice of $\boldsymbol{\alpha}$ and $(I,O)$ (again for fixed $G$), for which any of the compatible gflows can be chosen. Moreover, since the protocol is entirely classical, the client can only prepare an input state on the qubits in $I$ via measurement-based state preparation, which is thus indistinguishable from the part of the measurement sequence implementing the unitary. As such, computations are specified here purely by angles $\boldsymbol{\alpha}$ and a choice of output set $O$ for which $(G,I,O)$ supports gflow for some $I$.


\section{Analysing Classically Driven Blind Quantum Computation} \label{sec:CBQC}

In quantum cryptography, it is typical to analyse the security of a protocol in terms of information-theoretic quantities (see \cite{portmann2022security} for an accessible review of security in quantum cryptography). In this section, we provide a security analysis of the blind quantum computing protocol outlined above via the use of the comb min-entropy. This necessitates representing the protocol as a comb, which we establish in the next subsection. Prior to doing so, we outline the use of the conditional min-entropy for cryptographic protocols involving quantum states as well as review the existing security analysis of the protocol given in \cite{mantri2017flow} which is based on the Shannon entropy. We also briefly comment on some of the benefits of using the min-entropy as opposed to the Shannon entropy for security analysis.

Cryptographic protocols are usually analysed for both their correctness and their security. The two terms refer to how well the protocol achieves its intended aim in the absence and presence of malicious intent respectively. For example, in quantum key distribution (QKD) protocols \cite{bennett2014quantum,ekert_91}, where the aim is to establish a secure key between two parties, correctness refers to whether both parties receive the same key in the absence of an eavesdropper, and security refers to a measure of how much information an eavesdropper can learn about the key by intervening on the communication between the parties. Since the BQC protocol under consideration is known to be correct, we focus on security in the following.

The most commonly used contemporary notion of security is based on distinguishability, that is, the ability to discriminate between the ``true'' value of the secret information and the other possible values. The distinguishability of two quantum states is related to their trace distance \cite{nielsen_chuang_2010,watrous2018theory} and accordingly this distance is often used in statements of security \cite{renner2008security,ben2005universal,renner2005universally}. For more than two states, distinguishability is given by the conditional state min-entropy \cite{konig2009operational,holevo1973statistical,yuen75,watrous2018theory}, which provides some motivation for our use of the comb min-entropy below.

To provide a concrete example, consider a QKD scenario. Let $X$ denote the variable representing the bit-strings received by one party and $E$ represent the information obtained by the eavesdropper (which could be quantum in nature). The security of $X$ in the presence of an eavesdropper with access to $E$ is given as a lower bound to $H_{\min}(X|E)$. In this setting, $H_{\min}(X|E)$ can essentially be thought of as the minimum number of bits (deterministically) obtainable from $X$ that are uncorrelated to $E$: the greater this number, the less information the eavesdropper has about $X$. Note that in many proofs of security, the smooth version of $H_{\min}(X|E)$ is used instead to allow for some degree of error in the protocol (see e.g., \cite{renner2008security} for details).

The security (blindness) analysis of the BQC protocol provided by Mantri et al., \cite{mantri2017flow} also establishes bounds on the amount of information shared between the secret information belonging to the client and the information communicated to the server. These bounds, provided below, are based on mutual information and the conditional Shannon entropy rather than the conditional min-entropy. Prior to stating the result, we establish some notation based on the description of the protocol in \Cref{subsec:BQC}. Let $\boldsymbol{A}$ denote the random variable for the angles, which takes values in $\mathcal{A}^{n}$ (recall \Cref{eq:angle_set}), and let $\boldsymbol{F}$ denote the random variable which takes values in the set of restricted gflows (recall the comments made at the end of the previous section). The random variables for the reported angles and measurement outcomes are denoted $\boldsymbol{A'}$ and $\boldsymbol{C'}$ respectively. With this notation, the blindness theorem is:

\begin{Theorem}[Theorem 2, \cite{mantri2017flow}] \label{thm:Mantri} In a single instance of the protocol, the mutual information between the client's secret input $\{\boldsymbol{\alpha}, \boldsymbol{f}\}$ and the information received by the server is bounded by
\begin{align}
I(\boldsymbol{C'}, \boldsymbol{A'}; \boldsymbol{A}, \boldsymbol{F}) \leq H(\boldsymbol{A'})
\end{align}

where $I(\cdot;\cdot)$ denotes the mutual information and $H(\cdot)$ denotes the Shannon entropy.
\end{Theorem}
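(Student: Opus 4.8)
The plan is to derive the bound from the chain rule for mutual information, isolating the single place where a quantum property of graph states is needed and handling everything else with generic information-theoretic inequalities. First I would apply the chain rule, choosing to peel off the reported outcomes $\boldsymbol{C'}$ before the reported angles $\boldsymbol{A'}$:
\begin{align}
I(\boldsymbol{C'}, \boldsymbol{A'}; \boldsymbol{A}, \boldsymbol{F}) = I(\boldsymbol{C'}; \boldsymbol{A}, \boldsymbol{F}) + I(\boldsymbol{A'}; \boldsymbol{A}, \boldsymbol{F} \mid \boldsymbol{C'}).
\end{align}
This particular ordering is what makes the argument go through, since it renders the first summand the one that can be shown to vanish.

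The key step is to prove $I(\boldsymbol{C'}; \boldsymbol{A}, \boldsymbol{F}) = 0$, i.e. that the string of reported outcomes, taken on its own, carries no information about the client's secret. I would establish this using the standard property that any single-qubit measurement of a graph state in the $XY$-, $XZ$- or $YZ$-plane produces a uniformly random outcome, and that this persists throughout the sequential, adaptive measurement sequence of the protocol: conditioned on all previously reported outcomes and on whatever angle is actually measured at step $i$ (regardless of how that angle was adaptively chosen via \Cref{eq:adapt_meas_angles}), the outcome $c'_i$ equals $0$ or $1$ with probability exactly $1/2$. Proceeding by induction along the total order, this forces $\boldsymbol{C'}$ to be uniform on $\{0,1\}^{n}$ for \emph{every} value of the secret $(\boldsymbol{\alpha}, \boldsymbol{f})$, so that $H(\boldsymbol{C'}) = n = H(\boldsymbol{C'} \mid \boldsymbol{A}, \boldsymbol{F})$ and hence the mutual information vanishes.

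It then remains only to bound the second summand, which requires no structural input at all: conditional mutual information never exceeds the relevant conditional entropy, and conditioning can only decrease entropy, so
\begin{align}
I(\boldsymbol{A'}; \boldsymbol{A}, \boldsymbol{F} \mid \boldsymbol{C'}) \leq H(\boldsymbol{A'} \mid \boldsymbol{C'}) \leq H(\boldsymbol{A'}).
\end{align}
Combining this with the vanishing of the first term immediately yields $I(\boldsymbol{C'}, \boldsymbol{A'}; \boldsymbol{A}, \boldsymbol{F}) \leq H(\boldsymbol{A'})$, as claimed.

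I expect the main obstacle to lie entirely in the second step: rigorously verifying that the adaptivity of the protocol does not spoil the marginal uniformity of $\boldsymbol{C'}$. One must confirm that although each reported angle $\alpha'_{i}$ depends on the secret, on the one-time pad $\boldsymbol{r}$, and on the earlier reported outcomes, the conditional distribution of the physical measurement outcome at each step is exactly $1/2$ independent of the angle measured — this is precisely where the planar-measurement symmetry of the graph state (and the assumption that the server measures honestly in defining the protocol statistics) is indispensable. Once this uniformity is in hand, the remaining manipulations are purely formal and the bound follows without further effort.
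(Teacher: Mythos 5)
There is a genuine gap in your key step, the claim that $I(\boldsymbol{C'};\boldsymbol{A},\boldsymbol{F})=0$ because each physical outcome $c'_i$ is uniform \emph{conditioned on the angle actually measured and the previous outcomes}. That per-step claim is true only for the non-output qubits: for $i \in O^{c}$ the gflow provides a stabiliser $K_{g(i)}$ acting trivially on all earlier-measured qubits and anticommuting with the measured observable at $i$, which forces $P(c'_i\mid c'_{<i},\boldsymbol{\alpha},\boldsymbol{f},\boldsymbol{r})=1/2$ for any angle in the assigned plane. For the output qubits there is no such stabiliser supported on the future, and their outcomes are exactly what encodes the result of the computation; conditioned on the measured angle, the pad and the prior outcomes they are generically biased and can even be deterministic. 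Concretely, for the two-vertex graph state $\CZ\ket{++}$ with qubit $2$ the output, measuring qubit $1$ at $\alpha'_1=\pi/2$ and then qubit $2$ at $\alpha'_2=\pi/2$ gives a qubit-$2$ outcome that is completely determined by the qubit-$1$ outcome, not uniform. So the induction you propose breaks precisely at the output layer, and since the protocol of \cite{mantri2017flow} measures all $n$ qubits (the output-qubit outcomes carry the computation), the lemma as you stated it is false.

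The conclusion $I(\boldsymbol{C'};\boldsymbol{A},\boldsymbol{F})=0$ can nevertheless be rescued, but by a classical (cryptographic) mechanism rather than the quantum one you invoke: for $i\in O$ the pad bit $r_i$ enters \emph{only} the reported angle $\alpha'_i$ via \Cref{eq:adapt_meas_angles} (output qubits never appear in $\mathcal{X}_k\cup\mathcal{Z}_k$ for any $k$), and measuring at $\alpha'_i+\pi$ is the same basis as measuring at $\alpha'_i$ with the outcome labels swapped; averaging over the uniform, independent $r_i$ therefore uniformises $c'_i$ without affecting anything else. Combining this with the stabiliser argument on $O^{c}$ gives $P(\boldsymbol{c'}\mid\boldsymbol{\alpha},\boldsymbol{f})=2^{-n}$, after which your chain-rule decomposition and the generic bound $I(\boldsymbol{A'};\boldsymbol{A},\boldsymbol{F}\mid\boldsymbol{C'})\le H(\boldsymbol{A'})$ do finish the proof. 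Note that this repaired route is genuinely different from the argument the present paper relies on from \cite{mantri2017flow}, which never establishes exact uniformity of $\boldsymbol{C'}$: it instead writes $I(\boldsymbol{C'},\boldsymbol{A'};\boldsymbol{A},\boldsymbol{F})=H(\boldsymbol{A'},\boldsymbol{C'})-H(\boldsymbol{A'},\boldsymbol{C'}\mid\boldsymbol{A},\boldsymbol{F})$, bounds the first term by $H(\boldsymbol{A'})+n$ via subadditivity, and lower-bounds the second by $n$ (their Lemma $4$) using the $n$ independent uniform pad bits. Your approach, once the output-qubit case is fixed, isolates more sharply \emph{where} the protocol's blindness comes from, but as written the central uniformity lemma is attributed to a property the graph state does not have on the output register.
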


Using part of the proof of the above theorem (\cite[Lemma 4,][]{mantri2017flow}) and the properties of the mutual information and Shannon entropy, it is possible to lower bound the conditional Shannon entropy of the secret information $\boldsymbol{A}, \boldsymbol{F}$ given access to the shared information $\boldsymbol{C'}, \boldsymbol{A'}$:
\begin{align}
H(\boldsymbol{A}, \boldsymbol{F}| \boldsymbol{C'}, \boldsymbol{A'}) \geq \log_{2} |\boldsymbol{F}| > 0. \label{eq:Shannon_entropy_inequality}
\end{align}

There are a number of reasons why statements involving the min-entropy provide stronger statements of security than those using the Shannon entropy. As alluded to above, the min-entropy is more commonly used to analyse protocols such as QKD. This is partly due to its better behaviour when making statements about the security of protocols composed of a number of parts; there are known examples where Shannon entropic quantities indicate insecurity of a protocol built up from perfectly secure ones (see e.g., \cite[III.C.1,][]{portmann2022security} and references therein). Furthermore, the conditional min-entropy is always a lower bound for the corresponding conditional Shannon entropy, so bounding the former above some non-zero value also bounds the latter (the converse does not hold in general). Finally, Shannon entropies are effectively asymptotic quantities whereas the min-entropy is a truly single-shot quantity and thus better suited to statements regarding the security of a single instance of the BQC protocol considered here. See e.g., \cite{konig2009operational} and \cite{tomamichel2015quantum} for further information on the distinctions between the min-entropy and Shannon entropy.

To the best of our knowledge, the security analysis presented here is the first time the comb min-entropy has been used to analyse a BQC protocol.


\subsection{The Classical BQC Protocol as a Comb} \label{subsec:D_client}

In order to be able to apply the comb min-entropy to the protocol under investigation, we need to write down the corresponding comb. We begin by fixing a graph $G$ on $n$ vertices and a set of allowed angles $\mathcal{A}$ satisfying \Cref{eq:angle_set}. Let $1 < 2 < ... < n$ denote a choice of total order on the vertices. Since all communication in the protocol is classical, we represent the reported angles and measurement outcomes as basis states in corresponding Hilbert spaces: $\ket{\alpha'_{i}}\!\!\bra{\alpha'_{i}} \in \H_{A'_{i}}$ for the reported angle at step $i$ and $\ket{c'_{i}}\!\!\bra{c'_{i}} \in \H_{C'_{i}}$ for the reported measurement outcome, where $\dim \H_{A'_{i}} = |\mathcal{A}|$ and $\dim \H_{C'_{i}} = 2$.

Prior to the start of the protocol, the client secretly selects their desired computation. This consists of a choice of output vertices $O$ in $G$ and the ``true'' angles for the computation $\boldsymbol{\alpha} \in \mathcal{A}^{n}$. Since $G$ and $\mathcal{A}$ are fixed and discrete, so are $\mathcal{O}$ and $\mathcal{A}^{n}$, the set of all possible choices of output sets and all possible angle sequences respectively. We denote by $\boldsymbol{O}$ and $\boldsymbol{A}$ the random variables which jointly represent the choice of computation, which together make up the classical variable part of the classical-quantum combs describing the protocol (i.e. the variable labelled $X$ in the general treatment of classical-quantum combs). The prior knowledge that the server has about the choice of computation is denoted by $P(\boldsymbol{\alpha},O)$ and the full classical-quantum comb representing the choice of computation and ensuing protocol is denoted
\begin{align}
D_{\client} = \sum_{\boldsymbol{\alpha}, O} P(\boldsymbol{\alpha}, O)\ket{\boldsymbol{\alpha}, O}\!\!\bra{\boldsymbol{\alpha}, O} \otimes \sigma_{\boldsymbol{\alpha}, O}.
\end{align}

Each $\sigma_{\boldsymbol{\alpha}, O}$ represents the protocol for the computation $\boldsymbol{\alpha}, O$ and the remainder of this subsection constructs these combs from the details of the protocol. For later reference, note that each $\sigma_{\boldsymbol{\alpha}, O} \in \Comb(\mathbb{C} \rightarrow A'_{1}, C'_{1} \rightarrow A'_{2}, ..., C'_{n} \rightarrow \mathbb{C})$ and $D_{\client} \in \Comb(\mathbb{C} \rightarrow A'_{1}, C'_{1} \rightarrow A'_{2}, ..., C'_{n} \rightarrow \mathbb{C}, \mathbb{C} \rightarrow \boldsymbol{A} \times \boldsymbol{O})$. We use the notation $\H_{\boldsymbol{A'},\boldsymbol{C'}} := \bigotimes_{i=1}^{n} \H_{A'_{i}} \otimes \H_{C'_{i}}$ for the composite Hilbert space related to the $\sigma_{\boldsymbol{\alpha}, O}$.

Along with $\boldsymbol{\alpha}$ and $O$, the client secretly chooses a bit-string $\boldsymbol{r} \in \mathbb{Z}_{2}^{n}$ and gflow $g$. Despite being unknown to the server, this information does not feature in the classical variable in $D_{\client}$ since we are only concerned with how well the protocol protects information regarding the computation. The gflow $g$ must be compatible with the graph, total order and choice of output set, all of which we denote by the shorthand $g \sim O$. The comb corresponding to a fixed choice of $\boldsymbol{\alpha}, \boldsymbol{r}$ and $g$ is
\begin{align}
\sigma_{\BQC}^{\boldsymbol{\alpha}, \boldsymbol{r}, g} := \sum_{\substack{\boldsymbol{\alpha'} \in \mathcal{A}^{n}, \\ \boldsymbol{c'} \in \mathbb{Z}_{2}^{n}}} P(\boldsymbol{\alpha'}| \boldsymbol{c'}, \boldsymbol{\alpha}, \boldsymbol{r}, g)\ket{\boldsymbol{\alpha'}\boldsymbol{c'}}\!\!\bra{\boldsymbol{\alpha'}\boldsymbol{c'}}_{\boldsymbol{A'},\boldsymbol{C'}}
\end{align}

and is an element of the same comb set as stipulated for $\sigma_{\boldsymbol{\alpha}, O}$ above. The distribution $P(\boldsymbol{\alpha'}| \boldsymbol{c'}, \boldsymbol{\alpha}, \boldsymbol{r}, g)$ is deterministic and simply encodes the angle adaptations:
\begin{align}
P(\boldsymbol{\alpha'}| \boldsymbol{c'}, \boldsymbol{\alpha}, \boldsymbol{r}, g) = \begin{cases} 1, \text{ if \Cref{eq:adapt_meas_angles} holds for all $i$} \\ 0, \text{ otherwise } \end{cases}.
\end{align}

Note that \Cref{eq:adapt_meas_angles} contains the notation $c_{i}$, but since $c_{i} := c'_{i} \oplus r_{i}$, we use only the $\boldsymbol{c'}$ and $\boldsymbol{r}$ notation instead.

Since any choice of $g \sim O$ and $\boldsymbol{r}$ produce the desired result once $\boldsymbol{\alpha}$ and $O$ are fixed, we define $\sigma_{\boldsymbol{\alpha}, O}$ by averaging over all possible choices of compatible gflow $g$ and bit-string $\boldsymbol{r}$:
\begin{align}
\sigma_{\boldsymbol{\alpha}, O} := \sum_{g \sim O, \boldsymbol{r}} P(g | O)P(\boldsymbol{r}) \sigma_{\BQC}^{\boldsymbol{\alpha}, \boldsymbol{r}, g} \label{eq:BQC_comb_not_all_comps}
\end{align}

The distribution $P(g|O)$ and $P(\boldsymbol{r})$ are the probabilities of choosing $g$ and $\boldsymbol{r}$ respectively. According to the protocol, the latter distribution is taken to be uniform, $P(\boldsymbol{r}) = \frac{1}{2^{n}}$, and independent of all other variables (see \cite[Lemma 4,][]{mantri2017flow}). Furthermore, since the specific choice of gflow is irrelevant for the computation once $O$ is fixed, we are also justified in taking $P(g|O)$ to be uniform and independent of $\boldsymbol{\alpha}$. It is in fact possible to write $\sigma_{\boldsymbol{\alpha}, O}$ in a more compact form as
\begin{align}
\sigma_{\boldsymbol{\alpha},O} = \sum_{\boldsymbol{\alpha'}, \boldsymbol{c'}}  P(\boldsymbol{\alpha'}| \boldsymbol{c'}, \boldsymbol{\alpha}, O) \ket{\boldsymbol{\alpha'}\boldsymbol{c'}}\!\!\bra{\boldsymbol{\alpha'}\boldsymbol{c'}}_{\boldsymbol{A'},\boldsymbol{C'}}.
\end{align}

The details are given in \Cref{subsec:app_BQC_single_round_proof}.


\subsection{Results: Security Analysis} \label{subsec:BQC_results}

This subsection presents the results of an improved security analysis of the BQC protocol first presented in \cite{mantri2017flow} via use of the comb min-entropy. This aligns more closely with typical quantum cryptographic practices, as outlined at the start of this section. Specifically, we establish a non-zero lower bound for the comb min-entropy for $D_{\client}$ defined above for a single round of the protocol and provide an example that obtains this bound. Next, we establish a non-zero lower bound for any (finite) number of rounds of the protocol, discuss its relation to the security of the protocol and provide an example where the min-entropy strictly decreases between two rounds of the protocol indicating some leakage of information.


\subsubsection{Blindness in a Single Round} \label{subsec:single_round_thm}

Here we present the analogous result to \Cref{thm:Mantri} using the comb min-entropy. This result assumes that the client selects the computation uniformly at random (which is also assumed in \cite{mantri2017flow}).

\begin{Theorem} \label{thm:single_shot_entropy} For $D_{\client}$ as above, and under the given assumptions, the following holds for any choice of graph $G$ and angle set $\mathcal{A}$ that satisfy the required conditions:
\begin{align}
H_{\min}(\boldsymbol{A}, \boldsymbol{O}| \boldsymbol{A'}, \boldsymbol{C'})_{D_{\client}} \geq n + \log_{2}(|\mathcal{O}|). \label{eq:thm_single_round}
\end{align}
\end{Theorem}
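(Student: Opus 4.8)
The plan is to bound the guessing probability $P_{\text{guess}}(\boldsymbol{A}, \boldsymbol{O}|\boldsymbol{A'},\boldsymbol{C'})_{D_{\client}}$ from above by $2^{-n}/|\mathcal{O}|$ and then read off the result through $H_{\min} = -\log_2 P_{\text{guess}}$. Since $D_{\client}$ is a classical-classical comb whose outputs are the reported angles $\boldsymbol{\alpha'}$ (taking $|\mathcal{A}|^n$ values in total, using closure of $\mathcal{A}$ under the transformations in \Cref{eq:angle_set}) and whose inputs are the reported outcomes $\boldsymbol{c'}$ (so that $\dim \boldsymbol{A}^{\inn} = 2^n$), the first move is to invoke the upper bound of \Cref{prop:bayesian_classical_comb}, which in the present notation and after absorbing the normalisation by $\dim\boldsymbol{A}^{\inn}$ reads
\begin{align*}
P_{\text{guess}}(\boldsymbol{A},\boldsymbol{O}|\boldsymbol{A'},\boldsymbol{C'})_{D_{\client}} \leq \sum_{\boldsymbol{\alpha'}} \max_{(\boldsymbol{\alpha},O),\, \boldsymbol{\tilde c}} P(\boldsymbol{\alpha},O | \boldsymbol{\tilde c},\boldsymbol{\alpha'})\, P(\boldsymbol{\alpha'}| \boldsymbol{\tilde c}).
\end{align*}

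The second step converts the posterior into a likelihood. Writing the summand as the joint $P(\boldsymbol{\alpha},O,\boldsymbol{\alpha'}|\boldsymbol{\tilde c})$ and refactoring, then using the independence $X \perp \boldsymbol{A}^{\inn}$ supplied by \Cref{prop:X_A_independence} together with the uniform-prior assumption $P(\boldsymbol{\alpha},O) = (|\mathcal{A}|^n|\mathcal{O}|)^{-1}$, one has
\begin{align*}
P(\boldsymbol{\alpha},O | \boldsymbol{\tilde c},\boldsymbol{\alpha'})\, P(\boldsymbol{\alpha'}| \boldsymbol{\tilde c}) = P(\boldsymbol{\alpha'}| \boldsymbol{\alpha}, O, \boldsymbol{\tilde c})\, P(\boldsymbol{\alpha},O) = \frac{P(\boldsymbol{\alpha'}| \boldsymbol{\alpha}, O, \boldsymbol{\tilde c})}{|\mathcal{A}|^n|\mathcal{O}|},
\end{align*}
so the bound reduces to $\tfrac{1}{|\mathcal{A}|^n|\mathcal{O}|}\sum_{\boldsymbol{\alpha'}} \max_{(\boldsymbol{\alpha},O),\boldsymbol{\tilde c}} P(\boldsymbol{\alpha'}| \boldsymbol{\alpha}, O, \boldsymbol{\tilde c})$.

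The crux of the argument, and the step I expect to be the main obstacle, is the likelihood bound
\begin{align*}
P(\boldsymbol{\alpha'}| \boldsymbol{\alpha}, O, \boldsymbol{\tilde c}) \leq 2^{-n} \qquad \text{for all } \boldsymbol{\alpha'}, \boldsymbol{\alpha}, O, \boldsymbol{\tilde c},
\end{align*}
which is the single-shot analogue of \cite[Lemma 4,][]{mantri2017flow} and the point at which the one-time pad does its work. To establish it I would fix a gflow $g \sim O$ and a server transcript $\boldsymbol{\tilde c}$ and argue that the map $\boldsymbol{r} \mapsto \boldsymbol{\alpha'}$ determined by \Cref{eq:adapt_meas_angles} is injective on $\mathbb{Z}_2^n$. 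The structural input is that the correction sets of \Cref{eq:X_Z_correction_sets} satisfy $\mathcal{X}_i, \mathcal{Z}_i \subseteq \{j : j < i\}$ by conditions 1 and 2 of \Cref{def:gflow}, so the sign exponent $\bigoplus_{j\in\mathcal{X}_i}c_j$ of $\alpha'_i$ depends on $\boldsymbol{r}$ only through bits $r_j$ with $j<i$, while $r_i$ itself enters only the $\pi$-phase exponent $r_i \oplus \bigoplus_{j\in\mathcal{Z}_i}c_j$. Given $\boldsymbol{r} \neq \boldsymbol{r}''$, at the least index $i$ where they differ the sign exponents coincide whereas the phase exponents differ, so $\alpha'_i$ and $\alpha''_i$ differ by exactly $\pi \bmod 2\pi$ and are therefore distinct angles; hence the map is injective. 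Consequently, for fixed $g$ the uniform pad pushes forward to the uniform distribution on a $2^n$-element image, giving $P(\boldsymbol{\alpha'}|\boldsymbol{\alpha},O,g,\boldsymbol{\tilde c}) \in \{0, 2^{-n}\}$, and taking the convex average over $g \sim O$ preserves the bound $\leq 2^{-n}$.

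Finally, combining the pieces and using that there are $|\mathcal{A}|^n$ strings $\boldsymbol{\alpha'}$,
\begin{align*}
P_{\text{guess}}(\boldsymbol{A},\boldsymbol{O}|\boldsymbol{A'},\boldsymbol{C'})_{D_{\client}} \leq \frac{1}{|\mathcal{A}|^n|\mathcal{O}|}\sum_{\boldsymbol{\alpha'}} 2^{-n} = \frac{|\mathcal{A}|^n}{|\mathcal{A}|^n|\mathcal{O}|}\, 2^{-n} = \frac{1}{2^n|\mathcal{O}|},
\end{align*}
and applying $-\log_2$ yields $H_{\min}(\boldsymbol{A},\boldsymbol{O}|\boldsymbol{A'},\boldsymbol{C'})_{D_{\client}} \geq n + \log_2|\mathcal{O}|$. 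The argument is uniform in $G$ and $\mathcal{A}$, the only ingredients being the ordering structure of gflow, the uniformity of the pad, and the closure of $\mathcal{A}$ under $+\pi$, which is precisely why the stated bound holds for every admissible graph and angle set.
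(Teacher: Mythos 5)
Your proposal is correct and follows essentially the same route as the paper: it applies the upper bound of \Cref{prop:bayesian_classical_comb}, undoes Bayes' rule via the independence from \Cref{prop:X_A_independence} and the uniform prior, and then reduces everything to the key fact that for fixed $g$, $\boldsymbol{\alpha}$ and $\boldsymbol{c'}$ at most one pad $\boldsymbol{r}$ can produce a given $\boldsymbol{\alpha'}$ — which is exactly the paper's \Cref{lem:preimage_gflow}, and your ``least differing index'' injectivity argument is just a repackaging of the paper's induction proof of that lemma. The final summation yielding $P_{\text{guess}} \leq 2^{-n}/|\mathcal{O}|$ matches the paper's conclusion.
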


The proof of this theorem is given in \Cref{subsec:app_BQC_single_round_proof}. The proof makes use of \Cref{prop:bayesian_classical_comb} applied to the comb $D_{\client}$. Due to the specific form of $D_{\client}$, this bound has an interpretation in terms of the number of different ways each $\boldsymbol{\alpha'}$ can be reported for a given classical message (that is, for different gflows $g$ and bit-strings $\boldsymbol{r}$) with a larger number corresponding to a lower min-entropy bound (this is explained further after the proof of the theorem in \Cref{subsec:app_BQC_single_round_proof}). 

One can rightfully ask how good this bound is in practice. Consider the following example for which the gflows are characterised and for which the min-entropy obtains the bound in \Cref{eq:thm_single_round}. Let $G$ be the three-vertex graph as shown in \Cref{fig:min_BQC_example} with total order given by the natural order on vertex labels. As demonstrated in \Cref{subsec:app_BQC_min_example}, there is exactly one non-trivial output set ($O = \{2,3\}$) that supports gflow for this example, and exactly two corresponding gflows exist. It can further be shown that the upper and lower bounds of \Cref{prop:bayesian_classical_comb} for $D_{\client}$ coincide, which establishes $n + \log_{2}(|\mathcal{O}|)$ as the true min-entropy value. For this example, $n = 3$ and $|\mathcal{O}| = 1$, so $H_{\min}(\boldsymbol{A}, \boldsymbol{O}| \boldsymbol{A'}, \boldsymbol{C'})_{D_{\client}} = 3$, which is corroborated by the numerical results (for two choices of $\mathcal{A}$, one such that $|\mathcal{A}| = 4$ and the other such that $|\mathcal{A}| = 8$).

\begin{figure}
\includegraphics[width=4cm]{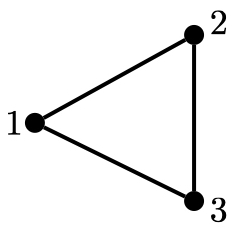}
\centering
\caption{An example graph state for which the security for a single round of the BQC protocol considered here is maximally bad (that is, the min-entropy for the associated quantum comb obtains the lower bound given in \Cref{thm:single_shot_entropy}). All potential gflows for the protocol are required to be compatible with the order $1<2<3$ on vertices. The only output set that supports such gflows is $O = \{2,3\}$.}
\label{fig:min_BQC_example}
\end{figure}


\subsubsection{Blindness in Multiple Rounds} \label{subsec:multi_round_blindness}

Many quantum algorithms produce the correct output with high, but non-unit, probability. In some cases, it may be necessary to collect measurement statistics of the outputs of the algorithm, for example when estimating eigenvalues to finite precision in quantum phase estimation \cite{kitaev1995quantum,nielsen_chuang_2010}. If such an algorithm was to be performed in a BQC context, many runs of the protocol would be required. We now extend the blindness analysis to such a situation, where the client and server engage in multiple rounds of the protocol. Prior to this work, no knowledge of the security of the protocol under multiple rounds was known as such a case was not treated in the original work \cite{mantri2017flow}. As an extension of \Cref{thm:single_shot_entropy} above, we demonstrate that the min-entropy is bounded away from zero for all rounds.

We use similar assumptions to \Cref{thm:single_shot_entropy}, namely that $P(g|O)$ and $P(\boldsymbol{r})$ are uniform, however we make a slightly weaker assumption on $P(\boldsymbol{\alpha}, O)$: we take $P(\boldsymbol{\alpha}, O) = \frac{P(O)}{|\mathcal{A}|^{n}}$ for $P(O)$ an arbitrary distribution over the output sets. Furthermore, we make the additional assumptions that the gflow and one-time pads are chosen independently for each round, allowing the multi-round protocol to be represented as
\begin{align}
D_{\client}^{(m)} = \sum_{\boldsymbol{\alpha}, O}\frac{P(O)}{|\mathcal{A}|^{n}}\ket{\boldsymbol{\alpha}, O}\!\!\bra{\boldsymbol{\alpha},O} \bigotimes_{j = 1}^{m} \sigma_{\boldsymbol{\alpha}, O}^{(j)}.
\end{align}

where each $\sigma_{\boldsymbol{\alpha}, O}^{(j)} \in \Comb(\mathbb{C} \rightarrow (A'_{1})^{(j)}, (C'_{1})^{(j)} \rightarrow (A'_{2})^{(j)}, ..., (C'_{n})^{(j)} \rightarrow \mathbb{C})$. We have the following theorem:

\begin{Theorem} \label{thm:multi_round} For any $m$, the following holds:
\begin{align}
H_{\min}(\boldsymbol{A}, \boldsymbol{O}| \boldsymbol{A'}^{(1)}, \boldsymbol{C'}^{(1)}, ..., &\boldsymbol{A'}^{(m)}, \boldsymbol{C'}^{(m)})_{D_{\client}^{(m)}} \nonumber \\ 
&\geq -\log\left(\sum_{O \in \mathcal{O}} \frac{P(O)}{2^{|O|}} \right).
\end{align}
\end{Theorem}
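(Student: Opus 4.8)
The plan is to pass to the guessing-probability picture. Since $H_{\min}=-\log P_{\text{guess}}$, the claimed inequality is equivalent to the upper bound
\[
P_{\text{guess}}(\boldsymbol{A},\boldsymbol{O}\,|\,\boldsymbol{A'}^{(1)},\boldsymbol{C'}^{(1)},\dots,\boldsymbol{A'}^{(m)},\boldsymbol{C'}^{(m)})_{D_{\client}^{(m)}}\leq\sum_{O\in\mathcal{O}}\frac{P(O)}{2^{|O|}},
\]
and I would obtain it by applying the multi-round upper bound of \Cref{prop:bayesian_classical_comb} (the displayed multi-round inequality following it) to $D_{\client}^{(m)}$, with the reported angles $\boldsymbol{\alpha'}$ as the comb outputs and the reported outcomes $\boldsymbol{c'}$ as the inputs. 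The first step is to simplify the resulting summand $\max_{(\boldsymbol{\alpha},O),\tilde{\boldsymbol{c}}^{(1:m)}}P(\boldsymbol{\alpha},O\,|\,\tilde{\boldsymbol{c}}^{(1:m)},\boldsymbol{\alpha'}^{(1:m)})\prod_{j}P(\boldsymbol{\alpha'}^{(j)}\,|\,\tilde{\boldsymbol{c}}^{(j)})$ using Bayes' rule together with the independence of the secret $(\boldsymbol{\alpha},O)$ from the inputs $\boldsymbol{c'}$ (which is \Cref{prop:X_A_independence}, exactly as in \Cref{eq:X_A_independence}) and the round-independence built into $D_{\client}^{(m)}$. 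These collapse the summand to $P(\boldsymbol{\alpha},O)\prod_{j}P(\boldsymbol{\alpha'}^{(j)}\,|\,\tilde{\boldsymbol{c}}^{(j)},\boldsymbol{\alpha},O)$, and since each $\tilde{\boldsymbol{c}}^{(j)}$ appears in a single factor the maximisation over $\tilde{\boldsymbol{c}}^{(1:m)}$ factorises into $\prod_{j}M(\boldsymbol{\alpha'}^{(j)},\boldsymbol{\alpha},O)$, where $M(\boldsymbol{\alpha'},\boldsymbol{\alpha},O):=\max_{\boldsymbol{c}}P(\boldsymbol{\alpha'}\,|\,\boldsymbol{c},\boldsymbol{\alpha},O)$.

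Next I would use $P(\boldsymbol{\alpha},O)=P(O)/|\mathcal{A}|^{n}$ and the elementary bound $\max_{(\boldsymbol{\alpha},O)}\leq\sum_{O}\max_{\boldsymbol{\alpha}}$ to pull the $O$-dependence outside, reducing the whole estimate to a single per-output-set claim: for every fixed $O$ and every $m$,
\[
\sum_{\boldsymbol{\alpha'}^{(1:m)}}\ \max_{\boldsymbol{\alpha}}\ \prod_{j=1}^{m}M(\boldsymbol{\alpha'}^{(j)},\boldsymbol{\alpha},O)\leq\frac{|\mathcal{A}|^{n}}{2^{|O|}}.
\]
Substituting this back gives $P_{\text{guess}}\leq\sum_{O}\tfrac{P(O)}{|\mathcal{A}|^{n}}\cdot\tfrac{|\mathcal{A}|^{n}}{2^{|O|}}=\sum_{O}P(O)/2^{|O|}$, which is the theorem. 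The crucial feature is that the right-hand side of the per-$O$ claim is independent of $m$.

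To establish the per-$O$ claim I would unpack $M$ from the angle-adaptation rule \Cref{eq:adapt_meas_angles}: averaging the deterministic distribution $P(\boldsymbol{\alpha'}\,|\,\boldsymbol{c},\boldsymbol{\alpha},\boldsymbol{r},g)$ over the uniform one-time pad $\boldsymbol{r}$ and over $g\sim O$ shows that, for fixed $\boldsymbol{\alpha}$, a reported $\boldsymbol{\alpha'}$ contributes only when it is compatible with $\boldsymbol{\alpha}$ qubitwise (each $\alpha'_{i}\equiv\pm\alpha_{i}\bmod\pi$), and that $M$ depends on $\boldsymbol{\alpha'}$ only through the induced sign pattern, not through the $\pi$-shifts introduced by $\boldsymbol{r}$. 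The key cancellation is then the following: over the $m$ rounds the pad contributes a factor $2^{-1}$ per qubit per round to $\prod_{j}M$, while the sum over the $\pi$-shift freedom of the $\boldsymbol{\alpha'}^{(1:m)}$ compatible with a fixed $\boldsymbol{\alpha}$ supplies exactly the compensating multiplicity; because the maximising $\boldsymbol{\alpha}$ is shared across all rounds, what survives is an $m$-independent count of the admissible sign data. Carrying out this count with the explicit correction sets $\mathcal{X}_{i},\mathcal{Z}_{i}$ of the gflow, and using that $\boldsymbol{c}=\boldsymbol{0}$ trivialises the sign pattern $s_{i}=\bigoplus_{j\in\mathcal{X}_{i}}c_{j}$ for all $g\sim O$ simultaneously, pins the surviving factor to $2^{-|O|}$.

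The main obstacle is precisely this last count. The subtlety is that the sign pattern $s_{i}$ couples the qubits through the gflow, so $\prod_{j}M$ does not factorise over qubits, and one must verify that the coupling leaves exactly $|O|$ qubits' worth of pad-protection intact (rather than the $n$ one would naively read off from a decoupled count, which would over-state the security). Controlling the \emph{shared} maximisation over $\boldsymbol{\alpha}$ so that the bound does not accumulate with $m$ is the heart of the argument: replacing that maximum by a sum over $\boldsymbol{\alpha}$ destroys $m$-independence, so the pad-multiplicity cancellation above must be carried through exactly. Monotonicity of the min-entropy in the number of rounds (\Cref{lem:min_entropy_non_increasing}) provides a useful consistency check, and the three-qubit example of \Cref{subsec:single_round_thm}, together with the single-round value $n+\log_{2}|\mathcal{O}|$, can be used to sanity-check the constant $2^{-|O|}$.
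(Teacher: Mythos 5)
There is a genuine gap, and it sits exactly where you place ``the heart of the argument'': the per-$O$ counting claim is false, and in fact the entire reduction to the multi-round upper bound of \Cref{prop:bayesian_classical_comb} cannot work, because that bound is strictly weaker than the theorem once $m\geq 2$. The paper's own three-qubit example makes this concrete. There $|\mathcal{O}|=1$ and $|O|=2$, so the theorem asserts $P_{\text{guess}}\leq\sum_{O}P(O)2^{-|O|}=\tfrac14=0.25$, while the quantity you start from --- the multi-round upper bound of \Cref{prop:bayesian_classical_comb} applied to $D_{\client}^{(2)}$ --- is evaluated in \Cref{subsec:app_BQC_min_example} to be $0.28125=\tfrac{9}{32}$. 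Since there is only one output set, your per-$O$ claim at $m=2$ is precisely the statement $\tfrac{9}{32}|\mathcal{A}|^{3}\leq\tfrac{8}{32}|\mathcal{A}|^{3}$, which is false. The reason the shared maximisation over $\boldsymbol{\alpha}$ does not rescue the count is that different rounds can be ``explained'' by different sign conventions on the \emph{non-output} qubits: a tuple $\boldsymbol{\alpha'}^{(1:m)}$ whose rounds carry mixed sign patterns relative to any single $\boldsymbol{\alpha}$ still receives a nonzero (if smaller) value of $\prod_{j}M$, and summing over such mixed tuples produces, per residue class of $\boldsymbol{\alpha}$, a factor of the form $\sum_{k}\binom{m}{k}2^{-\min(k,m-k)}$, which grows exponentially in $m$. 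So the bound you take as a starting point not only exceeds $\sum_{O}P(O)2^{-|O|}$ at $m=2$; it diverges as $m$ grows and becomes vacuous, so no refinement of the downstream counting can recover an $m$-independent bound.

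The paper's proof uses a different mechanism, and the counterexample above shows why it is needed. It first proves \Cref{lem:output_symmetry}: because an output-qubit pad $r_{i}$, $i\in O$, enters only the adaptation of $\alpha'_{i}$ itself, averaging over the uniform pad makes the combs \emph{exactly equal}, $\sigma_{\boldsymbol{\alpha},O}=\sigma_{\widetilde{\boldsymbol{\alpha}},O}$ for every $\widetilde{\boldsymbol{\alpha}}\sim_{O}\boldsymbol{\alpha}$ ($\pi$-shifts on output qubits), hence also $\bigotimes_{j}\sigma_{\boldsymbol{\alpha},O}^{(j)}=\bigotimes_{j}\sigma_{\widetilde{\boldsymbol{\alpha}},O}^{(j)}$ for all $m$. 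It then works in the dual (strategy) formulation of \Cref{eq:maximum_pairs} rather than with \Cref{prop:bayesian_classical_comb}: any strategy $(\widehat{E},\mathcal{F})$ produces literally the same guess state $\rho_{\sim_{O}\boldsymbol{\alpha}}$ for all $2^{|O|}$ secrets in a class, and since those secrets label orthogonal basis states of the guess register, the strategy's total success probability within a class is at most $\Tr[\rho_{\sim_{O}\boldsymbol{\alpha}}]=1$; summing over the $|\mathcal{A}|^{n}/2^{|O|}$ classes for each $O$ gives the $m$-independent bound. Your physical intuition --- that the output pads are never leaked and cost the server a factor $2^{-|O|}$ --- is exactly the content of that lemma, but it must be exploited as exact indistinguishability of whole classes of secrets at the level of the comb itself, inside the dual formulation; the primal Bayesian-update bound you invoke has already discarded that structure.
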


A key point of the proof, given in \Cref{subsec:app_multi_round_blind}, consists of the observation that, for any choice of output set $O$, the bit-string values assigned to the qubits in the output set cannot be learnt by the server since they do not appear in the adaptation of any other angle (in contrast to the bits of $\boldsymbol{r}$ assigned to non-output qubits). The consequence is that the server can, at best, only learn the output angles of $\boldsymbol{\alpha}$ up to a $\pi$ phase. However, the server may be able to learn all other measurement angles with certainty, which correspond to the computation proper, in which case the only remaining uncertainty consists of not knowing how to interpret the measurement outcomes on the output state. This constitutes a large amount of leakage in the protocol, which would likely be deemed unsatisfactory in practice without further modifications to the protocol.

The above theorem does not say anything about whether the min-entropy does in fact change from round to round, but it does in fact do so for the minimal example presented above. As outlined in \Cref{subsec:app_BQC_min_example}, for a specific choice of angle set $\mathcal{A}$, the two round entropy is upper bounded by $-\log_{2}(0.140625) \approx 2.830(0)$ and thus the min-entropy value does indeed decrease from the single round to the two round case.


\section{Grey Box MBQC} \label{sec:grey_MBQC}

In the previous section, the comb min-entropy was used to quantify the distinguishability of different quantum computations in a cryptographic setting and make statements about security. In this section, the comb min-entropy is used to quantify how well an unknown property or parameter of a quantum system can be learnt. This aligns well with the use of the min-entropy in areas such as quantum hypothesis testing and quantum metrology, which have recently seen application of the combs formalism \cite{bavaresco2023designing}. 

Our system of interest throughout this section will be a ``grey box MBQC device''. This device is to be understood as a sealed measurement-based quantum computing device about whose internal functioning we only have partial knowledge. Specifically, the device prepares a specific graph state and performs all measurement corrections for us, all we need to do is receive qubits one-by-one and measure them. Throughout, we assume the graph structure of the graph state prepared inside the device is known, but the details of the exact state and correction method are not.

The scenarios we consider below are motivated by practical considerations and can be understood as components of a verification process for the device. The first example is related to establishing the correct functioning of the device. For certain graph states, there exist a number of correction methods (many gflows, cf \Cref{subsec:MBQC}) which are only compatible with certain types of single-qubit measurements (i.e., in a specific plane of the Bloch sphere). If a different measurement is used, the output of the device is no longer the correct output of the computation. Accordingly, in order to use the device properly, the correct allowed measurement planes must be known. \Cref{subsec:meas_planes} investigates how well this can be achieved for a specific example.

Interestingly, even if all measurement planes for the device are known, which restricts the set of possible gflows that the device could be implementing, there is no guarantee that any further information can be learnt about which specific gflow it is. In \Cref{subsec:noise}, we demonstrate an example where it is in fact impossible to learn anything more about which correction method is being used once the measurement planes are known. The situation changes, however, if the graph state preparation is imperfect. Using a realistic noise model, we demonstrate that, even though clearly harmful for any computation, the presence of noise is actually beneficial for learning other features of the grey box device, such as the specific details of the correction method.

If all measurement planes are known and there is no noise in the device, we still must be sure that our measurement angles are compatible with the basis in which the graph state is being prepared in order to obtain useful computational results. That is, we need to calibrate our single qubit measurements to the device. Stated another way, we need to ensure that the internal quantum reference frame \cite{bartlett2007reference} of the device aligns with the external one of the user. An example investigating how well this can be done to a given precision is treated in \Cref{subsec:meas_calibration}.

The final subsection introduces a connection between MBQC and quantum causal models \cite{barrett2019quantum,allen2017quantum,costa2016quantum,oreshkov2012quantum, d2018causality} that may be of interest to the quantum foundations community. Quantum causal models can be considered to be a refinement of quantum combs to include more structure: while combs stipulate a linear sequence of input and output spaces, quantum causal models further stipulate which input spaces can influence which other output spaces and which do not. In \Cref{subsec:caus_inf}, we establish for the first time the connection between MBQC and quantum causal models and discuss some consequences of this connection. In keeping with the rest of this work, we interpret the comb min-entropy in this context as a type of causal hypothesis testing and investigate some examples again using the grey box MBQC device.


\subsection{Gflow Quantum Combs} \label{subsec:gflow_qcomb} 

\begin{figure*}
\begin{subfigure}{0.5\textwidth}
\centering
\includegraphics[width=0.62\textwidth]{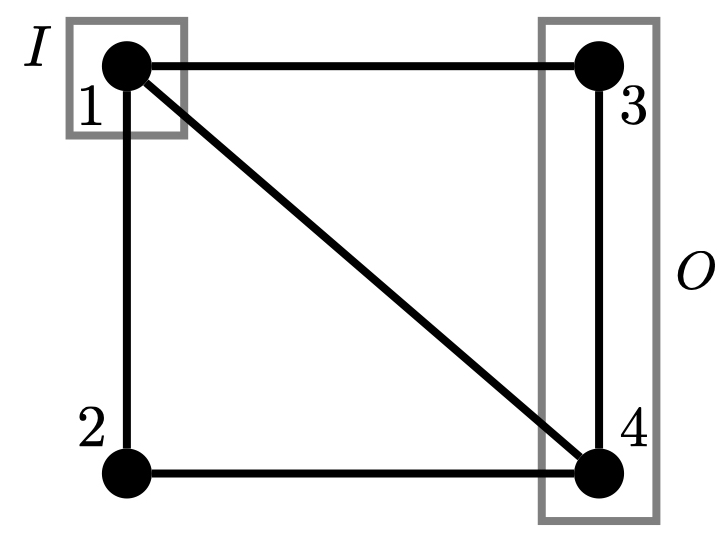}
\caption{$(G, I, O)$}
\label{fig:grey_box_graph}
\end{subfigure}
\hfill
\begin{subfigure}{0.5\textwidth}
\centering
\includegraphics[width=0.9\textwidth]{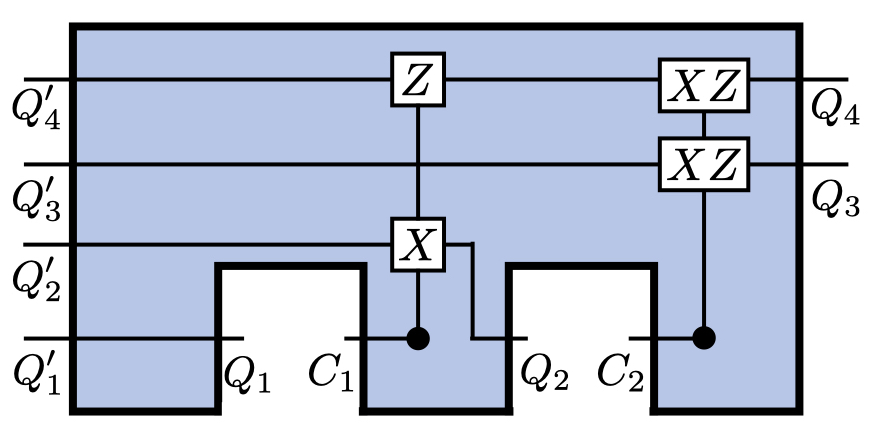}
\caption{$\sigma_{\MBQC}^{g_{1}}$}
\label{fig:sigma_MBQC_g}
\end{subfigure}
\caption{The classical-quantum combs defined here for the consideration of an MBQC device have quantum comb component based on the flow of corrections required for measurement-based computation (known as gflow). (a) In most of the examples, we consider a four qubit graph $G$ and choice of input set $I = \{1\}$ and output set $O = \{3,4\}$. There are $15$ gflows compatible with this choice of $(G,I,O)$ which are catalogued in \Cref{subsec:app_gflow_catalogue}, including the corresponding directed acyclic graphs and correction operators. The quantum comb corresponding to one gflow, $g_{1}$ (see the catalogue for labelling), for this example is shown in (b). The graph state is input on the left and passed through the conditional correction channels to each of the output spaces on the right. Measurement channels performing single qubit measurements can be inserted into the gaps (such as between $Q_{1}$ and $C_{1}$). The graph state preparation and corrections are all internal to the grey box device, so the user can only interact via the space $Q_{i}$ and $C_{j}$.}
\label{fig:grey_box_combs}
\end{figure*}

Similarly to the treatment of the BQC protocol, we begin by constructing the quantum comb from which the classical-quantum combs for the upcoming examples can be constructed. As stated above, the grey box device contains both a graph state preparation and the correction method corresponding to a gflow. The graph state will be denoted $\rho_{G}$ and the operator implementing the corrections for gflow $g$ will be denoted $\sigma_{\MBQC}^{g}$. In the coming subsections, the classical-quantum combs for each example will be constructed from $\rho_{G}$ and $\sigma_{\MBQC}^{g}$ depending on the specific parameters to be learnt.

Let us fix a graph $G$ on $n$ vertices, input and output sets $I$ and $O$, and measurement planes for each qubit via the map $\omega$ (recall the definition of gflow \Cref{def:gflow}). Let $g$ denote a gflow compatible with $(G,I,O,\omega)$. The comb $\sigma_{\MBQC}^{g}$ will be constructed out of conditional single-qubit unitaries (the ``completion'' of the stabiliser - \Cref{subsec:MBQC}) as specified by $g$. Recall from \Cref{eq:X_Z_correction_sets} that $g$ defines the correction sets $\mathcal{X}_{i}$ and $\mathcal{Z}_{i}$ for each $i \in V$. We then define for each $i \in V$ 
\begin{align}
U_{\text{corr}(\boldsymbol{c}),i} := X_{Q_{i}}^{\bigoplus_{j \in \mathcal{X}_{i}} c_{j}}Z_{Q_{i}}^{\bigoplus_{j \in \mathcal{Z}_{i}} c_{j}}. \label{eq:U_corr_i}
\end{align}

where the subscript $Q_{i}$ identifies the Hilbert space for the qubit corresponding to vertex $i$. For notational simplicity, we also define
\begin{align}
U_{\text{corr}(\boldsymbol{c})} := \bigotimes_{i = 1}^{n}U_{\text{corr}(\boldsymbol{c}),i}. \label{eq:U_corr}
\end{align}

We can then define $\sigma_{\MBQC}^{g}$ as
\begin{align}
\sigma_{\MBQC}^{g} := \sum_{\boldsymbol{a}, \boldsymbol{b}, \boldsymbol{c}} U_{\text{corr}(\boldsymbol{c})} \ket{\boldsymbol{a}}\!\!\bra{\boldsymbol{b}}_{\boldsymbol{Q}}U_{\text{corr}(\boldsymbol{c})}^{\dagger} \otimes \ket{\boldsymbol{c},\boldsymbol{a}}\!\!\bra{\boldsymbol{c},\boldsymbol{b}}_{\boldsymbol{CQ'}}  \label{eq:sigma_MBQC}
\end{align} 

where the subscripts $\boldsymbol{Q}$, $\boldsymbol{C}$ and $\boldsymbol{Q'}$ denote that the corresponding state is in $\H_{\boldsymbol{Q}} : = \bigotimes_{i = 1}^{n}\H_{Q_{i}}$, $\H_{\boldsymbol{C}} := \bigotimes_{i = 1}^{n}\H_{C_{i}}$ or $\H_{\boldsymbol{Q'}} := \bigotimes_{i = 1}^{n}\H_{Q'_{i}}$ respectively, and the sum is over the computational basis of these spaces. Note that $\sigma_{\MBQC}^{g}$ is an operator on $\bigotimes_{i = 1}^{n} \H_{Q_{i}} \otimes \H_{C_{i}} \otimes \H_{Q'_{i}}$, with $\H_{Q_{i}} \cong \H_{C_{i}} \cong \H_{Q'_{i}} \cong \mathbb{C}^{2}$ for each $i$. Moreover, we endow $\sigma_{\MBQC}^{g}$ with a choice of total order for which the partial order given by the gflow is compatible. Any such total order is sufficient and we leave it implicit in the labelling of the Hilbert spaces. Explicitly, we consider $\sigma_{\MBQC}^{g} \in \Comb(\boldsymbol{Q'} \rightarrow Q_{1}, C_{1} \rightarrow Q_{2}, ..., C_{|V\setminus O|} \rightarrow \bigotimes_{i \in O} Q_{i})$. The graph state $\rho_{G}$ will be prepared on the space $\H_{\boldsymbol{Q'}}$ and ultimately we will build classical-quantum combs from $\sigma_{\MBQC}^{g} \ast_{\boldsymbol{Q'}} \rho_{G}$ where $\ast_{\boldsymbol{Q'}}$ denotes the link product over $\H_{\boldsymbol{Q'}}$. See \Cref{fig:sigma_MBQC_g} for a depiction of $\sigma_{\MBQC}^{g}$ for one of the gflows considered in the examples below.

For completeness, a proof that $\sigma_{\MBQC}^{g} \ast_{\boldsymbol{Q'}} \rho_{G}$ does indeed perform a correct MBQC is given in \Cref{subsec:app_sigma_MBQC_correctness}. This proof is related to the existing theorem establishing sufficiency of gflow for deterministic computation \cite[Theorem 2,][]{browne2007generalized} and also provides justification for why the ordering of operators in \Cref{eq:U_corr_i} is valid.


\subsection{Results: Learning Measurement Planes} \label{subsec:meas_planes} 

Included in a given correction method for MBQC is the requirement that each qubit be measured in the correct measurement plane, i.e. in either the $XY$-, $XZ$- or $YZ$-plane of the Bloch sphere. If the exact correction method is unknown, then an incorrect choice of measurement leads to incorrect operation of the device. In this subsection, we use the comb min-entropy to quantify how well the correct measurement planes can be learnt for a concrete example.

We consider a fixed four-vertex graph $G$ and choice of input and output sets $I$ and $O$ as depicted in \Cref{fig:grey_box_graph}. For this $(G,I,O)$ there are $15$ different possible gflows, which are catalogued in \Cref{sec:app_grey_box_examples} by stipulating the map $g$ and corresponding DAG for each (an example is depicted in \Cref{fig:grey_box_gflow_example} for $g_{1}$; the labelling is given in the appendices). These $15$ gflows can be grouped into three different groups depending on which measurement plane they specify for qubit $2$; all three measurements planes are possible and there are $5$ gflows corresponding to each plane. Note that, due to the definition of gflow, any input qubit must be measured in the $XY$-plane and we allow for arbitrary measurements on the output qubits, so there is only ambiguity surrounding the measurement plane for qubit $2$.

Fortunately, to use the device correctly, the user does not need to know which exact gflow the device implements, but solely which measurement plane it assigns to qubit $2$. As such, we consider a classical-quantum comb with random variable $X$ whose values are the possible measurement plane assignments for the second qubit. The comb for this example is
\begin{align}
D_{\text{planes}} := \sum_{\text{mp}}P(\text{mp})&\ket{\text{mp}}\!\!\bra{\text{mp}} \nonumber \\
&\otimes \sum_{g \sim \text{mp}} P(g|\text{mp}) \sigma_{\MBQC}^{g} \ast_{\boldsymbol{Q'}} \rho_{G}
\end{align}

where the first sum is over the elements of $\{XY, XZ, YZ\}$. The second sum averages over all gflows $g$ that assign measurement plane $\text{mp}$ to qubit $2$ (the shorthand notation for which is $g \sim \text{mp}$) as this information is irrelevant for the user. $D_{\text{planes}}$ is an element of $\Comb(\mathbb{C} \rightarrow Q_{1}, C_{1} \rightarrow Q_{2}, C_{2} \rightarrow Q_{3,4}, \mathbb{C} \rightarrow X)$.

Assuming the distributions are uniform, i.e. $P(\text{mp}) = \frac{1}{3}$ and $P(g|\text{mp}) = \frac{1}{5}$ for each $\text{mp}$, the comb min-entropy for $D_{\text{mp}}$ reaches $0.000(0)$ indicating that the measurement planes can in fact be determined with certainty after just a single use of the device.


\subsection{Results: Learning in the Presence of Noise} \label{subsec:noise} 

Continuing with the same example, suppose now that the measurement plane is known and now we wish to learn the exact details of the gflow that the device implements. For concreteness, suppose that qubit $2$ is to be measured in the $XY$-plane (the analysis for the other measurement planes is the same). Suppose further that we only need to distinguish between those gflows that are also compatible with the order of measurements where qubit $1$ is measured before qubit $2$ (we discuss the reasons behind this in \Cref{subsec:caus_inf}). For the specific graph under consideration, there are four such gflows: $g_{1}$, $g_{2}$, $g_{4}$ and $g_{5}$ (see \Cref{sec:app_grey_box_examples} for the details of these gflows). The corresponding random variable $X$ is thus four-valued and the classical-quantum comb defined from it is
\begin{align}
D_{XY, 1<2} := \sum_{j \in J} P(g_{j})\ket{j}\!\!\bra{j} \otimes \sigma_{\MBQC}^{g_{j}} \ast_{\boldsymbol{Q'}} \rho_{G}
\end{align}

where $J = \{1,2,4,5\}$. Assuming a uniform prior, i.e. $P(g_{j}) = \frac{1}{4}$, the comb min-entropy for $D_{XY, 1<2}$ is $-\log_{2}(0.250)$. That is, nothing more can be learnt about the gflows, even under the optimal strategy. This result can be understood by considering the following proposition.

\begin{Proposition} \label{prop:no_causal_learning} Let $(G, I, O, \omega)$ be given and let $g, g'$ be gflows compatible be with $(G, I, O, \omega)$ and moreover have mutually compatible partial orders. Then, for any $c_{j} \in \{0,1\}$ for $j = 1, ..., |V\setminus O|$,
\begin{align}
\left(\ket{\boldsymbol{c}_{\setminus O}}\!\!\bra{\boldsymbol{c}_{\setminus O}}_{\boldsymbol{C}_{\setminus O}} \right) &\ast_{\boldsymbol{C}_{\setminus O}} \sigma_{\MBQC}^{g} \ast_{\boldsymbol{Q'}} \rho_{G} \nonumber\\
&= \left(\ket{\boldsymbol{c}_{\setminus O}}\!\!\bra{\boldsymbol{c}_{\setminus O}}_{\boldsymbol{C}_{\setminus O}} \right) \ast_{\boldsymbol{C}_{\setminus O}} \sigma_{\MBQC}^{g'} \ast_{\boldsymbol{Q'}} \rho_{G}
\end{align}

where $\boldsymbol{c}_{\setminus O}$ is shorthand for the measurement outcomes for all qubits not in $O$ and $\boldsymbol{C}_{\setminus O}$ labels the corresponding Hilbert spaces.
\end{Proposition}

The proof is given in \Cref{subsec:app_gflow_causal_equivalence}. The key step in the proof rests on an observation that (to the best of our knowledge) has not previously been made in the literature on gflow: for any given measurement outcomes, the correction operators induced by each $g \sim (G,I,O,\omega)$ are related to each other by a stabiliser of $\ket{G}$ and hence are in fact equivalent when applied to the graph state. Furthermore, this provides insight into the result of the previous subsection: gflows for different measurement planes are \textit{not} related by a stabiliser, and consequently are more readily distinguishable.

The indistinguishability of the gflows above relies on the graph state being a stabiliser state. Since no pure state can be prepared in practice, it is much more likely that the state preparation within the device is imperfect. In the remainder of this subsection, we investigate the consequences for learning the previously indistinguishable gflows when the pure graph state $\rho_{G}$ is replaced with a noisy version $\tilde{\rho}_{G}$. 

Consider a noise channel $\mathcal{F}_{\theta_{2},\theta_{3}}$ consisting of independent amplitude damping channels \cite{nielsen_chuang_2010} on qubits $2$ and $3$ of the graph state. That is,
\begin{align}
\mathcal{F}_{\theta_{2}, \theta_{3}}(\rho_{G}) = \sum_{i,j=0}^{1} F_{i,\theta_{2}}\otimes F_{j,\theta_{3}}  \rho_{G} F_{i,\theta_{2}}^{\dagger} \otimes F_{j,\theta_{3}}^{\dagger}
\end{align}

where
\begin{align}
F_{0, \theta} &:= \begin{bmatrix} 1 &0 \\ 0 &\cos(\theta) \end{bmatrix},\\
F_{1,\theta} &:= \begin{bmatrix} 0 &\sin(\theta) \\0 &0 \end{bmatrix},
\end{align}

and the subscript on the $\theta$ indicates the qubit of $G$ that $F_{i, \theta}$ acts upon. Taking $\tilde{\rho}_{G}(\theta_{2},\theta_{3}) := \mathcal{F}_{\theta_{2}, \theta_{3}}(\rho_{G})$, we consider the min-entropy of combs
\begin{align}
D_{XY, 1<2, \theta_{2},\theta_{3}} := \sum_{j \in J} P(g_{j})\ket{j}\!\!\bra{j} \otimes \sigma_{\MBQC}^{g_{j}} \ast_{\boldsymbol{Q'}} \tilde{\rho}_{G}(\theta_{2},\theta_{3})
\end{align}

for varying values of $\theta_{2}, \theta_{3}$. As demonstrated in \Cref{fig:noisy_graph_results}, the distinguishability of the gflows does indeed increase when the graph state is noisy (please note that the figure plots the angles against the guessing probability; recall from \Cref{sec:Q_comb_min_ent} that the min-entropy is the negative logarithm of the guessing probability). In fact, the gflows can be perfectly distinguished when $\theta_{2} = \theta_{3} = \frac{\pi}{2}$.

\begin{figure}
\centering
\includegraphics[width=0.9\columnwidth]{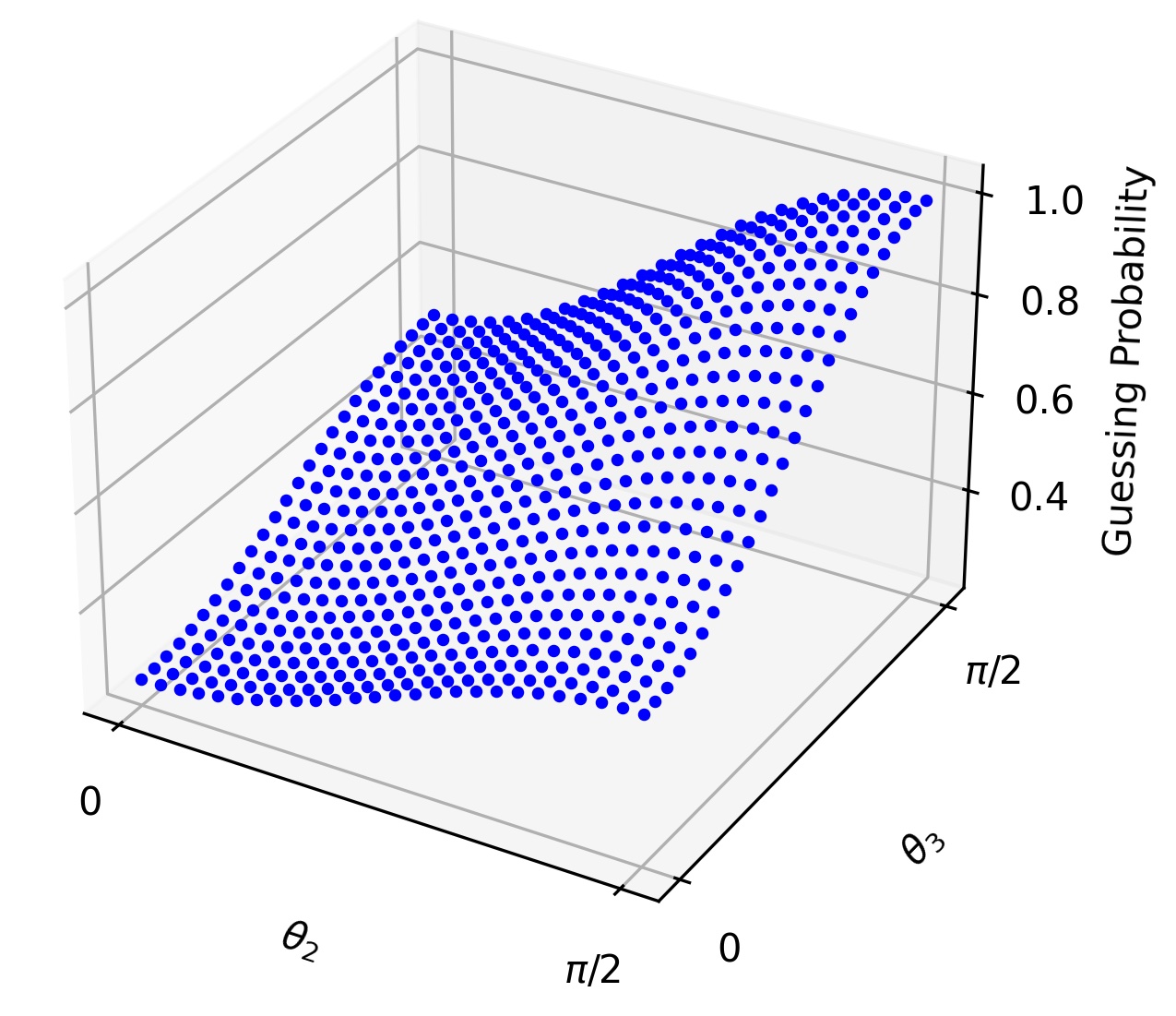}
\caption{The guessing probability for $D_{XY, 1<2}$ where the state preparation is imperfect and produces a state related to a graph state acted upon by an amplitude damping channel on qubits $2$ and $3$, as a function of the damping parameters $\theta_{2}$ and $\theta_{3}$.}
\label{fig:noisy_graph_results}
\end{figure}


\subsection{Results: Calibrating Measurements} \label{subsec:meas_calibration} 

With perfect state preparation and knowledge of the measurement planes, we can get the grey box device to function correctly, that is, to run computations deterministically. However, it is not yet enough to ensure the correctness of a \textit{specific} computation: we also need to ensure that the measurement \textit{angles} are correct, not just the measurement planes. To explain this further, consider a computation where all measurements are in the $XY$-plane, meaning that the computation is specified by the angles $\alpha_{v}$ in the projectors
\begin{align}
\ket{\pm_{\alpha_{v}}}\!\!\bra{\pm_{\alpha_{v}}} = R_{Z}(\alpha_{v})\ket{\pm}\!\!\bra{\pm}R_{Z}(\alpha)^{\dagger} \label{eq:rotated_proj}
\end{align}

for each $v$. The $\ket{\pm}$ denote the eigenvectors of the $X$ operator and when used to define single qubit measurements in MBQC, there is an implicit assumption that the $\ket{+}$ in the projector is the same as the $\ket{+}$ used in the graph state preparation (recall the definition of graph state in \Cref{eq:graph_state_CZs}). In the present context of the grey box device, the preparation occurs within the device and the measurements outside of it, so there is no \textit{a priori} guarantee that this assumption holds true.

To see why this is a problem for the computation, suppose that the $\ket{+}$ of the projection operators and the $\ket{+}$ of the graph state differ by an angle of $\theta$ in the $XY$-plane. That is,
\begin{align}
\ket{+_{\text{meas}}}\!\!\bra{+_{\text{meas}}} = R_{Z}(\theta)\ket{+_{\text{graph}}}\!\!\bra{+_{\text{graph}}}R_{Z}^{\dagger}(\theta). \label{eq:theta_adjusted}
\end{align}

If the user then measures at an angle of $\alpha$ with respect to $\ket{+_{\text{meas}}}$, then effectively they have measured at an angle of $\alpha + \theta$. The cumulative effects of the $\theta$ shift of these measurements could be catastrophic for the output computation. Accordingly, the user would need to calibrate their measurements to the device at hand to be able to receive sensible computational output. This calibration can be understood as a problem of correlating two quantum reference frames \cite{bartlett2007reference}, the one inside the device with the one outside of it.

\begin{figure}
\centering
\includegraphics[width=1.1\columnwidth]{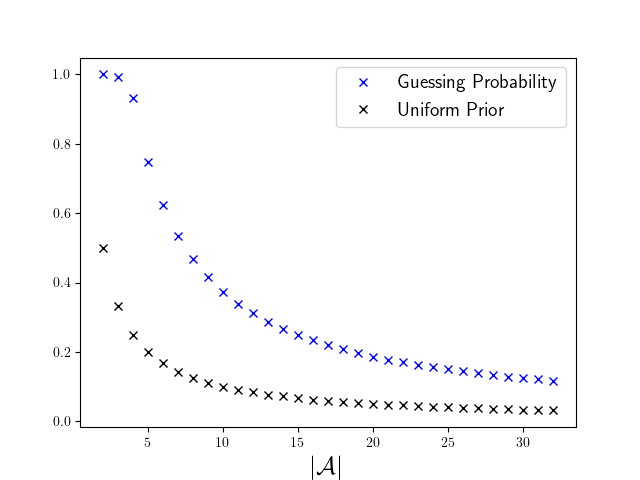}
\caption{The guessing probability for $D_{\text{calibr}}$ in a single round as the size of possible angles set $\mathcal{A}$ varies (shown in blue). The angles of $\mathcal{A}$ are taken to be evenly spaced. The probability of guessing the correct value for the angle off-set $\theta$ prior to interacting with the device is shown in black.}
\label{fig:calibr_meas_results}
\end{figure}

We now consider a simple example along the lines of the explanation above; we consider a scenario where the positive $Z$-orientation of the qubits within the device is known, but the positive $X$-axis is not (we assume a right-handed frame meaning that the positive $Y$-axis is determined if the positive $X$- and $Z$-axes are). In other words, we aim to learn $\theta$ in \Cref{eq:theta_adjusted} above. We assume $\theta$ can take values in a discrete set of angles which are evenly spaced between $0$ and $2\pi$ (this is in accordance with common methodology in finite precision quantum metrology \cite{meyer2023quantum}). We consider here a different example graph state to the previous subsection: we take a simple three-vertex linear graph (i.e. with vertices $V = \{1,2,3\}$ and edges $\{(1,2), (2,3)\}$), along with a single choice of gflow $g$ defined as $1 \mapsto \{2\}$ and $2 \mapsto \{3\}$, which gives the correction sets (only the non-empty such sets are shown):
\begin{align}
\mathcal{X}_{2} &= \{1\}; \\
\mathcal{X}_{3} &= \{2\};\\
\mathcal{Z}_{3} &= \{1\}.
\end{align}

For the purposes of writing down the comb $D_{\text{calibr}}$ for this example, we take the basis $\ket{\pm_{\text{meas}}}\!\!\bra{\pm_{\text{meas}}}$ as reference and write the graph state $\rho_{G}$ and the correction operators $U_{\text{corr}(\boldsymbol{c}),i}$ (recall \Cref{eq:U_corr_i}) with respect to this basis:
\begin{align}
\rho_{G}^{\theta} &:= R_{Z}^{\otimes n}(-\theta) \rho_{G}\left(R_{Z}^{\otimes n}(-\theta)\right)^{\dagger}; \\
U_{\text{corr}(\boldsymbol{c}),i}^{\theta} &:= \left(R_{Z}(-\theta)X_{Q_{i}}^{\bigoplus_{j \in \mathcal{X}_{i}} c_{j}}R_{Z}(-\theta)^{\dagger}\right) Z_{Q_{i}}^{\bigoplus_{j \in \mathcal{Z}_{i}} c_{j}}.
\end{align}

Denoting the comb defined from the $U_{\text{corr}(\boldsymbol{c}),i}^{\theta}$ in analogy to \Cref{eq:sigma_MBQC} as $\sigma_{\MBQC}^{\theta}$ and the (discrete) set of possible (regularly spaced) values for $\theta$ by $\mathcal{A}$, the comb of interest for this example is
\begin{align}
D_{\text{calibr}} := \sum_{\theta \in \mathcal{A}}P(\theta)\ket{\theta}\!\!\bra{\theta} \otimes \sigma_{\MBQC}^{\theta} \ast_{\boldsymbol{Q'}} \rho_{G}^{\theta}.
\end{align}

As per usual, we take the prior distribution over $\theta$ to be uniform: $P(\theta) = \frac{1}{|\mathcal{A}|}$. \Cref{fig:calibr_meas_results} presents the results for different sizes of $\mathcal{A}$, ranging from $|\mathcal{A}| = 2$ to $|\mathcal{A}| = 32$, where again the guessing probability is plotted instead of the min-entropy. For $|\mathcal{A}| = 2$, the correct direction can be known with certainty in a single round, and almost so for the case where $|\mathcal{A}| = 3$ ($P_{\text{guess}} \approx 0.992(5)$), but otherwise there is a steep decrease in the guessing probability as the size of $\mathcal{A}$ increases.


\subsection{Results: MBQC and Quantum Causal Models} \label{subsec:caus_inf}

\begin{figure*}
\begin{subfigure}{0.5\textwidth}
\centering
\includegraphics[width=0.9\textwidth]{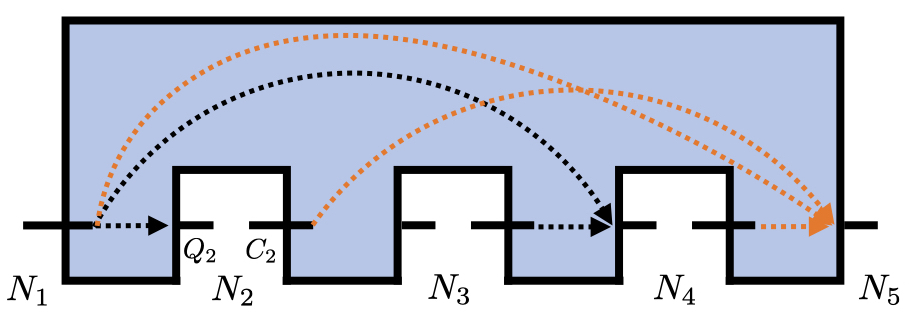}
\caption{Quantum Causal Model}
\label{fig:QCM_demo}
\end{subfigure}
\hfill
\begin{subfigure}{0.5\textwidth}
\centering
\includegraphics[width=0.5\textwidth]{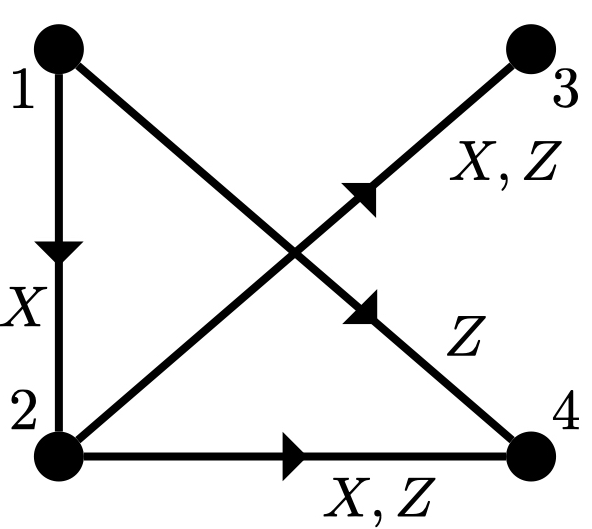}
\caption{DAG for $g_{1}$}
\label{fig:grey_box_gflow_example}
\end{subfigure}
\caption{Quantum causal models are similar to quantum combs but with extra structure. (a) The dotted arrows indicate which nodes $N_{i} = (\H_{Q_{i}},\H_{C_{i}})$ can influence which other nodes. A lack of an arrow connecting two nodes indicates the impossibility of influence, such as between $N_{2}$ and $N_{3}$. Any node at the tail of an arrow is called a parent of the node at its head. For example, the parents of $N_{5}$ are $N_{1}$, $N_{2}$ and $N_{4}$ as indicated by the orange arrows, and the corresponding map between them is denoted $\rho_{Q_{5}|C_{1}C_{2}C_{4}}$. The structure of the causal model is not made apparent by its representation as a comb alone (i.e. the blue operator without arrows). (b) Each QCM comes equipped with a directed acyclic graph, which is determined by the gflow in the examples considered here. The DAG for gflow $g_{1}$ is shown above, which measures qubits $1$ and $2$ in the $XY$-plane following the order $1<2$. The labels on the directed arrows depict the conditional correction operators, with the head and tail of the arrow denoting the target and control of the operation respectively.}
\label{fig:QCM_explanation}
\end{figure*}

To this point, we have represented each correction method specified by a gflow $g$ as a comb $\sigma_{\MBQC}^{g}$. However, this ignores a certain amount of structural information contained in the details of the gflow. For example, consider again the four qubit graph state shown in \Cref{fig:grey_box_graph}. We have already seen that this graph supports a number of different gflows, but for now let us consider just two of them: $g_{1}$, whose DAG of corrections is shown in \Cref{fig:grey_box_gflow_example}, and $g_{2}$, whose DAG of corrections is given in the appendices (see \Cref{fig:g2}). Both $\sigma_{\MBQC}^{g_{1}}$ and $\sigma_{\MBQC}^{g_{2}}$ are elements of the same set of combs, namely $\Comb(\mathbb{C} \rightarrow Q_{1}, C_{1} \rightarrow Q_{2}, C_{2} \rightarrow Q_{3,4})$, but internally there are a number of differences between them, as witnessed by their respective DAGs of corrections. For example, for $\sigma_{\MBQC}^{g_{1}}$, the input on $\H_{C_{1}}$ can have an influence on the output at $\H_{Q_{2}}$ but not $\H_{Q_{3}}$, where as in $\sigma_{\MBQC}^{g_{2}}$ the opposite is true. In our grey box device, these differences occur as different channels between different spaces and in certain circumstances, it is beneficial to represent these differences at the level of the operators $\sigma_{\MBQC}^{g_{i}}$.

To represent this extra structure, we can leverage the formalism of quantum causal models \cite{allen2017quantum,barrett2019quantum, costa2016quantum,oreshkov2012quantum, d2018causality}. Quantum causal models (QCMs) can be viewed as an extension of classical Bayesian networks within the field of classical causal modelling (see e.g. \cite{pearl2009causality, spirtes2000causation, scholkopf2012causal}). Where classical Bayesian networks consist of a set of interconnected conditional probability distributions, quantum causal models consist of a network of quantum channels. The motivation for developing a quantum version was due, at least in part, to the realisation that classical causal models are insufficient for treating quantum correlations, such as those that violate a Bell inequality \cite{wood2015lesson}.

A formal definition of a QCM is the following, where we have adapted the notation of the definition in \cite{barrett2019quantum} for consistency with the notation used throughout this section. Let $N_{i}$ denote a pair of Hilbert spaces $(\H_{Q_{i}}, \H_{C_{i}})$, with the possibility that one of the two Hilbert spaces be trivial. We refer to $N_{i}$ as a quantum node, and denote the set of all $Q$-labels and the set of all $C$-labels for all nodes as $\boldsymbol{Q}$ and $\boldsymbol{C}$ respectively. \Cref{fig:QCM_demo} demonstrates the distinction between a quantum causal model and a quantum comb by indicating the possibility of influence between different nodes.

\begin{Definition}[Definition 3.3, \cite{barrett2019quantum}]  A \textbf{quantum causal model} is given by a directed acyclic graph over quantum nodes $N_{1}, ..., N_{n}$ and for each node $N_{i}$, a quantum channel $\rho_{Q_{i}|\Pa(i)} \in \L(\H_{Q_{i}} \otimes \H_{\Pa(i)})$, (where $\Pa(i) \subseteq \boldsymbol{C}$ denotes the possibly empty set of $C$-labels of parent nodes to $N_{i}$ in the DAG) such that all channels mutually commute. This defines an operator
\begin{align*}
\sigma_{N_{1},...,N_{n}} := \prod_{i=1}^{n} \rho_{Q_{i}|\Pa(i)}.
\end{align*}
\end{Definition}

The operator $\sigma_{N_{1},...,N_{n}}$ is an element of $\Comb(\mathbb{C} \rightarrow Q_{\pi(1)}, C_{\pi(1)} \rightarrow Q_{\pi(2)}, ..., C_{\pi(n-1)} \rightarrow Q_{\pi(n)})$ for any permutation $\pi$ of $\{1,..., n\}$ such that $\pi(1) < \pi(2) < ... < \pi(n)$ is compatible with the topological ordering of the DAG. 

We now establish that the correction operators corresponding to a gflow are indeed QCMs before discussing some of the consequences of this fact.

\begin{Proposition} \label{prop:MBQC_QCM} For each gflow $g$, $\sigma_{\MBQC}^{g}$ is a quantum causal model.
\end{Proposition}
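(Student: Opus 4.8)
The plan is to check, one by one, the three defining ingredients of a quantum causal model (Definition 3.3 of \cite{barrett2019quantum}): an underlying DAG over the quantum nodes, a valid quantum channel at each node conditioned on its parents, and mutual commutation of those channels. The DAG is already supplied by \Cref{prop:acyclic}: taking the parent assignment $\Pa(i) = \mathcal{X}_{i} \cup \mathcal{Z}_{i}$, the first two conditions of \Cref{def:gflow} force $v' < i$ for every $v' \in \Pa(i)$, so every edge of the parent graph points forward in the gflow order and no directed cycle can form.

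Next I would confirm that each local operator $\rho_{A_{i}|C_{j : j \in \Pa(i)}, A'_{i}}$ of \Cref{eq:struc_qchannel} is a legitimate channel and that their product reconstructs $\sigma_{\MBQC}^{g}$. Positivity is transparent: for fixed $\boldsymbol{c}_{\Pa(i)}$ the sum over $a_{i}, b_{i}$ assembles $(U \otimes I)\,\Pi\,(U^{\dagger} \otimes I)$, where $U = U_{\text{corr}(\boldsymbol{c}_{\Pa(i)}), i}$ and $\Pi := \sum_{a_{i}, b_{i}}\ket{a_{i}}\!\!\bra{b_{i}}_{A_{i}} \otimes \ket{a_{i}}\!\!\bra{b_{i}}_{A'_{i}}$ is the rank-one (unnormalised maximally entangled) positive operator on $A_{i}A'_{i}$, so the whole expression is positive. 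Trace-preservation follows from $\Tr_{A_{i}}[U\ket{a_{i}}\!\!\bra{b_{i}}U^{\dagger}] = \delta_{a_{i} b_{i}}$, which collapses the partial trace over $A_{i}$ to the identity on the incoming registers $\bigotimes_{j \in \Pa(i)}\H_{C_{j}} \otimes \H_{A'_{i}}$. For the product I would isolate the two locality facts that make it work: $U_{\text{corr}(\boldsymbol{c}), i}$ depends only on the bits $c_{j}$ with $j \in \Pa(i)$ (in particular $i \notin \Pa(i)$, so $\rho_{A_{i}|C_{j : j \in \Pa(i)}, A'_{i}}$ never touches $C_{i}$), and the registers $A_{i}, A'_{i}$ are local to node $i$. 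Multiplying the factors, the diagonal projectors on each shared $C_{j}$ enforce a single consistent value of $c_{j}$, while message registers touched by no correction sum to the identity, so the product recombines exactly into \Cref{eq:sigma_MBQC}.

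The last and genuinely defining condition is mutual commutation, and here the diagonal structure does all the work. For $i \neq k$ the operators $\rho_{A_{i}|C_{j : j \in \Pa(i)}, A'_{i}}$ and $\rho_{A_{k}|C_{j : j \in \Pa(k)}, A'_{k}}$ act nontrivially only on the disjoint outgoing registers $A_{i}, A'_{i}$ versus $A_{k}, A'_{k}$, so their sole common support is on message registers $C_{j}$ with $j \in \Pa(i) \cap \Pa(k)$; on each such register both reduce to the diagonal projector $\ket{c_{j}}\!\!\bra{c_{j}}$, and diagonal operators in a shared basis commute. Extending both to the full space by identity, they therefore commute, which simultaneously discharges the commutation axiom and guarantees that the product $\prod_{i} \rho_{A_{i}|C_{j : j \in \Pa(i)}, A'_{i}}$ is order-independent and equal to $\sigma_{\MBQC}^{g}$. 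I expect the commutation itself to be immediate; the step demanding the most care is the product reconstruction, where one must fix the $\boldsymbol{c}$-locality of $U_{\text{corr}(\boldsymbol{c}), i}$ and the tensor factorisation of the $A$ and $A'$ registers before multiplying, since otherwise the bookkeeping on the shared classical registers can obscure whether the local factors really recombine into \Cref{eq:sigma_MBQC}.
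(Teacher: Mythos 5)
Your proof is correct and follows essentially the same route as the paper: the DAG is obtained from \Cref{prop:acyclic} with $\Pa(i) = \mathcal{X}_{i} \cup \mathcal{Z}_{i}$, the decomposition of $\sigma_{\MBQC}^{g}$ into the channels of \Cref{eq:struc_qchannel}, and mutual commutation from the observation that distinct factors overlap only on the classical message registers, on which they are diagonal. The only difference is that you explicitly verify positivity, trace preservation, and the product reconstruction, steps the paper treats as immediate; this added rigour does not constitute a different argument.
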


As explicitly demonstrated in \Cref{prop:acyclic}, each gflow $g$ (including the partial order) induces a DAG on the vertices of the corresponding graph $G$ (we have already seen examples of such DAGs, such as in \Cref{fig:grey_box_gflow_example}). For each $i \in V$, the parents of $i$ are given by the union of the correction sets defined in \Cref{eq:X_Z_correction_sets}:
\begin{align}
\Pa(i) := \mathcal{X}_{i} \cup \mathcal{Z}_{i}.
\end{align}

We then define:
\begin{align}
\rho_{Q_{i}|C_{\Pa(i)}, Q'_{i}} := \sum_{\substack{a_{i}, b_{i} \\ \boldsymbol{c}_{\Pa(i)}}}& U_{\text{corr}(\boldsymbol{c}_{\Pa(i)}), i} \ket{a_{i}}\!\!\bra{b_{i}}_{Q_{i}}U_{\text{corr}(\boldsymbol{c}_{\Pa(i)}), i}^{\dagger} \nonumber\\
&\quad \quad \otimes \ket{\boldsymbol{c}_{\Pa(i)}a_{i}}\!\!\bra{\boldsymbol{c}_{\Pa(i)}b_{i}}_{C_{\Pa(i)},Q'_{i}} \label{eq:struc_qchannel}
\end{align}

with $\boldsymbol{c}_{\Pa(i)}$ denoting the classical measurement outcomes pertaining to the elements of $\Pa(i)$. The channel $\rho_{Q_{i}|C_{\Pa(i)}, Q'_{i}}$ has a very similar structure to \Cref{eq:sigma_MBQC}, but contains only the relevant spaces and corrections pertaining to $Q_{i}$. It is immediate to see that the $\rho_{Q_{i}|C_{\Pa(i)}, Q'_{i}}$ commute since the only overlap between any two such channels occurs on (products of) Hilbert spaces $\H_{C_{i}}$, where the operators are diagonal. The operator $\sigma_{\MBQC}^{g}$ as defined in \Cref{eq:sigma_MBQC} is readily seen as equivalent to the product of channels $\rho_{Q_{i}|C_{j : j \in \Pa(i)}, Q'_{i}}$
\begin{align}
\sigma_{\MBQC}^{g} = \prod_{i = 1}^{n} \rho_{Q_{i}|C_{\Pa(i)}, Q'_{i}}.
\end{align}

This result is significant since the connection between MBQC and QCMs has potential utility for both fields. For example, ongoing research within the QCM community is concerned with investigating the causal structure of unitary channels \cite{lorenz2021causal,ormrod2023causal}. Since MBQC provides a distinct perspective of unitary transformations than that of the circuit model, the above causal interpretation of gflow may prove beneficial for progress on questions surrounding causal representations of such transformations. These considerations are disjoint from the comb min-entropy approach taken in this work, and so we pursue this perspective elsewhere. Instead we turn to a final example which includes both the min-entropy approach and causal perspective developed here.


\subsubsection{Causal Discovery Example} \label{subsec:caus_inf_example}

As stated previously, there are $15$ different gflows for the four-qubit graph state considered throughout this section. Using the above perspective, that means that there are $15$ different causal structures that might be implemented within the grey box device. In \Cref{subsec:noise}, we considered a restricted example where the aim was to distinguish between just four of the $15$ gflows corresponding to a given measurement plane and partial order. Here, we consider the more general case and discuss the subtleties regarding incompatible orders that were skipped above.

In \Cref{sec:app_grey_box_examples}, the $15$ different gflows are catalogued and grouped according to which measurement plane is assigned to the second qubit (i.e. three groups, $5$ gflows in each). Within each group, four gflows are compatible with the partial order where the first qubit is measured before the second, and one gflow is not. For example, the gflow $g_{3}$ (see \Cref{sec:app_grey_box_examples} for labelling) is incompatible with the partial order $1<2$ and consequently, the operator $\sigma_{\MBQC}^{g_{3}}$ on $\L(\H_{Q_{1}} \otimes \H_{C_{1}} \otimes \H_{Q_{2}} \H_{C_{2}} \otimes \H_{Q_{3,4}})$ is \textit{not} an element of $\Comb(\mathbb{C} \rightarrow Q_{1}, C_{1} \rightarrow Q_{2}, C_{2} \rightarrow Q_{3,4})$ since the output at $Q_{1}$ depends on the input at $C_{2}$. Thus, if we consider the operator
\begin{align}
D_{\text{gflow}} := \sum_{j = 1}^{15} P(g_{j})\ket{j}\!\!\bra{j} \otimes \sigma_{\MBQC}^{g_{j}} \ast_{\boldsymbol{Q'}} \rho_{G}.
\end{align}

where no modifications have been made to the $\sigma_{\MBQC}^{g_{j}}$ corresponding to the incompatible gflows, then $D_{\text{gflow}}$ also fails to be a comb. To rectify this, we consider
\begin{align}
\widehat{D}_{\text{gflow}} := \sum_{j = 1}^{15} P(g_{j})\ket{j}\!\!\bra{j} \otimes \widehat{\sigma}_{\MBQC}^{g_{j}} \ast_{\boldsymbol{Q'}} \rho_{G}
\end{align}

where
\begin{align}
\widehat{\sigma}_{\MBQC}^{g_{j}} = \begin{cases} \sigma_{\MBQC}^{g_{j}}, \text{ if } j \neq 3,8,12 \\ \text{SWAP}_{Q'_{1},Q'_{2}} \sigma_{\MBQC}^{g_{j}} \text{SWAP}_{Q'_{1},Q'_{2}}^{\dagger}, \text{ otherwise} \end{cases}.
\end{align}

The notation $\text{SWAP}_{Q'_{1},Q'_{2}}$ denotes a swap operation between incoming graph states qubits $Q'_{1}$ and $Q'_{2}$. Accordingly, $\widehat{D}_{\text{gflow}}$ is a valid comb, and obtains the min-entropy value of around $-\log(0.373) \approx 1.423$ bits.

If we remove the incompatible gflows ($g_{3}, g_{8}, g_{12}$) and consider instead the operator (which is a valid comb)
\begin{align}
D_{\text{gflow}, 1<2} := \sum_{j \neq 3,8,12} P(g_{j})\ket{j}\!\!\bra{j} \otimes \sigma_{\MBQC}^{g_{j}} \ast_{\boldsymbol{Q'}} \rho_{G}
\end{align}

then we obtain the min-entropy value of $-\log(0.250) = 2$ bits, which is consistent with the results for $D_{\text{planes}}$ and $D_{XY, 1<2}$ above (i.e. that it is possible to learn the measurement plane with certainty but nothing more in the noiseless case).

To conclude, we place the above analysis into context within current literature on quantum causal models and more broadly within quantum information theory. Previously, investigations have been conducted into the distinguishability of quantum causal structures, however these focused almost entirely on structures on two qubits only \cite{ried2015quantum,fitzsimons2015quantum,kubler2018two,hu2018discrimination}. One exception to this is Ref. \cite{chiribella2019quantum} which investigated causal scenarios involving more systems but focused on a question of identification of which system from a list was influenced by a given cause. To the best of our knowledge, the above example is the first related to distinguishing between truly multi-partite quantum causal structures. 

The above example can also be seen from a slightly different perspective, one of discriminating between quantum networks. Recently, Ref. \cite{hirche2023quantum} established a number of results regarding the discrimination of networks, i.e. combs, in the context of quantum hypothesis testing. These results are phrased in terms of a variety of divergences suitably generalised to treat quantum combs. Since the min-entropy is closed related to the max divergence (see e.g., \cite{tomamichel2015quantum}), the above example can also be understood from the point of view of symmetric hypothesis testing (we develop this view elsewhere).


\section{Discussion} \label{sec:discussions}

By interacting with a system, possibly via some complicated sequence of actions, it is possible to learn an unknown property that determines the evolution of the system. By modelling such a situation as classical-quantum combs, it is possible to leverage the comb min-entropy \cite{chiribella2016optimal} to quantify how much can be learnt about the unknown property in the best case scenario. Due to the general, and also natural, modelling of both quantum and classical interactions as combs of the form considered above, the methodology showcased here has broad applicability.

In this work, we restricted our attention to a novel set of combs defined by the paradigm of measurement-based quantum computation. In particular, we defined a classical comb which models a specific BQC protocol \cite{mantri2017flow} and by so doing, gave a proof of partial security based on the comb min-entropy for both a single round of the protocol as well as the multi-round case, extending the security analysis in the existing literature.

We further defined a series of combs that model interactions with an MBQC device under varying levels of knowledge regarding the inner working of the device. For different choices of classical variable representing different aspects related to the functionality of the device, we investigated how well a user could learn the necessary information to properly use the device, including in the presence of noise. Additionally, we established a novel connection between MBQC and quantum causality, which allowed the comb min-entropy to be used to quantify the optimal causal discovery in the associated examples.


\subsection{Limitations of the Min-Entropy Approach} \label{subsec:limitations}

Despite the broad applicability and operational meaning of the methodology used in this work, there are certain limitations of which one should be made aware. Since quantum combs are operators on the tensor product of many Hilbert spaces, their dimensionality becomes quite large. Consequently, when using a numerical SDP solver to compute the comb min-entropy, size issues quickly start to play a role. For example, in the three-qubit graph state example for the BQC protocol in \Cref{subsec:single_round_thm}, as a square matrix, the operator $D_{\client}^{(m)}$ has dimension $|\mathcal{A}|^{3 + 3m}2^{3m}$ where $m$ is the number of rounds. For the minimal possible choice of angle set, $|\mathcal{A}| = 4$ and for a single round, this already equals $32768$. During the numerical analysis, it was observed that much of the memory and time cost involved in running the SDP solver arose from establishing the necessary comb constraints defining the problem (see \Cref{sec:app_SDP_implementation} for more details). It is likely that there exist improvements to these costs from optimising the software implementation and representation of these constraints.


\subsection{Future Work}

As mentioned above, the generality of the combs framework ensures that the methodology introduced in this work can be applied in a wide variety of contexts. We conclude by outlining an example avenue for future investigation. One feature of many SDP solvers that was not utilised in this work, regards the possibility to return an optimal solution instance (and its dual) along with the optimal value (e.g. the min-entropy) for the problem at hand. Since either the primal or dual solution (depending on the implementation) will be a matrix representing a strategy for interacting with the system in question, it would be interesting to analyse these matrices in order to infer what the corresponding strategy would be. For example, the optimal solutions of the SDP for the varying values of $\theta$ in the measurement calibration example (\Cref{subsec:meas_calibration}) could help identify a type of quantum instrument or series of instruments that are optimal for this type of parameter estimation in general.


\section{Acknowledgements}

We would like to thank Atul Mantri for useful discussions about the security of the BQC protocol of \cite{mantri2017flow}, Simon Milz for insightful comments regarding quantum combs, Joshua Morris for invaluable suggestions regarding semi-definite programming resources, and Sofiene Jerbi for useful conversations regarding gflow and MBQC. We would also like to thank an anonymous reviewer for many helpful suggestions that improved the manuscript. We acknowledge support from the Austrian Science Fund (FWF) through DK-ALM: W$1259$-N$27$ and SFB BeyondC F$7102$, and from the University of Innsbruck. This work was also co-funded by the European Union (ERC, QuantAI, Project No. $101055129$). Views and opinions expressed are however those of the author(s) only and do not necessarily reflect those of the European Union or the European Research Council. Neither the European Union nor the granting authority can be held responsible for them.

\printbibliography

\onecolumn

\newpage

\appendix


\section{Quantum Combs Supplementary} \label{subsec:app_qcombs} 

In this section, we motivate the definition of quantum comb given in the main text (\Cref{def:qcomb}), including why this is an appropriate representation of quantum networks and what the conditions of the definition correspond to. We provide these explanations here for completeness; they are drawn from references such as \cite{chiribella2008quantum, chiribella2009theoretical} which should be referred to for further details.

The definition of quantum comb given earlier is as an operator on a tensor product of Hilbert spaces, which represents a series of connected quantum channels. This correspondence between operators on a tensor product of spaces and maps between these spaces, that is, between elements of $\L(\H_{A} \otimes \H_{B})$ and $\L(\L(\H_{A}), \L(\H_{B}))$ respectively, makes use of the Choi-Jamiolkowski isomorphism \cite{choi1975completely,jamiolkowski1972linear}. Since we are most interested in the operator form, we make the following definition.

\begin{Definition} \label{def:Choi} For a linear map $\mathcal{D} : \L(\H_{A}) \rightarrow \L(\H_{B})$, its \textbf{Choi operator} is defined as:
\begin{equation}
D := (\dim A)(\mathcal{D} \otimes I_{\H_{A}})( \ket{\Phi^{+}}\!\!\bra{\Phi^{+}}_{A'A}).
\end{equation}
Here, $\H_{A'} \simeq \H_{A}$ is a copy of the input space, and $\ket{\Phi^{+}} := \frac{1}{\sqrt{\dim A}}\sum_{i = 0}^{d_{A} -1} \ket{i i}_{A'A}$ is a maximally entangled state on $\H_{A'} \otimes \H_{A}$.
\end{Definition}

The defining properties of CPTP maps correspond to properties of the Choi operator:

\begin{Lemma} \label{lem:ChoiVersusChannel} The Choi operator $D$ of a linear map $\mathcal{D}$ satisfies the following properties:
\begin{enumerate}
	\item $D$ is positive semi-definite, denoted $D \geq 0$, iff $\mathcal{D}$ is completely positive;
	\item $D$ is Hermitian iff $\mathcal{D}$ is Hermitian-preserving;
	\item $\Tr_{B}\left[D \right] \leq I_{A}$ iff $\mathcal{D}$ is trace non-increasing;
	\item $\Tr_{B}\left[D \right] = I_{A}$ iff $\mathcal{D}$ is trace-preserving. 
\end{enumerate}
\end{Lemma}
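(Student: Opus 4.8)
The plan is to carry out every part from a single expansion of the Choi operator in a fixed computational basis. Writing $\ket{\Phi^{+}}\bra{\Phi^{+}} = \sum_{i,j} \ket{i}\bra{j}_{A'} \otimes \ket{i}\bra{j}_{A}$ and applying $\mathcal{E}$ to the $A$-factor gives
\begin{align}
E = \sum_{i,j} \mathcal{E}\!\left(\ket{i}\bra{j}\right)_{B} \otimes \ket{i}\bra{j}_{A'},
\end{align}
where $\H_{A'} \cong \H_{A}$ is identified via the chosen basis. Alongside this I would record the reconstruction formula $\mathcal{E}(\rho) = \Tr_{A'}\!\left[(I_{B} \otimes \rho^{T})E\right]$, verified by substituting the expansion and using $\bra{j}\rho^{T}\ket{i} = \rho_{ij}$. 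These two identities are the only tools needed.

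Properties (2), (3) and (4) then reduce to direct computation. For Hermiticity, taking the adjoint of the expansion and invoking Hermitian-preservation in the form $\mathcal{E}(\ket{i}\bra{j})^{\dagger} = \mathcal{E}(\ket{j}\bra{i})$, then relabelling $i \leftrightarrow j$, yields $E^{\dagger} = E$; the converse follows from the reconstruction formula, since $E^{\dagger}=E$ forces $\mathcal{E}(\rho)^{\dagger} = \mathcal{E}(\rho^{\dagger})$. For the trace conditions, the partial trace over $B$ is
\begin{align}
\Tr_{B}[E] = \sum_{i,j} \Tr\!\left[\mathcal{E}(\ket{i}\bra{j})\right] \ket{i}\bra{j}_{A'} =: M,
\end{align}
whose matrix elements satisfy $\Tr[\mathcal{E}(\rho)] = \Tr[\rho\, M^{T}]$ for all $\rho$. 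Trace preservation, $\Tr[\mathcal{E}(\rho)] = \Tr[\rho]$ for all $\rho$, is therefore equivalent to $M^{T} = I$, i.e. $\Tr_{B}[E] = I_{A}$; and trace non-increase on positive inputs, $\Tr[\mathcal{E}(\rho)] \leq \Tr[\rho]$ for all $\rho \geq 0$, is equivalent to $M^{T} \leq I$, hence to $\Tr_{B}[E] \leq I_{A}$ (here one uses that $M$ is Hermitian, which holds in the regime of interest by part (1), so that transposition preserves the spectrum).

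The crux is the reverse direction of (1). The forward implication is immediate: if $\mathcal{E}$ is completely positive then $\mathcal{E} \otimes I_{A'}$ is a positive map and $\ket{\Phi^{+}}\bra{\Phi^{+}} \geq 0$, so $E \geq 0$. The converse is the content of Choi's theorem and is where the real work lies. Given $E \geq 0$, I would take a positive decomposition $E = \sum_{k} \ket{v_{k}}\bra{v_{k}}$ and vectorise each eigenvector as $\ket{v_{k}} = (V_{k} \otimes I_{A'})\ket{\Phi^{+}}$, thereby defining Kraus operators $V_{k}: \H_{A} \rightarrow \H_{B}$. Substituting this decomposition into the reconstruction formula and simplifying then gives $\mathcal{E}(\rho) = \sum_{k} V_{k}\, \rho\, V_{k}^{\dagger}$, which is manifestly completely positive. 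The main obstacle is thus purely this reverse direction: establishing the operator--vector correspondence $\ket{v_{k}} \leftrightarrow V_{k}$ and checking that the reconstruction formula converts the positive decomposition of $E$ into a valid Kraus representation. The positivity and trace parts of the remaining three claims are then routine consequences of the same basis expansion.
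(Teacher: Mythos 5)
Your proof is correct, but there is no in-paper argument to compare it against: the paper states this lemma in its appendix as a standard fact, drawn from the quantum-combs literature (Chiribella et al.), and gives no proof of its own. Your argument is the standard textbook one — the computational-basis expansion of the Choi operator, the reconstruction formula $\mathcal{E}(\rho) = \Tr_{A'}\!\left[(I_{B}\otimes\rho^{T})E\right]$, and, for the nontrivial direction of item (1), Choi's theorem via the operator--vector correspondence $\ket{v_{k}} = (V_{k}\otimes I)\ket{\Phi^{+}}$ turning a positive decomposition of $E$ into a Kraus representation — and every step checks out.

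One small point of hygiene in item (3): you justify the Hermiticity of $M = \Tr_{B}[E]$ ``by part (1),'' but no appeal to complete positivity is needed, and invoking it would narrow the claim, since the lemma is stated for arbitrary linear maps. The trace non-increase hypothesis itself does the job: for the inequality $\Tr[\mathcal{E}(\rho)]\leq\Tr[\rho]$ to hold for every $\rho\geq 0$, the quantity $\bra{\psi}(I-M^{T})\ket{\psi}$ must be real and non-negative for every $\ket{\psi}$, which forces $M^{T}$ (hence $M$) to be Hermitian and gives $M^{T}\leq I$, i.e. $\Tr_{B}[E]\leq I_{A}$, directly; conversely, the operator inequality $\Tr_{B}[E]\leq I_{A}$ already presupposes that $I_{A}-M$ is positive semi-definite and hence Hermitian. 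With that substitution, all four equivalences hold at the stated level of generality.
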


One notices that the first and last items of the above lemma feature in \Cref{def:qcomb}, however iteratively for the latter item. To arrive at the iterative constraints, consider a simple example of two CPTP maps $\mathcal{D}_{1}: \L(\H_{A_{1}}) \rightarrow \L(\H_{B_{1}} \otimes \H_{C})$ and $\mathcal{D}_{2}: \L(\H_{A_{2}} \otimes \H_{C}) \rightarrow \H_{B_{2}}$ with Choi operators $D_{1}$ and $D_{2}$ respectively. The composition of the two maps over the space $H_{C}$, denoted $\mathcal{D}_{2} \circ_{C} \mathcal{D}_{1}$ is a linear operator from $\L(\H_{A_{1}} \otimes \H_{A_{2}})$ to $\L(\H_{B_{1}} \otimes \H_{B_{2}})$, and the corresponding Choi operator is given by
\begin{align}
D = D_{1} \ast_{C} D_{2}
\end{align}

where $\ast_{C}$ denotes the link product over the space $C$, which is the analogue of composition in the Choi operator picture. By tracing over $B_{2}$, we get
\begin{align}
\Tr_{B_{2}}\left[D\right] &= \Tr_{B_{2}}\left[D_{1} \ast_{C} D_{2}\right] \\
&= D_{1} \ast_{C} \Tr_{B_{2}}\left[D_{2}\right] \\
&= D_{1} \ast_{C} (I_{C} \otimes I_{A_{2}}) \\
&= I_{A_{2}} \otimes \Tr_{C}\left[D_{1}\right]
\end{align} 

where we have used the trace-preservation criterion of \Cref{lem:ChoiVersusChannel}. Noting that $\Tr_{C}\left[D_{1}\right]$ is a positive semi-definite operator, we see that the last line above is indeed of the form of the constraints in \Cref{def:qcomb}. By tracing also over $B_{1}$, we can use the trace-preservation again to obtain the terminal constraint of normalisation to $1$. This reasoning extends to any number of composed maps, which helps demonstrate the conciseness of the comb notation.

The remainder of this appendix subsection provides the proof of \Cref{prop:X_A_independence} in the main text. The proof considers a classical-quantum comb $D$ as in \Cref{eq:X_indexed_comb}.

\begin{proof}[Proof of \Cref{prop:X_A_independence}] The proof consists of showing that the sequence of partial trace constraints (\Cref{def:qcomb}) are satisfied for both the sequence where $X$ is traced over first then $B_{n}, ..., B_{1}$ and the sequence $B_{n}, ..., B_{1}, X$. Starting with the former, tracing over $X$ gives
\begin{align}
\Tr_{X}[D] = \sum_{x} P(x)\sigma_{x}  \equiv I_{\mathbb{C}} \otimes \sum_{x} P(x)\sigma_{x}.
\end{align}

Since the $\sigma_{x}$ are normalised combs, and since the set of normalised combs is convex, $\sum_{x} P(x)\sigma_{x}$ is also a normalised comb and so the remaining trace conditions are satisfied, thus showing that $D \in \Comb(A_{1} \rightarrow B_{1}, ..., A_{n} \rightarrow B_{n}, \mathbb{C} \rightarrow X)$. 

For the other sequence, define a sequence of operators $D_{k}$, $k = 0, ..., n+1$ via
\begin{align}
D_{k} &:= \sum_{x} P(x)\ket{x}\!\!\bra{x} \otimes C_{x, k-1} \quad \quad \forall k = 1, ..., n+1 \\
D_{0} &:= \sum_{x} P(x)
\end{align}

where $C_{x, 0}, ..., C_{x, n}$ denote the positive semi-definite operators that satisfy the comb conditions for $\sigma_{x}$, that is:
\begin{align}
\sigma_{x} &= C_{x, n}; \\
\Tr_{B_{j}}[C_{x, j}] &= I_{A_{j}} \otimes C_{x, j-1} \quad \quad \forall j = 1, ..., n; \\
C_{x, 0} &= 1.
\end{align}

It follows that $D_{n+1} = D$ and that 
\begin{align}
\Tr_{B_{k-1}}\left[D_{k}\right] &= \sum_{x} P(x)\ket{x}\!\!\bra{x} \otimes \Tr_{B_{k-1}}[C_{x, k-1}] \\
&= \sum_{x} P(x)\ket{x}\!\!\bra{x} \otimes I_{A_{k-1}} \otimes C_{x, k-2} \\
&= I_{A_{k-1}} \otimes D_{k-1}
\end{align}

for all $k = 2, ..., n+1$. For the remaining cases, we have that  
\begin{align}
\Tr_{X}[D_{1}] &= \Tr_{X}\left[\sum_{x} P(x)\ket{x}\!\!\bra{x} \otimes 1  \right] \\
&= \sum_{x} P(x) \\
&= I_{\mathbb{C}} \otimes D_{0}
\end{align}

and that $D_{0} = 1$. Thus, $D$ is also in $\Comb(\mathbb{C} \rightarrow X, A_{1} \rightarrow B_{1}, ..., A_{n} \rightarrow B_{n})$.
\end{proof}


\section{Comb Min-Entropy Supplementary} \label{sec:app_min_entropy}

In \Cref{subsec:comb_min_ent_results}, we presented some results related to the min-entropy of classical-quantum combs, namely \Cref{lem:min_entropy_non_increasing} and \Cref{prop:bayesian_classical_comb}. The proofs of these results, and some supporting discussion, use a different (but equivalent) form for the comb min-entropy, which we now establish.

In the main text, the comb min-entropy for $D \in \Comb(A_{1} \rightarrow B_{1}, ... A_{n} \rightarrow B_{n})$ was given as
\begin{align}
H_{\min}(B_{n}|A_{1},B_{1},...,A_{n-1},B_{n-1})_{D} &:= -\log \left[ \max_{E} D \ast E \right].
\end{align}

where the link product is over all $A_{1},B_{1},...,A_{n},B_{n}$. However, due to the strong duality of this SDP, it is possible to write the min-entropy equivalently as \cite{chiribella2016optimal}:
\begin{align}
H_{\min}(B_{n}|A_{1},B_{1},...,A_{n-1},B_{n-1})_{D} &= -\log \left[ \min_{\Gamma} \min \{\lambda \in \mathbb{R} : I_{A_{n}B_{n}} \otimes \lambda\Gamma \geq D \} \right] \label{eq:comb_min_entropy_min} 
\end{align}

where $\Gamma \in \Comb(A_{1} \rightarrow B_{1}, ..., A_{n-1} \rightarrow B_{n-1})$. Since $\lambda \leq 1$, it is sometimes convenient to consider the $\lambda$ and $\Gamma$ together as a single operator, namely, as a sub-normalised comb. In this form, we write the min-entropy as
\begin{align}
H_{\min}(B_{n}|A_{1},B_{1},...,A_{n-1},B_{n-1})_{D} &=  -\log\left[ \min_{\widehat{\Gamma} \text{ s.t. } I_{A_{n}B_{n}} \otimes \widehat{\Gamma} \geq D} \frac{1}{\prod_{j=1}^{n-1}\dim A_{j}} \Tr[\widehat{\Gamma}] \right] \label{eq:min_ent_unnormed}
\end{align}

where $\widehat{\Gamma}$ is a sub-normalised comb on the same space as $\Gamma$. This latter form was found to be particularly amenable to implementation in code - see \cite{Smith_Min_Entropy_and_MBQC}.

The following is the proof of \Cref{lem:min_entropy_non_increasing} which states that the min-entropy for multi-round combs, i.e. of the form $D^{(m)}$, is non-increasing as the number of rounds $m$ increases.

\begin{proof}[Proof of \Cref{lem:min_entropy_non_increasing}] Let $\lambda \in \mathbb{R}$ and $\Gamma \in \Comb\left(A_{1}^{(1)} \rightarrow B_{1}^{(1)}, ..., A_{n}^{(m)} \rightarrow B_{n}^{(m)}  \right)$ be such that $I_{X} \otimes \lambda \Gamma \geq D^{(m)}$ and that $H_{\min}(X|A_{1}^{(1)},B_{1}^{(1)},...,A_{n}^{(m)},B_{n}^{(m)})_{D^{(m)}} = -\log(\lambda)$. Let $\Gamma' \in \Comb\left(A_{1}^{(1)} \rightarrow B_{1}^{(1)}, ..., A_{n}^{(l)} \rightarrow B_{n}^{(l)}  \right)$ be given by
\begin{align}
\Gamma' := \frac{1}{\prod_{t = l+1}^{m} \dim \boldsymbol{A}^{(t)}} \Tr_{\boldsymbol{A}^{(l+1)}, \boldsymbol{B}^{(l+1)}...\boldsymbol{A}^{(m)}, \boldsymbol{B}^{(m)}}\left[\Gamma \right]
\end{align}

where $\dim \boldsymbol{A}^{(t)} := \prod_{k = 1}^{n}\dim A_{k}^{(t)}$ and the subscripts in the trace indicate that the trace is over all subspace related to $\sigma_{x}^{(j)}$ for $j > l$, i.e. $\H_{A_{1}}^{(l+1)}, \H_{B_{1}}^{(l+1)}, ..., \H_{B_{n}}^{(l+1)}, \H_{A_{1}}^{(l+2)},...,\H_{B_{n}}^{(m)}$. 

It remains only to show that $I_{X} \otimes \lambda \Gamma' \geq D^{(l)}$. It suffices to show that 
\begin{align}
\frac{1}{\prod_{t = l+1}^{m} \dim \boldsymbol{A}^{(t)}} \Tr_{\boldsymbol{A}^{(l+1)}, \boldsymbol{B}^{(l+1)}...\boldsymbol{A}^{(m)}, \boldsymbol{B}^{(m)}}\left[D^{(m)}\right] = D^{(l)}.
\end{align}

Starting from the left-hand side:
\begin{align}
\frac{1}{\prod_{t = l+1}^{m} \dim \boldsymbol{A}^{(t)}} \Tr_{\boldsymbol{A}^{(l+1)}, \boldsymbol{B}^{(l+1)}...\boldsymbol{A}^{(m)}, \boldsymbol{B}^{(m)}}\left[ D^{(m)}\right] &= \frac{1}{\prod_{t = l+1}^{m} \dim \boldsymbol{A}^{(t)}} \sum_{x} P(x)\ket{x}\!\!\bra{x} \otimes \Tr_{\boldsymbol{A}^{(l+1)}, \boldsymbol{B}^{(l+1)}...\boldsymbol{A}^{(m)}, \boldsymbol{B}^{(m)}}\left[ \bigotimes_{j = 1}^{m} \sigma_{x}^{(j)} \right] \\
&= \frac{1}{\prod_{t = l+1}^{m} \dim \boldsymbol{A}^{(t)}} \sum_{x} P(x)\ket{x}\!\!\bra{x} \otimes \left( \prod_{t = l+1}^{m} \dim \boldsymbol{A}^{(t)} \right) \bigotimes_{j = 1}^{l} \sigma_{x}^{(j)} \\
&= D^{(l)}
\end{align}

where we have used the fact that each $\sigma_{x}^{(j)}$ is a comb, so $\Tr[\sigma_{x}^{(j)}]$ is equal to the product of input space dimensions by definition. So, it holds that $I_{X} \otimes \lambda \Gamma' \geq D^{(l)}$, which entails that
\begin{align}
H_{\min}(X|A_{1}^{(1)},B_{1}^{(1)},...,A_{n}^{(l)},B_{n}^{(l)})_{D^{(l)}} \geq -\log(\lambda) = H_{\min}(X|A_{1}^{(1)},B_{1}^{(1)},...,A_{n}^{(m)},B_{n}^{(m)})_{D^{(m)}}
\end{align}

thus proving that the min-entropy is non-increasing with increasing round number.
\end{proof}

In \Cref{subsec:comb_min_ent_results}, we also considered combs of the form
\begin{align}
D = \sum_{x}P(x)\ket{x}\!\!\bra{x} \otimes \sigma_{x}
\end{align}

where all the $\sigma_{x}$ are diagonal in the same basis. The ultimate aim was to provide an interpretation of the guessing probability, $P_{\text{guess}}(X| \boldsymbol{A}, \boldsymbol{B})$, in terms of Bayesian updating since such an interpretation exists for the state min-entropy (see eg., \cite[Section 6.1.4,][]{tomamichel2015quantum}). For completeness, we present this interpretation here and discuss the similarities and differences with the combs case.

Let $D$ be a classical-quantum state on $\H_{X} \otimes \H_{Y}$ (i.e. $D \in \Comb(\mathbb{C} \rightarrow Y, \mathbb{C} \rightarrow X)$) such that each $\sigma_{x} \in \L(\H_{Y})$ is diagonal in the same basis $\{\ket{y}\}$. We can then write
\begin{align}
\sigma_{x} = \sum_{y} P(y|x)\ket{y}\!\!\bra{y}_{Y}
\end{align}

where $P(y|x)$ is a conditional probability distribution. We can thus write
\begin{align}
D = \sum_{x,y}P(x)P(y|x)\ket{xy}\!\!\bra{xy}_{XY}.
\end{align}

Applying Bayes' rule, it follows that
\begin{align}
D = \sum_{x,y}P(x|y)P(y)\ket{xy}\!\!\bra{xy}_{XY}.
\end{align}

By maximising over $x$ for each $y$, we obtain the inequality:
\begin{align}
D &\leq \sum_{x,y} \left[\max_{\tilde{x}}P(\tilde{x}|y)P(y) \right]\ket{xy}\!\!\bra{xy}_{XY} \\
&= I_{X} \otimes \sum_{y}\max_{\tilde{x}}P(\tilde{x}|y)P(y)\ket{y}\!\!\bra{y}_{Y}. \label{eq:classical_unnormed_state}
\end{align}

The second tensor factor above is an (in general) unnormalised state on $Y$, and moreover it can be shown that this unnormalised state is a minimal such state $\rho_{Y}$ for which $D \leq I_{X} \otimes \rho_{Y}$ holds (this follows from the proof of the left-hand inequality of \Cref{prop:bayesian_classical_comb}). Thus, using the unnormalised version of the min-entropy (recall \Cref{eq:min_ent_unnormed}), we arrive at
\begin{align}
P_{\text{guess}}(X|Y)_{D} = \sum_{y} \max_{\tilde{x}}P(\tilde{x}|y)P(y)
\end{align}

and hence also at the desired interpretation of the guessing probability in terms of Bayesian updating: the guessing probability is the maximal Bayesian update for each interaction (denoted by $y$) averaged over all possible interactions. Clearly, for $P_{\text{guess}}$ to take value $1$, we must have perfect updates for \textit{every} interaction (in the support of $P(y)$).

It is worthwhile emphasising here that, in the present case, the states $\sigma_{x}$ can be considered as combs with trivial input spaces, ie. $\sigma_{x} \in \Comb(\mathbb{C} \rightarrow Y)$, and hence we are guaranteed that $\sum_{y}\max_{x}P(x|y)P(y)\ket{y}\!\!\bra{y}$ is an unnormalised state. For the general case where the $\sigma_{x}$ are classical combs with non-trivial input spaces, we have no analogous guarantee as we will now discuss.

In the main text, we used the conditional Bayes' rule and an independence condition to write the classical-classical comb $D$ as
\begin{align}
D = \sum_{x,\boldsymbol{a},\boldsymbol{b}} P(x|\boldsymbol{a},\boldsymbol{b}) P(\boldsymbol{b}|\boldsymbol{a})\ket{x,\boldsymbol{a},\boldsymbol{b}}\!\!\bra{x,\boldsymbol{a},\boldsymbol{b}}_{X,\boldsymbol{A},\boldsymbol{B}}
\end{align}

Similarly to above, we can obtain the inequality:
\begin{align}
D \leq I_{X} \otimes \sum_{\boldsymbol{a}, \boldsymbol{b}} \max_{x}P(x|\boldsymbol{a},\boldsymbol{b})P(\boldsymbol{b}|\boldsymbol{a}) \ket{\boldsymbol{a},\boldsymbol{b}}\!\!\bra{\boldsymbol{a},\boldsymbol{b}}{\boldsymbol{A},\boldsymbol{B}}
\end{align}

However, unlike above, we are not guaranteed that the second tensor factor is an unnormalised classical comb: $\sum_{\boldsymbol{a}, \boldsymbol{b}} \max_{x}P(x|\boldsymbol{a},\boldsymbol{b})P(\boldsymbol{b}|\boldsymbol{a}) \ket{\boldsymbol{a},\boldsymbol{b}}\!\!\bra{\boldsymbol{a},\boldsymbol{b}}$ may fail the required marginalisation conditions due to the dependence on the inputs $\boldsymbol{a}$ that persists in the distribution $P(x|\boldsymbol{a},\boldsymbol{b})$ in the maximum. The trace of this operator (appropriately normalised by the dimension of the input spaces) still provides a lower bound for the guessing probability, just as for the state case above, but the assurance that this bound can be reached is lacking. 

To obtain an upper bound, we can construct an operator that removes the dependence on the inputs by also maximising over the $\boldsymbol{a}$, which then trivially satisfies the required marginalisation conditions. These bounds are established more formally in the following proof of \Cref{prop:bayesian_classical_comb}. Before presenting the proof of the proposition, we prove two useful lemmas which establish that, for any classical comb $D$, we need only (un)normalised classical combs $\Gamma$ in the minimum formulations of the min-entropy since if some non-classical comb achieves the minimum, then there exists a related classical comb that does also.

\begin{Lemma} \label{lem:diag_comb_inequality} Let $A$ and $B$ be operators on $\H$, where $A$ is diagonal in a specific basis and $B$ is an arbitrary positive semi-definite operator. If $B - A \geq 0$ then $\diag(B) - A \geq 0$.
\end{Lemma}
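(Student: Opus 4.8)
The plan is to reduce the statement to the single elementary fact that the diagonal part of a positive semi-definite operator is itself positive semi-definite. Fix the orthonormal basis $\{\ket{e_i}\}$ in which $A$ is diagonal, and let $\diag(\cdot)$ denote the (linear) map sending an operator to its diagonal part with respect to this basis. The first step is to exploit the hypothesis that $A$ is already diagonal: this gives $\diag(A) = A$, and since $\diag(\cdot)$ is linear, we obtain
\begin{align*}
\diag(B) - A = \diag(B) - \diag(A) = \diag(B - A).
\end{align*}
Hence, writing $M := B - A$, the claim $\diag(B) - A \geq 0$ is exactly the claim $\diag(M) \geq 0$, and by hypothesis $M \geq 0$.

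The second step is to verify that $\diag(M) \geq 0$ for any positive semi-definite $M$. The diagonal entries are $M_{ii} = \bra{e_i} M \ket{e_i} \geq 0$, directly from positivity of $M$. Since $\diag(M)$ is the diagonal operator $\sum_i M_{ii} \ket{e_i}\!\!\bra{e_i}$ with non-negative entries, testing against an arbitrary vector $\ket{\psi} = \sum_i \psi_i \ket{e_i}$ yields
\begin{align*}
\bra{\psi}\diag(M)\ket{\psi} = \sum_i |\psi_i|^2 M_{ii} \geq 0,
\end{align*}
so $\diag(M) \geq 0$. Combining this with the reduction of the first step gives $\diag(B) - A = \diag(M) \geq 0$, as required.

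There is no real obstacle here: the content of the lemma is entirely captured by the observation that taking the diagonal part is linear and preserves positivity. The only point worth flagging is that the hypothesis $A = \diag(A)$ is what lets us commute $A$ past the diagonal-part map and thereby rewrite $\diag(B) - A$ as $\diag(B - A)$; without the diagonality of $A$ the conclusion would of course fail.
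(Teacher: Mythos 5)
Your proof is correct and is essentially the paper's own argument: the paper applies a decohering channel $\mathcal{E}$ in the basis where $A$ is diagonal (which is exactly the map $\diag(\cdot)$), using its linearity and the fact that it is a positive map, together with $\mathcal{E}(A) = A$, to conclude $\diag(B) - A = \mathcal{E}(B-A) \geq 0$. The only cosmetic difference is that you verify positivity-preservation of $\diag(\cdot)$ by direct computation on matrix elements instead of citing it as a property of the decohering channel.
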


\begin{proof} Let $\mathcal{E}: \L(\H) \rightarrow \L(\H)$ be a decohering channel in the basis for which $A$ is diagonal. Since this is in particular a positive map and $B-A$ is positive semi-definite by assumption, it follows that $\mathcal{E}(B-A)$ is also positive semi-definite Linearity of $\mathcal{E}$ ensures that $\mathcal{E}(B-A) = \mathcal{E}(B) - \mathcal{E}(A) = \diag(B) - A$ giving the result.
\end{proof}

\begin{Lemma} \label{lem:classical_comb_if_quantum_comb} If $\Gamma$ is a normalised (resp. unnormalised) quantum comb, then $\diag(\Gamma)$ is a normalised (resp. unnormalised) classical comb.
\end{Lemma}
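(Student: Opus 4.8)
The plan is to exhibit, for $\diag(\Gamma)$, an explicit sequence of positive semi-definite operators satisfying the hierarchy of trace conditions in \Cref{def:qcomb}, and then to observe that, since $\diag(\Gamma)$ is by construction diagonal, it is automatically a classical comb (recall that the marginalisation conditions of \Cref{eq:marginalisation_independence} are merely a rewriting of the trace conditions of \Cref{eq:comb_trace_conds}). Let $\Gamma_{0}, \dots, \Gamma_{n}$ be the defining sequence for the quantum comb $\Gamma$, so that $\Gamma_{n} = \Gamma$, $\Tr_{A_{k}^{\out}}[\Gamma_{k}] = I_{A_{k}^{\inn}} \otimes \Gamma_{k-1}$ for all $k$, and $\Gamma_{0} = 1$ (resp.\ $\Gamma_{0} > 0$) in the normalised (resp.\ unnormalised) case. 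I would define the candidate sequence by $\tilde{\Gamma}_{k} := \diag(\Gamma_{k})$, where the diagonal is taken in the computational basis of all the spaces on which $\Gamma_{k}$ acts.

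The key step is the observation that dephasing in the computational basis commutes with the partial trace over any one of the computational-basis factors. Concretely, for any operator $M$ on $\H \otimes \H_{A_{k}^{\out}}$, the diagonal entries of $\Tr_{A_{k}^{\out}}[M]$ are $\bra{\boldsymbol{j}}\Tr_{A_{k}^{\out}}[M]\ket{\boldsymbol{j}} = \sum_{i} \bra{\boldsymbol{j},i} M \ket{\boldsymbol{j},i}$, which depend only on the diagonal entries of $M$; hence $\diag(\Tr_{A_{k}^{\out}}[M]) = \Tr_{A_{k}^{\out}}[\diag(M)]$. Applying this with $M = \Gamma_{k}$ and using that $I_{A_{k}^{\inn}}$ is already diagonal, I obtain
\[
\Tr_{A_{k}^{\out}}[\tilde{\Gamma}_{k}] = \diag\!\left(\Tr_{A_{k}^{\out}}[\Gamma_{k}]\right) = \diag\!\left(I_{A_{k}^{\inn}} \otimes \Gamma_{k-1}\right) = I_{A_{k}^{\inn}} \otimes \tilde{\Gamma}_{k-1},
\]
which is exactly the required trace condition for the sequence $\tilde{\Gamma}_{k}$. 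Positivity of each $\tilde{\Gamma}_{k}$ is immediate, since the diagonal of a positive semi-definite operator is a non-negative diagonal operator, hence positive semi-definite (this is the content of \Cref{lem:diag_comb_inequality} taken with $A = 0$). The terminal conditions transfer directly: $\Gamma_{0}$ is a scalar on which $\diag$ acts trivially, so $\tilde{\Gamma}_{0} = \Gamma_{0} = 1$ in the normalised case and $\tilde{\Gamma}_{0} = \Gamma_{0} > 0$ in the unnormalised case.

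With the sequence $\tilde{\Gamma}_{0}, \dots, \tilde{\Gamma}_{n}$ in hand, $\diag(\Gamma) = \tilde{\Gamma}_{n}$ satisfies all the defining conditions of a quantum comb and is diagonal by construction, so it is a classical comb, normalised or unnormalised according to the value of $\Gamma_{0}$. I do not anticipate a genuine obstacle here: the only point requiring care is the commutation of $\diag$ with the relevant partial trace, which is the elementary computation above; everything else is bookkeeping that follows the structure of \Cref{def:qcomb} term by term.
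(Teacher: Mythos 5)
Your proposal is correct and follows essentially the same route as the paper's proof: both apply $\diag(\cdot)$ to the defining sequence of the quantum comb, use the commutation of dephasing with the partial trace to push the trace conditions through, and read off the non-negative functions $f^{(k)}$ (equivalently, the marginalisation conditions) from the resulting diagonal sequence. The only cosmetic difference is that you verify the conditions at the operator level and then invoke the trace-condition/marginalisation equivalence, whereas the paper writes out the diagonal entries and sums explicitly; the content is identical.
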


\begin{proof} The proof is essentially immediate by definition but the details are spelt out nonetheless. Let $\Gamma$ be a normalised (resp. unnormalised) quantum comb in $\Comb(A_{1} \rightarrow B_{1}, ..., A_{n} \rightarrow B_{n})$ and let $C^{(k)}$, $k = 0, ..., n$ be positive semi-definite operators that satisfy $C^{(n)} = \Gamma$,
\begin{align}
\Tr_{B_{k}}\left[C^{(k)}\right] = I_{A_{k}} \otimes C^{(k-1)}
\end{align}

for $k = 1, ..., n$ and $C^{(0)} = 1$ (resp. $C^{(0)} > 0$). Denoting the $\boldsymbol{a}\boldsymbol{b}^{\text{th}}$ diagonal element of $\Gamma$ by $\alpha_{\boldsymbol{a}\boldsymbol{b}}$, we define $f$ via $f(\boldsymbol{b},\boldsymbol{a}) = \alpha_{\boldsymbol{a}\boldsymbol{b}}$. Thus, we can write
\begin{align}
\diag(\Gamma) = \sum_{\boldsymbol{a}, \boldsymbol{b}} f(\boldsymbol{b},\boldsymbol{a})\ket{\boldsymbol{a}\boldsymbol{b}}\!\!\bra{\boldsymbol{a}\boldsymbol{b}}_{\boldsymbol{A},\boldsymbol{B}}.
\end{align}

Since $\Gamma$ is positive semi-definite every $\alpha_{\boldsymbol{a}\boldsymbol{b}}$, and hence every $f(\boldsymbol{b},\boldsymbol{a})$, is non-negative. By denoting the diagonal elements of $C^{(k)}$ similarly as $\alpha_{\boldsymbol{a}_{1:k}\boldsymbol{b}_{1:k}}$, where $\boldsymbol{a}_{1:k} := a_{1}a_{2}...a_{k}$ and similarly for $\boldsymbol{b}_{1:k}$, we define $f^{(k)}(\boldsymbol{b}_{1:k},\boldsymbol{a}_{1:k}) := \alpha_{\boldsymbol{a}_{1:k}\boldsymbol{b}_{1:k}}$ for all $k = 1, ..., n$ and take $f^{(0)} := C^{(0)}$. Since taking the partial trace and applying $\diag(\cdot)$ commute, that is,
\begin{align}
\diag\left(\Tr_{B_{k}}\left[C^{(k)}\right]\right) = \Tr_{B_{k}}\left[\diag(C^{(k)})\right]
\end{align}

it follows that 
\begin{align}
\Tr_{B_{k}}\left[\diag(C^{(k)})\right] = \diag(I_{A_{k}} \otimes C^{(k-1)}) = I_{A_{k}} \otimes \diag(C^{(k-1)}).
\end{align}

The left-hand side can be written as
\begin{align}
\sum_{\boldsymbol{a}_{1:k}, \boldsymbol{b}_{1:k-1}} \left(\sum_{b_{k}} \alpha_{\boldsymbol{a}_{1:k}\boldsymbol{b}_{1:k}} \right)\ket{\boldsymbol{a}_{1:k}\boldsymbol{b}_{1:k-1}}\!\!\bra{\boldsymbol{a}_{1:k}\boldsymbol{b}_{1:k-1}}_{\boldsymbol{A}_{1:k},\boldsymbol{B}_{1:k-1}} = \sum_{\boldsymbol{a}_{1:k}, \boldsymbol{b}_{1:k-1}} \left(\sum_{b_{k}} f^{(k)}(\boldsymbol{b}_{1:k},\boldsymbol{a}_{1:k}) \right)\ket{\boldsymbol{a}_{1:k}\boldsymbol{b}_{1:k-1}}\!\!\bra{\boldsymbol{a}_{1:k}\boldsymbol{b}_{1:k-1}}_{\boldsymbol{A}_{1:k},\boldsymbol{B}_{1:k-1}}
\end{align}

and the right-hand side as
\begin{align}
\sum_{\boldsymbol{a}_{1:k}, \boldsymbol{b}_{1:k-1}} \alpha_{\boldsymbol{a}_{1:k-1}\boldsymbol{b}_{1:k-1}}\ket{\boldsymbol{a}_{1:k}\boldsymbol{b}_{1:k-1}}\!\!\bra{\boldsymbol{a}_{1:k}\boldsymbol{b}_{1:k-1}}_{\boldsymbol{A}_{1:k},\boldsymbol{B}_{1:k-1}} = \sum_{\boldsymbol{a}_{1:k}, \boldsymbol{b}_{1:k-1}} f^{(k-1)}(\boldsymbol{b}_{1:k-1},\boldsymbol{a}_{1:k-1})\ket{\boldsymbol{a}_{1:k}\boldsymbol{b}_{1:k-1}}\!\!\bra{\boldsymbol{a}_{1:k}\boldsymbol{b}_{1:k-1}}_{\boldsymbol{A}_{1:k},\boldsymbol{B}_{1:k-1}}
\end{align}

which establishes the required marginalisation condition on the $f^{(k)}$: $\sum_{b_{k}} f^{(k)}(\boldsymbol{b}_{1:k}, \boldsymbol{a}_{1:k}) = f^{(k-1)}(\boldsymbol{b}_{1:k-1},\boldsymbol{a}_{1:k-1})$ for $k = 2, ..., n$ as well as the edge case of $\sum_{b_{1}}f^{(1)}(b_{1}, a_{1}) = f^{(0)}$.  
\end{proof}

We can now give the proof of the proposition.

\begin{proof}[Proof of \Cref{prop:bayesian_classical_comb}] The lower bound is established by showing that any positive semi-definite operator $\Gamma$ on $\bigotimes_{i=1}^{n} \H_{A_{i}} \otimes \H_{B_{i}}$ that satisfies $I_{X} \otimes \Gamma \geq D$ must have trace greater than or equal to $\sum_{\boldsymbol{a}, \boldsymbol{b}} \max_{x}P(x|\boldsymbol{a},\boldsymbol{b}) P(\boldsymbol{b}|\boldsymbol{a})$.

Let $\Gamma$ be any positive semi-definite operator on $\bigotimes_{i = 1}^{n} \H_{A_{i}} \otimes \H_{B_{i}}$ such that $I_{X} \otimes \Gamma \geq D$. By \Cref{lem:diag_comb_inequality}, it follows that $\diag(I_{X} \otimes \Gamma) = I_{X} \otimes \diag(\Gamma) \geq D$. We write
\begin{align*}
I_{X} \otimes \diag(\Gamma) = \sum_{x,\boldsymbol{a}, \boldsymbol{b}} \alpha_{\boldsymbol{a}\boldsymbol{b}} \ket{x\boldsymbol{a}\boldsymbol{b}}\!\!\bra{x\boldsymbol{a}\boldsymbol{b}}_{X,\boldsymbol{A},\boldsymbol{B}}
\end{align*}

where the $\alpha_{\boldsymbol{a}\boldsymbol{b}}$ are all real and non-negative by positive semi-definiteness of $\Gamma$. The condition $I_{X} \otimes \diag(\Gamma) \geq D$ induces a further condition on the $\alpha$ terms: we must have for all $\boldsymbol{a}, \boldsymbol{b}$ that, for all $x$,
\begin{align*}
\alpha_{\boldsymbol{a}\boldsymbol{b}} \geq P(x|\boldsymbol{a},\boldsymbol{b})P(\boldsymbol{b}|\boldsymbol{a})
\end{align*}

which in particular enforces that 
\begin{align*}
\alpha_{\boldsymbol{a}\boldsymbol{b}} \geq \max_{x}P(x|\boldsymbol{a},\boldsymbol{b})P(\boldsymbol{b}|\boldsymbol{a}).
\end{align*}

It thus follows that
\begin{align*}
\Tr[\Gamma] \geq \sum_{\boldsymbol{a},\boldsymbol{b}}\max_{x}P(x|\boldsymbol{a},\boldsymbol{b})P(\boldsymbol{b}|\boldsymbol{a}).
\end{align*}

The upper bound is established by showing that
\begin{align}
\Upsilon := \sum_{\boldsymbol{a},\boldsymbol{b}}\max_{x, \boldsymbol{\tilde{a}}}P(x|\boldsymbol{\tilde{a}},\boldsymbol{b})P(\boldsymbol{b}|\boldsymbol{\tilde{a}})\ket{\boldsymbol{a}\boldsymbol{b}}\!\!\bra{\boldsymbol{a}\boldsymbol{b}}_{\boldsymbol{A},\boldsymbol{B}}
\end{align}

is a valid unnormalised comb, since $I_{X} \otimes \Upsilon \geq D$ clearly holds. Defining $f$ via $f(\boldsymbol{b}, \boldsymbol{a}) = \max_{x, \boldsymbol{\tilde{a}}}P(x|\boldsymbol{\tilde{a}},\boldsymbol{b})P(\boldsymbol{b}|\boldsymbol{\tilde{a}})$, non-negativity is immediate. Moreover, due to the maximum over $\boldsymbol{\tilde{a}}$, $f$ has no dependence on $\boldsymbol{a}$: $f(\boldsymbol{b},\boldsymbol{a}) = f(\boldsymbol{b}, \boldsymbol{a'})$ for all $\boldsymbol{a}, \boldsymbol{a'}$. By defining the functions $f^{(k)}$ via
\begin{align}
f^{(k)}(b_{1},...,b_{k}, a_{1},...,a_{k}) := \sum_{b_{n}, ..., b_{k+1}}f(\boldsymbol{b},\boldsymbol{a})
\end{align}

for all $k = 1, ..., n$, the required conditions (non-negativity and the marginalisation conditions - recall \Cref{eq:marginalisation_independence}) are trivially satisfied due to the non-negativity and independence from $\boldsymbol{a}$ exhibited by $f$. The sum over $b_{1}$ of $f^{(1)}$ so defined is also clearly positive since $\max_{x, \boldsymbol{a}}P(x|\boldsymbol{a},\boldsymbol{b})P(\boldsymbol{b}|\boldsymbol{a})$ must be non-zero for some $\boldsymbol{b}$, and so $\Upsilon$ is indeed an unnormalised classical comb.
\end{proof}


\section{Blind Quantum Computing Protocol Supplementary} \label{sec:app_BQC}

This appendix contains supporting results for the blindness theorems of \Cref{sec:CBQC} and for the accompanying examples.


\subsection{Single Round Theorem Supporting Results} \label{subsec:app_BQC_single_round_proof}

In this subsection, we give the proof of \Cref{thm:single_shot_entropy} which makes use of the bounds from \Cref{prop:bayesian_classical_comb}. The proof of the theorem uses a more compact form of $D_{\client}$ which we now present.

The comb we consider is
\begin{align}
D_{\client} = \sum_{\boldsymbol{\alpha}, O} P(\boldsymbol{\alpha}, O)\ket{\boldsymbol{\alpha},O}\!\!\bra{\boldsymbol{\alpha},O} \otimes \sigma_{\boldsymbol{\alpha}, O}
\end{align}

where
\begin{align}
\sigma_{\boldsymbol{\alpha}, O} = \sum_{\substack{g \sim O,\boldsymbol{r}\\ \boldsymbol{\alpha'}, \boldsymbol{c'} }} P(g|O)P(\boldsymbol{r})P(\boldsymbol{\alpha'}|\boldsymbol{c'},\boldsymbol{\alpha},\boldsymbol{r},g)\ket{\boldsymbol{\alpha'}\boldsymbol{c'}}\!\!\bra{\boldsymbol{\alpha'}\boldsymbol{c'}}_{\boldsymbol{A'},\boldsymbol{C'}}.
\end{align}

Note the labelling is the same as in the main text, namely with $\H_{\boldsymbol{A'}, \boldsymbol{C'}} := \bigotimes_{i=1}^{n} \H_{A'_{i}}\otimes \H_{C'_{i}}$ and $\H_{\boldsymbol{A},\boldsymbol{O}}$ the Hilbert space with basis states given by the values of the joint classical variables $\boldsymbol{A},\boldsymbol{O}$ that specify the computation. Since $\boldsymbol{\alpha'}$ is conditionally independent of $O$ given $g$, we have
\begin{align}
P(\boldsymbol{\alpha'}|\boldsymbol{c'},\boldsymbol{\alpha},\boldsymbol{r},g) &= P(\boldsymbol{\alpha'}|\boldsymbol{c'}, \boldsymbol{\alpha}, \boldsymbol{r}, g, O). \label{eq:alpha_prime_indep}
\end{align}

Furthermore, since $\boldsymbol{r}$ is chosen independently of all variables (\cite[Lemma 4,][]{mantri2017flow}) and $g$ is independent of $\boldsymbol{\alpha}$ and $\boldsymbol{c'}$ (it is a valid gflow for all computations and measurement outcomes), we also have
\begin{align}
P(\boldsymbol{r})P(g| O) = P(\boldsymbol{r}, g|\boldsymbol{c'}, \boldsymbol{\alpha}, O). \label{eq:rg_indep}
\end{align}

From \Cref{eq:alpha_prime_indep} and \Cref{eq:rg_indep} and by summing over $g, \boldsymbol{r}$, we get
\begin{align}
\sigma_{\boldsymbol{\alpha}, O} = \sum_{\boldsymbol{\alpha'}, \boldsymbol{c'}}  P(\boldsymbol{\alpha'}| \boldsymbol{c'}, \boldsymbol{\alpha}, O) \ket{\boldsymbol{\alpha'}\boldsymbol{c'}}\!\!\bra{\boldsymbol{\alpha'}\boldsymbol{c'}}
\end{align}

which allows us to write $D_{\client}$ as 
\begin{align}
D_{\client} = \sum_{\substack{\boldsymbol{\alpha},O\\\boldsymbol{\alpha'},\boldsymbol{c'}}} P(\boldsymbol{\alpha},O)P(\boldsymbol{\alpha'}| \boldsymbol{c'}, \boldsymbol{\alpha}, O) \ket{\boldsymbol{\alpha},O,\boldsymbol{\alpha'},\boldsymbol{c'}}\!\!\bra{\boldsymbol{\alpha},O,\boldsymbol{\alpha'},\boldsymbol{c'}}_{\boldsymbol{A},\boldsymbol{O},\boldsymbol{A'},\boldsymbol{C'}}. \label{eq:pre_Bayesian_D_client}
\end{align}

The Bayesian updated form of $D_{\client}$ is thus
\begin{align}
D_{\client} = \sum_{\substack{\boldsymbol{\alpha},O\\\boldsymbol{\alpha'},\boldsymbol{c'}}} P(\boldsymbol{\alpha}, O | \boldsymbol{\alpha'}, \boldsymbol{c'}) P(\boldsymbol{\alpha'}|\boldsymbol{c'}) \ket{\boldsymbol{\alpha},O,\boldsymbol{\alpha'},\boldsymbol{c'}}\!\!\bra{\boldsymbol{\alpha},O,\boldsymbol{\alpha'},\boldsymbol{c'}}_{\boldsymbol{A},\boldsymbol{O},\boldsymbol{A'},\boldsymbol{C'}}
\end{align}

however, the form in \Cref{eq:pre_Bayesian_D_client} is ultimately the one used in the proof of the theorem. 

The proof of \Cref{thm:single_shot_entropy} makes use of the following lemma which demonstrates that for $\boldsymbol{\alpha'}, \boldsymbol{\alpha}, \boldsymbol{c'}$ and $g$ fixed, there is at most one $\boldsymbol{r}$ such that \Cref{eq:adapt_meas_angles} is satisfied for each $i$.

\begin{Lemma} \label{lem:preimage_gflow} If there exists an $\boldsymbol{r}$ such that $P(\boldsymbol{\alpha'}| \boldsymbol{c'}, \boldsymbol{\alpha}, \boldsymbol{r}, g) = 1$ for all other variables fixed, then it is unique, otherwise $P(\boldsymbol{\alpha'}| \boldsymbol{c'}, \boldsymbol{\alpha}, \boldsymbol{r}, g) = 0$.
\end{Lemma}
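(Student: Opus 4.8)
The plan is to exploit the lower-triangular structure of the angle-adaptation equation \Cref{eq:adapt_meas_angles} induced by the gflow order, and to reconstruct $\boldsymbol{r}$ one coordinate at a time by induction along the total order $1 < 2 < \cdots < n$. The ``otherwise $P = 0$'' clause is immediate, since $P$ takes only the values $0$ and $1$ by definition, so if no $\boldsymbol{r}$ yields $1$ then every $\boldsymbol{r}$ yields $0$; the real content of the statement is uniqueness.

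First I would record the order-theoretic fact derived from the gflow conditions, namely that $\mathcal{X}_i, \mathcal{Z}_i \subseteq \{1, \ldots, i-1\}$ for every $i$. Indeed, $v' \in \mathcal{X}_i$ means $i \in g(v') \setminus \{i\}$, whence condition $1$ of \Cref{def:gflow} gives $v' < i$; likewise $v' \in \mathcal{Z}_i$ means $i \in \Odd(g(v')) \setminus \{i\}$, whence condition $2$ gives $v' < i$. Consequently, for each $i$ the sign $s_i := (-1)^{\bigoplus_{j \in \mathcal{X}_i} c_j}$ and the phase bit $t_i := \bigoplus_{j \in \mathcal{Z}_i} c_j$ appearing in \Cref{eq:adapt_meas_angles} depend only on $c_1, \ldots, c_{i-1}$, and hence (via $c_j = c'_j \oplus r_j$) only on $r_1, \ldots, r_{i-1}$.

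Next I would run the induction. Suppose $r_1, \ldots, r_{i-1}$ have already been pinned down; then $c_1, \ldots, c_{i-1}$, and therefore $s_i$ and $t_i$, are fixed, and \Cref{eq:adapt_meas_angles} collapses to the single-variable relation $\alpha'_i = s_i \alpha_i + (r_i \oplus t_i)\pi \bmod 2\pi$. The two candidate values $r_i \in \{0,1\}$ share the common term $s_i\alpha_i$ but produce phases $t_i\pi$ and $(1 \oplus t_i)\pi$, which differ by exactly $\pi \bmod 2\pi$; since $\pi \not\equiv 0 \bmod 2\pi$, the resulting angles are distinct, so at most one value of $r_i$ can match the fixed observed $\alpha'_i$. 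Starting from the base case $i = 1$, where $\mathcal{X}_1 = \mathcal{Z}_1 = \emptyset$ forces $s_1 = +1$ and $t_1 = 0$, the induction then shows that any consistent $\boldsymbol{r}$ is determined uniquely, coordinate by coordinate.

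I do not anticipate a serious obstacle here; the only point requiring care is the bookkeeping that the dependence of $\alpha'_i$ on the one-time pad really is strictly lower-triangular in the gflow order, which is precisely what conditions $1$ and $2$ of \Cref{def:gflow} guarantee. The bulk of the effort therefore lies in stating the reduction of \Cref{eq:adapt_meas_angles} to the single-variable equation cleanly and verifying the ``differ-by-$\pi$'' dichotomy that forces uniqueness of each $r_i$.
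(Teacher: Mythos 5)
Your proposal is correct and follows essentially the same route as the paper's proof: both argue by induction along the total order, using the gflow conditions to show that $\mathcal{X}_i, \mathcal{Z}_i \subseteq \{1,\ldots,i-1\}$, so that once $r_1,\ldots,r_{i-1}$ are fixed the adaptation equation determines $r_i$ uniquely (the base case $\alpha'_1 = \alpha_1 + r_1\pi \bmod 2\pi$ pinning down $r_1$). Your write-up is merely somewhat more explicit than the paper's in spelling out why the correction sets lie in the past and why the two candidate values of $r_i$ produce angles differing by $\pi$.
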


\begin{proof} Since $P(\boldsymbol{\alpha'}| \boldsymbol{c'}, \boldsymbol{\alpha}, \boldsymbol{r}, g)$ is a deterministic distribution, it takes values either $0$ or $1$. Let $\boldsymbol{r}$ be such that $P(\boldsymbol{\alpha'}| \boldsymbol{c'}, \boldsymbol{\alpha}, \boldsymbol{r}, g) = 1$. Recalling \Cref{eq:adapt_meas_angles} and the fact that a total order is imposed on communication, we have that for each $i$
\begin{align}
\alpha'_{i} := (-1)^{\bigoplus_{j \in \mathcal{X}_{i}} c'_{j} \oplus r_{j} }\alpha_{i} + \left(r_{i} \oplus \bigoplus_{j \in \mathcal{Z}_{i}} c'_{j} \oplus r_{j} \right)\pi \bmod 2\pi \label{eq:lem_unique_r}
\end{align}

where $\mathcal{X}_{i}$ and $\mathcal{Z}_{i}$ are necessarily subsets of $\{1,..., i-1\}$. In particular, this means that
\begin{align}
\alpha'_{1} = \alpha_{1} + r_{1}\pi \bmod 2\pi
\end{align}

and hence there is a unique value for $r_{1}$ given $\boldsymbol{\alpha}$ and $\boldsymbol{\alpha'}$ are fixed. The result then follows by induction: since $\boldsymbol{c'}$ and $g$ are fixed, for fixed values $\boldsymbol{r}_{1:i-1}$ the quantities $\bigoplus_{j \in \mathcal{X}_{i}} c'_{j} \oplus r_{j}$ and $\bigoplus_{j \in \mathcal{Z}_{i}} c'_{j} \oplus r_{j}$ are determined for each $i$, and thus there is a unique $r_{i}$ for which \Cref{eq:lem_unique_r} holds. 
\end{proof}

We can now give the proof of the theorem.

\begin{proof}[Proof of \Cref{thm:single_shot_entropy}] From \Cref{prop:bayesian_classical_comb}, we know that
\begin{align}
H_{\min}(\boldsymbol{A}, \boldsymbol{O}| \boldsymbol{A'}, \boldsymbol{C'})_{D_{\client}} \geq -\log\left[\sum_{\boldsymbol{\alpha'}}\max_{\boldsymbol{\alpha}, O,\boldsymbol{c'}}P(\boldsymbol{\alpha}, O | \boldsymbol{\alpha'}, \boldsymbol{c'}) P(\boldsymbol{\alpha'}|\boldsymbol{c'}) \right].
\end{align}

Instead of dealing directly with $P(\boldsymbol{\alpha}, O | \boldsymbol{\alpha'}, \boldsymbol{c'})P(\boldsymbol{\alpha'}|\boldsymbol{c'})$, it is easier to revert back to the form before the Bayes' rule was applied. That is, we aim to find
\begin{align}
\sum_{\boldsymbol{\alpha'}}\max_{\boldsymbol{\alpha}, O,\boldsymbol{c'}} P(\boldsymbol{\alpha}, O)P(\boldsymbol{\alpha'}|\boldsymbol{c'}, \boldsymbol{\alpha}, O). \label{eq:non_Bayes_sum}
\end{align} 

Using the uniformity assumptions for choosing the computation, gflow and one-time pads, as well as the definition of $P(\boldsymbol{\alpha'}|\boldsymbol{c'}, \boldsymbol{\alpha}, O)$ in the previous subsection, we have
\begin{align}
P(\boldsymbol{\alpha}, O)P(\boldsymbol{\alpha'}|\boldsymbol{c'}, \boldsymbol{\alpha}, O) = \frac{1}{|\mathcal{A}|^{n}|\mathcal{O}|}\sum_{g \sim O, \boldsymbol{r}} \frac{P(\boldsymbol{\alpha'}| \boldsymbol{c'}, \boldsymbol{\alpha}, \boldsymbol{r}, g)}{|g \sim O|2^{n}}
\end{align}

where $g \sim O$ indicates that the gflow is defined for the output set $O$ and $|g \sim O|$ denotes the number of all such gflows. Recalling that $P(\boldsymbol{\alpha'}| \boldsymbol{c'}, \boldsymbol{\alpha}, \boldsymbol{r}, g)$ is a deterministic distribution, it can be shown (\Cref{lem:preimage_gflow}) that for fixed $g$, $\boldsymbol{\alpha}$ and $\boldsymbol{c'}$ there is at most one $\boldsymbol{r}$ for which $P(\boldsymbol{\alpha'}| \boldsymbol{c'}, \boldsymbol{\alpha}, \boldsymbol{r}, g) = 1$. Thus,
\begin{align}
\max_{\boldsymbol{\alpha}, O, \boldsymbol{c'}} \sum_{g \sim O, \boldsymbol{r}} \frac{P(\boldsymbol{\alpha'}| \boldsymbol{c'}, \boldsymbol{\alpha}, \boldsymbol{r}, g)}{|g \sim O|2^{n}} \label{eq:max_reportable_alpha_O}
\end{align}

can be interpreted as selecting the $\boldsymbol{\alpha}$, $O$ and $\boldsymbol{c'}$ for which $\boldsymbol{\alpha'}$ is reportable from $\boldsymbol{\alpha}$ for the greatest number of pairs $(g, \boldsymbol{r})$ for gflows $g \sim O$. This quantity is clearly upper-bounded by a situation where $\boldsymbol{\alpha'}$ is reportable under all gflows, hence
\begin{align}
\max_{\boldsymbol{\alpha}, O, \boldsymbol{c'}} \sum_{g \sim O, \boldsymbol{r}} \frac{P(\boldsymbol{\alpha'}| \boldsymbol{c'}, \boldsymbol{\alpha}, \boldsymbol{r}, g)}{|g \sim O|2^{n}} &\leq \frac{1}{2^{n}} \label{eq:max_reportable_upper_bound} 
\end{align}

which holds for all $\boldsymbol{\alpha'}$. Thus, returning to \Cref{eq:non_Bayes_sum}:
\begin{align}
\sum_{\boldsymbol{\alpha'}}\max_{\boldsymbol{\alpha}, O,\boldsymbol{c'}} P(\boldsymbol{\alpha}, O)P(\boldsymbol{\alpha'}|\boldsymbol{c'}, \boldsymbol{\alpha}, O) &\leq \sum_{\boldsymbol{\alpha'}}\frac{1}{|\mathcal{A}|^{n}|\mathcal{O}|2^{n}} \\
&= \frac{1}{|\mathcal{O}|2^{n}}
\end{align}

which proves the theorem.
\end{proof}

It is worthwhile making some further comments regarding the interpretation of \Cref{eq:max_reportable_alpha_O} and the related inequality \Cref{eq:max_reportable_upper_bound} in the proof above. Firstly, it is possible to find a simple example, namely that given in the main text (see also below), for which \Cref{eq:max_reportable_upper_bound} is equality for every $\boldsymbol{\alpha'}$. That is, for every $\boldsymbol{\alpha'}$, there exists an $\boldsymbol{\alpha}$, $O$ and $\boldsymbol{c'}$ that $\boldsymbol{\alpha'}$ is reportable from $\boldsymbol{\alpha}$ for every gflow $g \sim O$. Said another way, the pre-images of $\boldsymbol{\alpha'}$ under the gflows $g \sim O$ (for some fixed $\boldsymbol{c'}$ and as $\boldsymbol{r}$ varies) have non-empty mutual intersection. Since the pre-image of each gflow has a fixed size (this follows from \Cref{lem:preimage_gflow}), a larger mutual intersection corresponds to a smaller total set of angles $\boldsymbol{\alpha}$ from which $\boldsymbol{\alpha'}$ can be reported. Since $\boldsymbol{\alpha}$ is one part of the secret information, it is intuitive that a smaller set of possible $\boldsymbol{\alpha}$ given the evidence $\boldsymbol{\alpha'}, \boldsymbol{c'}$ corresponds to a lower min-entropy. This also suggests that, to improve security, considering a graph where the corresponding gflows have smaller mutual intersection is beneficial.

Obtaining equality in \Cref{eq:max_reportable_upper_bound} for all $\boldsymbol{\alpha'}, \boldsymbol{c'}$ accordingly means that equality is also achieved in \Cref{eq:thm_single_round}, however, it is unlikely that this bound is obtained in general. The above discussion highlights how much is dependent on the specific properties of gflows for the graph chosen for the protocol. Since (to the best knowledge of the authors) there is no characterisation of gflows in terms of graph-theoretic properties, and since the given example achieves the lower bound, it is unlikely that a better lower bound that that given in the theorem can be given in general.


\subsection{BQC Minimum Example} \label{subsec:app_BQC_min_example}

This appendix provides further details regarding the minimal example given in the main text (recall \Cref{fig:min_BQC_example}) which obtains the lower bound in \Cref{thm:single_shot_entropy}. We begin by demonstrating that $O = \{2,3\}$ is indeed the only choice of output set for which gflows compatible with the total order exist.

\begin{Lemma} \label{lem:gflows_min_example} For $G$ as in \Cref{fig:min_BQC_example} with partial order $1<2<3$ and the $XY$-plane the only allowed measurement plane, $O = \{2,3\}$ is the only non-trivial output set for which there exist $I \subset \{1,2,3\}$ such that a gflow compatible with the total order exists. Moreover, there are just two possible gflows, $g_{1}$ and $g_{2}$, defined by
\begin{align}
g_{1}: 1 &\mapsto \{2\}, \\
g_{2}:1 &\mapsto \{3\},
\end{align}

where $g_{1}$ is compatible with $(I, O)$ for $I$ equal to $\{1\}$, $\{3\}$ or $\{1,3\}$ and $g_{2}$ is compatible with $(I,O)$ for $I = \{1\}$, $\{2\}$ and $\{1,2\}$, for $O$ as above.
\end{Lemma}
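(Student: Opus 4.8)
The plan is to prove the two claims in turn: first that $O=\{2,3\}$ is the unique non-trivial output set admitting a gflow compatible with the order $1<2<3$, and then that exactly $g_1,g_2$ arise, with the stated compatible input sets. Throughout, since only $XY$-plane measurements are allowed, \Cref{def:gflow} collapses to the following for every measured vertex $v\in O^c$: we need $g(v)\subseteq I^c$ with $v\notin g(v)$ and $v\in\Odd(g(v))$ (condition 3 for $XY$), together with the order constraints that every $v'\in g(v)\cup\Odd(g(v))$ with $v'\neq v$ satisfies $v<v'$ (conditions 1 and 2). The only adjacency data I will use is that, in the graph of \Cref{fig:min_BQC_example}, vertex $1$ is adjacent to both $2$ and $3$, i.e. $N_1^G=\{2,3\}$.

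The first step is a monotonicity observation about the total order that does the bulk of the work. For the maximal vertex $3$, condition 3 forces $3\in\Odd(g(3))$, hence $g(3)\neq\emptyset$ (since $\Odd(\emptyset)=\emptyset$), while condition 1 forces every element of $g(3)$ to exceed $3$; as no such vertex exists, $3$ cannot be measured, so $3\in O$. I would then run the same argument one level down: assuming $2\in O^c$, the largest measured vertex is $2$, the order constraints force $g(2)\subseteq\{3\}$ and hence $g(2)=\{3\}$, and then either $2\notin\Odd(\{3\})$ (if $2$ and $3$ are non-adjacent) or, if $2$ and $3$ are adjacent, $\Odd(\{3\})$ also contains $1$ (because $1$ is adjacent to $3$), violating the order requirement $2<1$. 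Either way $2$ cannot be measured, so $2\in O$. Since non-triviality means $O\neq V$, we must have $1\in O^c$, giving $O=\{2,3\}$ and $O^c=\{1\}$.

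With $O=\{2,3\}$ fixed, I would characterise the admissible $g$ by analysing $g(1)$ alone. Here $g(1)\subseteq I^c\cap\{2,3\}$ (using $1\notin g(1)$), and because $N_1^G=\{2,3\}$ the parity condition $1\in\Odd(g(1))$ reduces to $|g(1)|$ being odd; the order constraints are automatic, as every other vertex exceeds $1$. The only odd-size subsets of $\{2,3\}$ are $\{2\}$ and $\{3\}$, giving precisely $g_1:1\mapsto\{2\}$ and $g_2:1\mapsto\{3\}$.

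Finally, the input-set compatibility is read off from the single remaining constraint $g(1)\subseteq I^c$: the gflow $g_1$ is valid exactly when $2\notin I$ and $g_2$ exactly when $3\notin I$; enumerating the non-empty input sets satisfying these yields $I\in\{\{1\},\{3\},\{1,3\}\}$ for $g_1$ and $I\in\{\{1\},\{2\},\{1,2\}\}$ for $g_2$. I expect the only genuinely delicate point to be the second half of the monotonicity step, namely ruling out $2\in O^c$, since it is the one place where the precise adjacency (whether $2$ and $3$ are joined) and the interplay between condition 3 and the order constraint on $\Odd(g(2))$ both enter; the remainder is bookkeeping with parities of $\Odd$.
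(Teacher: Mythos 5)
Your proof is correct and follows essentially the same route as the paper's: show that vertex $3$, then vertex $2$, cannot be measured under the total order, so non-triviality forces $O=\{2,3\}$; then classify $g(1)$ and read off the compatible input sets from $g(1)\subseteq I^{c}$. The only differences are cosmetic—you prune cases using the order and $XY$-plane constraints (and a parity count on $|g(1)|$, since $N_{1}^{G}=\{2,3\}$) where the paper enumerates the candidate subsets explicitly, and your step ruling out vertex $2$ is agnostic about whether the edge $(2,3)$ is present, which the paper's triangle graph settles by inspection.
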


\begin{proof} We begin by demonstrating $O = \{2,3\}$ is the only valid non-trivial output set (note that we do not consider the trivial output set $O = \{1,2,3\}$ since this does not allow for any computation). Due to the total order and measurement plane restriction, $3$ must be in the output set, since if this was not the case, then $3 \in O^{c}$ meaning that any gflow would map $3$ to $\{1\}$, $\{2\}$ or $\{1,2\}$. In any of these cases, we would require that $3<1$ or $3<2$ which contradicts the compatibility with total order. Thus $\{1\}, \{2\}$ and $\{1,2\}$ cannot be output sets. 

Suppose either $\{3\}$ or $\{1,3\}$ was a valid output set. Any gflow compatible with either output must then map $2$ to some subset of $\{1,2,3\}$, however no such set exists for which some contradiction does not arise, as follows. If $2$ maps to $\{1\}$, $2 < 1$ which is a contradiction to the total order. If $2$ maps to $\{2\}$, $\{1,2\}$ or $\{2,3\}$, the gflow requirement $2 \notin g(2)$ is contradicted. If $2$ maps to $\{3\}$, $1 \in \Odd(g(2))$ implying $2<1$, contradicting the total order. If $2$ maps to $\{1,3\}$, then $2 \notin \Odd(g(2))$ which contradicts a gflow requirement.

This leaves $\{2,3\}$ as the only remaining possible output set. We now show it does indeed support gflow and characterise them. For output set $O = \{2,3\}$, the domain of any gflow map is $O^{c} = \{1\}$, so we can begin to characterise valid gflows by where they map $1$. Due to the $XY$-plane restriction, we cannot have $1 \in g(1)$, so a valid gflow cannot map $1$ to $\{1\}, \{1,2\}$ or $\{1,3\}$. Since $\Odd(\{2,3\}) = \{2,3\}$ and since we require $1 \in \Odd(g(1))$, no valid gflow maps $1$ to $\{2,3\}$. This leaves just the options $1 \mapsto \{2\}$ and $1 \mapsto \{3\}$. Both of these are valid gflows since in both cases, $1 \notin g(1)$ is satisfied and the corresponding implications for the partial order, $1<2$ and $1<3$ respectively, are compatible with the total order. Similarly, for both maps $1 \in \Odd(g(1))$ is satisfied ($\Odd(\{2\}) = \{1,3\}$ and $\Odd(\{3\}) = \{1,2\}$) and the again the implications for the partial order are compatible with the total order.

Denote by $g_{1}$ the gflow that maps $1$ to $\{2\}$ and by $g_{2}$ the gflow that maps $1$ to $\{3\}$. Since a gflow is a map from $O^{c}$ to $\mathcal{P}(I^{c})$, $g_{1}$ is compatible with all sets $I$ for which $\{2\} \in \mathcal{P}(I^{c})$, i.e. $I = \{1\}, \{3\}$ and $\{1,3\}$, and similarly, $g_{2}$ is compatible with $I = \{1\}, \{2\}$ and $\{1,2\}$.
\end{proof}

Returning to the discussion of the BQC protocol for the example, it is useful to explicitly write out the correction sets given by $g_{1}$ and $g_{2}$ (only those that are non-empty are shown):
\begin{align*}
\mathcal{X}_{2}^{g_{1}} &= \mathcal{Z}_{3}^{g_{1}} = \{1\}; \\
\mathcal{Z}_{2}^{g_{2}} &= \mathcal{X}_{3}^{g_{2}} = \{1\}.
\end{align*}

Let $\mathcal{A}$ be any agreed upon set of angles that satisfies \Cref{eq:angle_set}. For a single round of the protocol, if $g_{1}$ is chosen by the client, then the reported angles are given by
\begin{align}
\alpha'_{1} &= \alpha_{1} + r_{1}\pi \bmod 2\pi, \label{eq:g_alpha_one} \\
\alpha'_{2} &= (-1)^{r_{1} \oplus c'_{1}}\alpha_{2} + r_{2}\pi \bmod 2\pi, \label{eq:g_alpha_two} \\
\alpha'_{3} &= \alpha_{3} + (r_{3} \oplus r_{1} \oplus c'_{1})\pi \bmod 2\pi, \label{eq:g_alpha_three}
\end{align}

where $\alpha_{1}, \alpha_{2}, \alpha_{3} \in \mathcal{A}$ are the chosen (and secret) angles for the computation, $c'_{1}$ is the classical message reported by the server after the first measurement and $\boldsymbol{r} = r_{3}r_{2}r_{1} \in \{0,1\}^{3}$ is the one-time pad. If $g_{2}$ is used instead, the angles are reported as
\begin{align}
\alpha'_{1} &= \alpha_{1} + r_{1}\pi \bmod 2\pi, \label{eq:g_prime_alpha_one}\\
\alpha'_{2} &= \alpha_{2} +  (r_{2} \oplus r_{1} \oplus c'_{1})\pi \bmod 2\pi, \label{eq:g_prime_alpha_two}\\
\alpha'_{3} &= (-1)^{r_{1} \oplus c'_{1}}\alpha_{3} + r_{3}\pi \bmod 2\pi. \label{eq:g_prime_alpha_three}
\end{align}

Note that the only classical message relevant to this example is $c'_{1}$. Let $\boldsymbol{\alpha'}$ and $\boldsymbol{c'}$ be reported in a single round of the protocol. We can write the pre-image of $\boldsymbol{\alpha'}$ under $g_{1}$ given $\boldsymbol{c'}$ by inverting \Cref{eq:g_alpha_one,eq:g_alpha_two,eq:g_alpha_three} (we drop the $\bmod 2\pi$ notation and leave it implicit):
\begin{align}
\alpha_{1} &= \alpha'_{1} + r_{1}\pi; \\
\alpha_{2} &= (-1)^{r_{1} \oplus c'_{1}}\alpha'_{2} + r_{2}\pi; \\
\alpha_{3} &= \alpha'_{3} + (r_{3} \oplus r_{1} \oplus c'_{1})\pi.
\end{align}

Restricting our focus to a subset of the pre-image defined by $r_{1} = c'_{1}$, that is, the angles
\begin{align}
\alpha_{1} &= \alpha'_{1} + c'_{1}\pi, \\
\alpha_{2} &= \alpha'_{2} + r_{2}\pi, \\
\alpha_{3} &= \alpha'_{3} + r_{3}\pi,
\end{align}

one observes that $\boldsymbol{\alpha'}$ can be reported from any one of these angles under $g_{2}$ given $\boldsymbol{c'}$, namely by the one-time pads $\boldsymbol{\hat{r}} = \hat{r}_{3}\hat{r}_{2}\hat{r}_{1} = r_{3}r_{2}c'_{1}$ (as can be shown via simple substitution into \Cref{eq:g_prime_alpha_one,eq:g_prime_alpha_two,eq:g_prime_alpha_three}). Since this holds for any $\boldsymbol{\alpha'}$ and $\boldsymbol{c'}$, we have thus shown both that the bound \Cref{eq:max_reportable_upper_bound} in the proof of \Cref{thm:single_shot_entropy} above is in fact equality for every $\boldsymbol{\alpha'}$ and moreover that the maximum is obtained for every $\boldsymbol{c'}$. Recalling the discussion after \Cref{prop:bayesian_classical_comb}, this means that we have found the guessing probability (equivalently min-entropy) exactly for this example and that any strategy used by the server is equally informative. These results are corroborated numerically (see \Cref{sec:app_SDP_implementation} for details of the implementation using an SDP solver): the min-entropy is given as $-\log_{2}(0.125) = 3$, which is consistent with $|\mathcal{O}| = 1$ and $n = 3$ for this example (this result was obtained for different choices of $\mathcal{A}$: for $|\mathcal{A}| = 4$ and for $|\mathcal{A}| = 8$). 

We now consider two rounds of the protocol in order to calculate the guessing probability for $D_{\client}^{(2)}$, which, for this example, can be written as
\begin{align}
D_{\client}^{(2)} &= \sum_{\boldsymbol{\alpha}, O} \frac{1}{|\mathcal{A}|^{n}|\mathcal{O}|}\ket{\boldsymbol{\alpha}, O}\!\!\bra{\boldsymbol{\alpha}, O} \otimes \sigma_{\boldsymbol{\alpha}, O}^{(1)} \otimes \sigma_{\boldsymbol{\alpha}, O}^{(2)} \\
&= \sum_{\boldsymbol{\alpha}} \frac{1}{|\mathcal{A}|^{3}}\ket{\boldsymbol{\alpha}}\!\!\bra{\boldsymbol{\alpha}} \otimes \sum_{\substack{\boldsymbol{\alpha'}^{(1)}, \boldsymbol{\alpha'}^{(2)}, \\ \boldsymbol{c'}^{(1)}, \boldsymbol{c'}^{(2)},\\g^{(1)}, g^{(2)}, \\ \boldsymbol{r}^{(1)}, \boldsymbol{r}^{(2)}}}  \frac{1}{2^{8}} P(\boldsymbol{\alpha'}^{(1)}|\boldsymbol{c'}^{(1)}, \boldsymbol{\alpha}, \boldsymbol{r}^{(1)}, g^{(1)})P(\boldsymbol{\alpha'}^{(2)}|\boldsymbol{c'}^{(2)}, \boldsymbol{\alpha}, \boldsymbol{r}^{(2)}, g^{(2)}) \ket{\boldsymbol{\alpha'}^{(1:2)}\boldsymbol{c'}^{(1:2)}}\!\!\bra{\boldsymbol{\alpha'}^{(1:2)}\boldsymbol{c'}^{(1:2)}} \label{eq:two_rounds_example}
\end{align}

where we have used the shorthand notation for $\ket{\boldsymbol{\alpha'}^{(1:2)}\boldsymbol{c'}^{(1:2)}}$ for states on $\H_{\boldsymbol{A'}^(1),\boldsymbol{C'}^{(1)}} \otimes \H_{\boldsymbol{A'}^(2),\boldsymbol{C'}^{(2)}}$, where $g^{(j)}$ take values in $\{g_{1}, g_{2}\}$ with $g_{1}$ and $g_{2}$ are as above, and where we have used $P(g^{(j)}) = \frac{1}{2}$ and $P(\boldsymbol{r}^{(j)}) = \frac{1}{2^{3}}$. For a choice of angle set $\mathcal{A} = \{\frac{\pi}{5}, \frac{\pi}{3}, \frac{-\pi}{3} + \pi, \frac{-\pi}{5} + \pi, \frac{\pi}{5} + \pi, \frac{\pi}{3} + \pi, \frac{-\pi}{3}, \frac{-\pi}{5} \}$, the guessing probability for $D_{\client}^{(2)}$ takes value between $0.140625$ and $0.28125$ as computed via numeric methods (see \cite{Smith_Min_Entropy_and_MBQC}) and as explained in the following. Note that various other choices of angle set give the same or similar results.

These bounds are derived directly from \Cref{prop:bayesian_classical_comb} for the specific comb under consideration. The lower bound (which includes the normalisation over the input spaces) can be written as
\begin{align}
\sum_{\substack{\boldsymbol{\alpha'}^{(1)}, \boldsymbol{\alpha'}^{(2)}, \\ \boldsymbol{c'}^{(1)}, \boldsymbol{c'}^{(2)}}} \frac{1}{|\mathcal{A}|^{3}2^{6}} \max_{\boldsymbol{\alpha}}\sum_{\substack{g^{(1)}, g^{(2)}, \\ \boldsymbol{r}^{(1)}, \boldsymbol{r}^{(2)}}}\frac{1}{2^{8}} P(\boldsymbol{\alpha'}^{(1)}|\boldsymbol{c'}^{(1)}, \boldsymbol{\alpha}, \boldsymbol{r}^{(1)}, g^{(1)})P(\boldsymbol{\alpha'}^{(2)}|\boldsymbol{c'}^{(2)}, \boldsymbol{\alpha}, \boldsymbol{r}^{(2)}, g^{(2)}).
\end{align}

Due to the redundancy of $(c'_{2})^{(j)}$ and $(c'_{3})^{(j)}$ (they don't appear in \Cref{eq:g_alpha_one,eq:g_alpha_two,eq:g_alpha_three,eq:g_prime_alpha_one,eq:g_prime_alpha_two,eq:g_prime_alpha_three}), this reduces to
\begin{align}
\sum_{\substack{\boldsymbol{\alpha'}^{(1)}, \boldsymbol{\alpha'}^{(2)}, \\ (c'_{1})^{(1)}, (c'_{1})^{(2)}}} \frac{1}{|\mathcal{A}|^{3}2^{2}} &\max_{\boldsymbol{\alpha}}\sum_{\substack{g^{(1)}, g^{(2)}, \\ \boldsymbol{r}^{(1)}, \boldsymbol{r}^{(2)}}}\frac{1}{2^{8}} P(\boldsymbol{\alpha'}^{(1)}|(c'_{1})^{(1)}, \boldsymbol{\alpha}, \boldsymbol{r}^{(1)}, g^{(1)})P(\boldsymbol{\alpha'}^{(2)}|(c'_{1})^{(2)}, \boldsymbol{\alpha}, \boldsymbol{r}^{(2)}, g^{(2)}). \label{eq:example_lower_bound}
\end{align}

The upper bound is given by
\begin{align}
\sum_{\boldsymbol{\alpha'}^{(1)}, \boldsymbol{\alpha'}^{(2)}} \frac{1}{|\mathcal{A}|^{3}} &\max_{\boldsymbol{\alpha},(c'_{1})^{(1)}, (c'_{1})^{(2)}}\sum_{\substack{g^{(1)}, g^{(2)}, \\ \boldsymbol{r}^{(1)}, \boldsymbol{r}^{(2)}}}\frac{1}{2^{8}} P(\boldsymbol{\alpha'}^{(1)}|(c'_{1})^{(1)}, \boldsymbol{\alpha}, \boldsymbol{r}^{(1)}, g^{(1)})P(\boldsymbol{\alpha'}^{(2)}|(c'_{1})^{(1)}, \boldsymbol{\alpha}, \boldsymbol{r}^{(2)}, g^{(2)}).
\end{align}

Computing these quantities for $\mathcal{A}$ above via the code in the repository \cite{Smith_Min_Entropy_and_MBQC} gives the results as listed above.

\subsection{Multi-round Blindness Supplementary} \label{subsec:app_multi_round_blind}

For this section, it is useful to have the following notation. Let a graph $G$ on vertices $V$ be given and $O \subset V$ a choice of output set. For a given $\boldsymbol{\alpha} \in \mathcal{A}^{n}$, define the set $\mathcal{A}_{\boldsymbol{\alpha}, O} := \{\boldsymbol{\alpha} + (0,...,0,b_{o_{1}}, ..., b_{o_{|O|}})\pi \bmod 2\pi : b_{o_{i}} \in \{0,1\}\}$ where the zero entries of $(0,...,0,b_{o_{1}}, ..., b_{o_{|O|}})$ correspond to $V \setminus O$ and the labels $o_{i}$ correspond to the elements of $O$. We define an equivalence relation $\sim_{O}$ by $\boldsymbol{\alpha} \sim_{O} \boldsymbol{\widetilde{\alpha}}$ for $\widetilde{\alpha} \in \mathcal{A}_{\boldsymbol{\alpha}, O}$. The set obtained by quotienting $\mathcal{A}^{n}$ by the equivalence relation, i.e. $\mathcal{A}^{n}/\sim_{O}$ has $\frac{|\mathcal{A}|^{n}}{2^{|O|}}$ elements.

\begin{Lemma} \label{lem:output_symmetry} Consider $\sum_{g \sim O, \boldsymbol{r}} P(g|O)p(\boldsymbol{r})\sigma_{\BQC}^{\boldsymbol{\alpha}, \boldsymbol{r}, g}$ as in \Cref{eq:BQC_comb_not_all_comps} with $P(g|O)$ and $P(\boldsymbol{r})$ both uniform. Then, for all $\boldsymbol{\widetilde{\alpha}} \sim_{O} \boldsymbol{\alpha}$:
\begin{align}
\sum_{g \sim O, \boldsymbol{r}} P(g|O)P(\boldsymbol{r})\sigma_{\BQC}^{\boldsymbol{\alpha}, \boldsymbol{r}, g} = \sum_{g \sim O, \boldsymbol{r}} P(g|O)P(\boldsymbol{r})\sigma_{\BQC}^{\boldsymbol{\widetilde{\alpha}}, \boldsymbol{r}, g}
\end{align}
\end{Lemma}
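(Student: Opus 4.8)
The plan is to prove the statement for each gflow $g \sim O$ separately and then sum against $P(g|O)$; the only place the uniformity of $P(\boldsymbol{r})$ is genuinely used is to justify a relabelling of the summation variable $\boldsymbol{r}$, so I would isolate that step at the very end. For a fixed $g$, the entire content is the claim that $\sum_{\boldsymbol{r}} P(\boldsymbol{r})\sigma_{\BQC}^{\boldsymbol{\alpha},\boldsymbol{r},g} = \sum_{\boldsymbol{r}} P(\boldsymbol{r})\sigma_{\BQC}^{\boldsymbol{\widetilde{\alpha}},\boldsymbol{r},g}$ whenever $\boldsymbol{\widetilde{\alpha}}$ differs from $\boldsymbol{\alpha}$ only by adding $b_o\pi$ to the output entries $o\in O$. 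First I would record the structural observation flagged as the key point after \Cref{thm:multi_round}: for an output vertex $o\in O$, the pad bit $r_o$ enters the adaptation rule \Cref{eq:adapt_meas_angles} only through the reported angle $\alpha'_o$ of $o$ itself. Indeed, $r_j$ appears in the rule for $\alpha'_i$ exactly when $j\in\mathcal{X}_i\cup\mathcal{Z}_i$, and by \Cref{eq:X_Z_correction_sets} membership in either set requires $i\in g(j)$ (resp. $i\in\Odd(g(j))$), which forces $j$ to lie in the domain $O^c$ of the gflow map $g$. Hence no output index can occur in any correction set, so $r_o$ (equivalently the derived bit $c_o = c'_o\oplus r_o$) is absent from every adaptation equation except for the standalone $Z$-phase $r_o\pi$ inside $\alpha'_o$.

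Next I would introduce the involution $\boldsymbol{r}\mapsto\boldsymbol{\widetilde{r}}$ on $\mathbb{Z}_2^n$ defined by $\widetilde{r}_o = r_o\oplus b_o$ for $o\in O$ and $\widetilde{r}_v = r_v$ otherwise, and prove the pointwise identity $P(\boldsymbol{\alpha'}\mid\boldsymbol{c'},\boldsymbol{\alpha},\boldsymbol{r},g) = P(\boldsymbol{\alpha'}\mid\boldsymbol{c'},\boldsymbol{\widetilde{\alpha}},\boldsymbol{\widetilde{r}},g)$ for all $\boldsymbol{\alpha'},\boldsymbol{c'}$. For a non-output vertex $v$ nothing moves: $\widetilde{\alpha}_v=\alpha_v$, $\widetilde{r}_v=r_v$, and every $j\in\mathcal{X}_v\cup\mathcal{Z}_v$ lies in $O^c$, so $\widetilde{r}_j=r_j$ and the relevant $c_j$ are unchanged, giving $\widetilde{\alpha'}_v=\alpha'_v$. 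For an output vertex $o$ the reported angle picks up the extra $b_o\pi$ from $\widetilde{\alpha}_o$ and the flipped bit $b_o$ from $\widetilde{r}_o$; using again that $\mathcal{X}_o,\mathcal{Z}_o\subseteq O^c$ leaves the sign exponent $s=\bigoplus_{j\in\mathcal{X}_o}c_j$ and the $Z$-sum $t=\bigoplus_{j\in\mathcal{Z}_o}c_j$ unchanged, the target angle becomes $(-1)^{s}(\alpha_o+b_o\pi)+\bigl((r_o\oplus b_o)\oplus t\bigr)\pi$, which I claim collapses back to the original $\alpha'_o$.

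The one point that needs care — and which I expect to be the only real subtlety — is the bookkeeping of the sign factor multiplying $\alpha_o$: the added phase is carried along as $(-1)^s b_o\pi$, so I would invoke $-\pi\equiv\pi\pmod{2\pi}$ to discard the sign, reducing it to $b_o\pi$, and then combine this $b_o\pi$ with the flipped $Z$-term using that addition of bits modulo $2$ coincides with XOR, namely $b_o+\bigl((r_o\oplus b_o)\oplus t\bigr)\equiv r_o\oplus t\pmod 2$. I would also note explicitly that although the involution does change $c_o = c'_o\oplus r_o$, this bit never appears in any adaptation equation (by the first observation), so the change is harmless and the identity for $\alpha'_o$ holds without any constraint on the other pad bits.

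Having the pointwise identity, the lemma follows mechanically: substituting it into the definition of $\sigma_{\BQC}^{\boldsymbol{\widetilde{\alpha}},\boldsymbol{r},g}$, reindexing the uniform sum $\sum_{\boldsymbol{r}}P(\boldsymbol{r})$ through the bijection $\boldsymbol{r}\mapsto\boldsymbol{\widetilde{r}}$ (which preserves the uniform weight $P(\boldsymbol{r})=2^{-n}$), yields $\sum_{\boldsymbol{r}}P(\boldsymbol{r})\sigma_{\BQC}^{\boldsymbol{\widetilde{\alpha}},\boldsymbol{r},g}=\sum_{\boldsymbol{r}}P(\boldsymbol{r})\sigma_{\BQC}^{\boldsymbol{\alpha},\boldsymbol{r},g}$ for each $g\sim O$, and summing against $P(g|O)$ completes the argument. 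I would emphasise that the per-$g$ identity makes uniformity of $P(g|O)$ inessential; only the uniformity of $P(\boldsymbol{r})$ is needed, precisely so that the reindexing preserves the mixture.
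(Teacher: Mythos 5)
Your proposal is correct and follows essentially the same route as the paper's proof: the same key observation that output vertices lie outside the domain of $g$, so their pad bits appear in no correction set and hence only in their own adaptation equation, followed by the same $(\alpha_o,r_o)\mapsto(\alpha_o+b_o\pi,\,r_o\oplus b_o)$ symmetry and a reindexing of the uniform sum over $\boldsymbol{r}$. The only cosmetic difference is that you package the reindexing as an explicit involution on $\mathbb{Z}_2^n$ (making the role of uniformity of $P(\boldsymbol{r})$, and the irrelevance of uniformity of $P(g|O)$, explicit), whereas the paper phrases the same step via the reportability correspondence and \Cref{lem:preimage_gflow}.
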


\begin{proof} We begin by noting that, by definition of gflow, for any $g \sim O$, no element of $O$ is in the domain of $g$. This means that, for all $i \in O$, $i \notin \mathcal{X}_{k}$ and $i \notin \mathcal{Z}_{k}$ for all $k \in V$. The consequence of this being that for any $r_{i}$ for $i \in O$, the only equation for $\alpha'_{k}$ (recall \Cref{eq:adapt_meas_angles}) that contains $r_{i}$ is when $k = i$. This results in the following symmetry:
\begin{align}
\alpha'_{i} = (-1)^{\bigoplus_{j \in \mathcal{X}_{i}} c'_{j} \oplus r_{j}}\alpha_{i} +(r_{i} \oplus \bigoplus_{j \in \mathcal{Z}_{i}} c'_{j} \oplus r_{j})\pi \bmod 2\pi = (-1)^{\bigoplus_{j \in \mathcal{X}_{i}} c'_{j} \oplus r_{j}}(\alpha_{i} + \pi) +((r_{i} \oplus 1) \oplus \bigoplus_{j \in \mathcal{Z}_{i}} c'_{j} \oplus r_{j})\pi \bmod 2\pi
\end{align}

Thus, for fixed gflow $g$ and classical messages $\boldsymbol{c'}$, if $\boldsymbol{\alpha'}$ is reportable from $\boldsymbol{\alpha}$ for one-time pads $\boldsymbol{r}$, i.e. $P(\boldsymbol{\alpha'}|\boldsymbol{c'}, \boldsymbol{\alpha}, \boldsymbol{r}, g) = 1$, then $\boldsymbol{\alpha'}$ is reportable from $\boldsymbol{\alpha} + (0,...,0, b_{o_{1}},...,b_{o_{|O|}})\pi$ for one-time pads $\boldsymbol{r} + (0,...,0, b_{o_{1}},...,b_{o_{|O|}})\pi$, i.e. $P(\boldsymbol{\alpha'}|\boldsymbol{c'}, \boldsymbol{\alpha}+ (0,...,0, b_{o_{1}},...,b_{o_{|O|}})\pi, \boldsymbol{r}+ (0,...,0, b_{o_{1}},...,b_{o_{|O|}})\pi, g) = 1$. In light of 
\Cref{lem:preimage_gflow}, and with the assumption that $P(\boldsymbol{r})$ is uniform, this in particular means that
\begin{align}
\sum_{\boldsymbol{r}} P(\boldsymbol{\alpha'}|\boldsymbol{c'}, \boldsymbol{\alpha}, \boldsymbol{r}, g)P(\boldsymbol{r}) = \sum_{\boldsymbol{r}} P(\boldsymbol{\alpha'}|\boldsymbol{c'}. \boldsymbol{\alpha} + (0,...,0, b_{o_{1}},...,b_{o_{|O|}})\pi, \boldsymbol{r}, g)P(\boldsymbol{r}) 
\end{align}

and so
\begin{align}
\sum_{g \sim O, \boldsymbol{r}} P(g|O)P(\boldsymbol{r})\sigma_{\BQC}^{\boldsymbol{\alpha}, \boldsymbol{r}, g} &= \sum_{\substack{g \sim O, \boldsymbol{r}, \\ \boldsymbol{\alpha'}, \boldsymbol{c'}}} P(g|O)P(\boldsymbol{r})P(\boldsymbol{\alpha'}|\boldsymbol{c'}, \boldsymbol{\alpha}, \boldsymbol{r}, g) \ket{\boldsymbol{\alpha'}\boldsymbol{c'}}\!\!\bra{\boldsymbol{\alpha'}\boldsymbol{c'}} \\
&= \sum_{\substack{g \sim O, \boldsymbol{r}, \\ \boldsymbol{\alpha'}, \boldsymbol{c'}}} P(g|O)P(\boldsymbol{r})P(\boldsymbol{\alpha'}|\boldsymbol{c'}, \boldsymbol{\alpha} + (0,...,0, b_{o_{1}},...,b_{o_{|O|}})\pi, \boldsymbol{r}, g) \ket{\boldsymbol{\alpha'}\boldsymbol{c'}}\!\!\bra{\boldsymbol{\alpha'}\boldsymbol{c'}} \\
&= \sum_{g \sim O, \boldsymbol{r}} P(g|O)P(\boldsymbol{r})\sigma_{\BQC}^{\boldsymbol{\alpha} + (0,...,0, b_{o_{1}},...,b_{o_{|O|}})\pi , \boldsymbol{r}, g} 
\end{align}
\end{proof}

The above lemma states that $\sigma_{\boldsymbol{\alpha}, O} = \sigma_{\boldsymbol{\widetilde{\alpha}}, O}$ for all $\widetilde{\alpha} \sim_{O} \boldsymbol{\alpha}$ which is the notation used in the following proof.

\begin{proof}[Proof of \Cref{thm:multi_round}] For this proof, we use the formulation of the comb min-entropy as expressed in \Cref{eq:maximum_pairs}. That is, we consider 
\begin{align}
\max_{\widehat{E}} \Tr_{\boldsymbol{A}, \boldsymbol{O}, \boldsymbol{\widehat{A}}, \boldsymbol{\widehat{O}}}\left[\left(D_{\client}^{(m)}\widehat{E}^{\top}\right)\left(\sum_{\boldsymbol{\alpha}, O} \ket{\boldsymbol{\alpha},O,\boldsymbol{\alpha},O}\!\!\bra{\boldsymbol{\alpha},O,\boldsymbol{\alpha},O}_{\boldsymbol{A},\boldsymbol{O},\boldsymbol{\widehat{A}}, \boldsymbol{\widehat{O}}}\right) \right] 
\end{align}

where $\widehat{E}$ is a probabilistic comb in $\Comb(\mathbb{C} \rightarrow \mathbb{C}, (A'_{1})^{(1)} \rightarrow (C'_{1})^{(1)}, ..., (A'_{n})^{(m)} \rightarrow (C'_{n})^{(m)}, \mathbb{C} \rightarrow \boldsymbol{\widehat{A}}, \boldsymbol{\widehat{O}})$. The transpose $\top$ is over all spaces $\boldsymbol{A'}^{(1)}, \boldsymbol{C'}^{(1)}, ..., \boldsymbol{A'}^{(n)}, \boldsymbol{C'}^{(n)}$. 

Let $\widehat{E}$ be such a probabilistic comb. Writing $D_{\client}^{(m)}$ as $\sum_{\boldsymbol{\alpha}, O}P(\boldsymbol{\alpha}, O)\ket{\boldsymbol{\alpha}, O}\!\!\bra{\boldsymbol{\alpha}, O} \bigotimes_{j = 1}^{m} \sigma_{\boldsymbol{\alpha}, O}^{(j)}$, we get
\begin{align}
\Tr_{\boldsymbol{A}, \boldsymbol{O}, \boldsymbol{\widehat{A}}, \boldsymbol{\widehat{O}}}&\left[\left(D_{\client}^{(m)}\widehat{E}^{\top}\right)\left(\sum_{\boldsymbol{\alpha}, O} \ket{\boldsymbol{\alpha},O,\boldsymbol{\alpha},O}\!\!\bra{\boldsymbol{\alpha},O,\boldsymbol{\alpha},O}_{\boldsymbol{A},\boldsymbol{O},\boldsymbol{\widehat{A}}, \boldsymbol{\widehat{O}}}\right) \right]\nonumber\\ 
&= \sum_{\boldsymbol{\alpha}, O} \bra{\boldsymbol{\alpha}, O}_{\boldsymbol{A}, \boldsymbol{O}}\bra{\boldsymbol{\alpha}, O}_{\boldsymbol{\widehat{A}}, \boldsymbol{\widehat{O}}} \left[\sum_{\boldsymbol{\alpha}, O} P(\boldsymbol{\alpha}, O)\ket{\boldsymbol{\alpha}, O}\!\!\bra{\boldsymbol{\alpha}, O} \otimes \left(\bigotimes_{j = 1}^{m} \sigma_{\boldsymbol{\alpha}, O}^{(j)} \right)\widehat{E}^{\top} \right]\ket{\boldsymbol{\alpha}, O}_{\boldsymbol{A},\boldsymbol{O}}\ket{\boldsymbol{\alpha}, O}_{\boldsymbol{\widehat{A}}, \boldsymbol{\widehat{O}}} \label{eq:trace_dual_comb_multi_round}
\end{align}

By the fact that $\widehat{E}$ is a probabilistic comb dual to the $\bigotimes_{j = 1}^{m} \sigma_{\boldsymbol{\alpha}, O}^{(j)}$, $\left(\bigotimes_{j = 1}^{m} \sigma_{\boldsymbol{\alpha}, O}^{(j)} \right)\widehat{E}^{\top}$ is a (subnormalised) state in $\H_{B}$. Moreover, since $\bigotimes_{j = 1}^{m} \sigma_{\boldsymbol{\alpha}, O}^{(j)} = \bigotimes_{j = 1}^{m} \sigma_{\boldsymbol{\widetilde{\alpha}}, O}^{(j)}$ for all $\boldsymbol{\widetilde{\alpha}} \sim_{O} \boldsymbol{\alpha}$ by \Cref{lem:output_symmetry}, $\left(\bigotimes_{j = 1}^{m} \sigma_{\boldsymbol{\alpha}, O}^{(j)} \right)\widehat{E}^{\top}$ is the \textit{same} state for each $\widetilde{\alpha} \in \mathcal{A}_{\boldsymbol{\alpha}, O}$. For simplicity, let us replace the notation $\left(\bigotimes_{j = 1}^{m} \sigma_{\boldsymbol{\alpha}, O}^{(j)} \right)\widehat{E}^{\top}$ by $\rho_{\sim_{O} \boldsymbol{\alpha}}$. With this notation and invoking the assumption on $P(\boldsymbol{\alpha}, O)$, \Cref{eq:trace_dual_comb_multi_round} becomes
\begin{align}
\sum_{\boldsymbol{\alpha}, O} \bra{\boldsymbol{\alpha}, O}_{\boldsymbol{A}, \boldsymbol{O}}\bra{\boldsymbol{\alpha}, O}_{\boldsymbol{\widehat{A}}, \boldsymbol{\widehat{O}}} \left(\sum_{\boldsymbol{\alpha}, O} \frac{P(O)}{|\mathcal{A}|^{n}}\ket{\boldsymbol{\alpha}, O}\!\!\bra{\boldsymbol{\alpha}, O} \otimes \rho_{\sim_{O} \boldsymbol{\alpha}} \right)&\ket{\boldsymbol{\alpha}, O}_{\boldsymbol{A},\boldsymbol{O}}\ket{\boldsymbol{\alpha}, O}_{\boldsymbol{\widehat{A}}, \boldsymbol{\widehat{O}}} \nonumber \\ 
&= \frac{1}{|\mathcal{A}|^{n}} \sum_{\boldsymbol{\alpha}, O} P(O) \bra{\boldsymbol{\alpha}, O}_{\boldsymbol{\widehat{A}}, \boldsymbol{\widehat{O}}} \rho_{\sim_{O} \boldsymbol{\alpha}}\ket{\boldsymbol{\alpha}, O} \\
&= \frac{1}{|\mathcal{A}|^{n}} \sum_{O} P(O) \sum_{\boldsymbol{\alpha} \in \mathcal{A}^{n}/\sim_{O}} \sum_{\boldsymbol{\widetilde{\alpha}} \sim_{O} \boldsymbol{\alpha}}\bra{\boldsymbol{\widetilde{\alpha}}, O}_{\boldsymbol{\widehat{A}}, \boldsymbol{\widehat{O}}} \rho_{\sim_{O} \boldsymbol{\alpha}}\ket{\boldsymbol{\widetilde{\alpha}}, O}
\end{align}

where we are abusing notation slightly by denoting by $\boldsymbol{\alpha}$ the equivalence class $[\boldsymbol{\alpha}] \in \mathcal{A}^{n}/\sim_{O}$. Since $\rho_{\sim_{O} \boldsymbol{\alpha}}$ is a normalised state for every $\boldsymbol{\alpha}, O$ and the term $\sum_{\boldsymbol{\widetilde{\alpha}} \sim_{O} \boldsymbol{\alpha}}\bra{\boldsymbol{\widetilde{\alpha}}, O}_{\boldsymbol{A'},\boldsymbol{O'}} \rho_{\sim_{O} \boldsymbol{\alpha}, O}\ket{\boldsymbol{\widetilde{\alpha}}, O}$ can be interpreted as part of the trace over $\rho_{\sim_{O} \boldsymbol{\alpha}}$, we thus have
\begin{align}
\sum_{\boldsymbol{\widetilde{\alpha}} \sim_{O} \boldsymbol{\alpha}}\bra{\boldsymbol{\widetilde{\alpha}}, O}_{\boldsymbol{\widehat{A}}, \boldsymbol{\widehat{O}}} \rho_{\sim_{O} \boldsymbol{\alpha}, O}\ket{\boldsymbol{\widetilde{\alpha}}, O} \leq 1
\end{align}

Thus,
\begin{align}
\frac{1}{|\mathcal{A}|^{n}} \sum_{O} P(O) \sum_{\boldsymbol{\alpha} \in \mathcal{A}^{n}/\sim_{O}} \sum_{\boldsymbol{\widetilde{\alpha}} \sim_{O} \boldsymbol{\alpha}}\bra{\boldsymbol{\widetilde{\alpha}}, O}_{\boldsymbol{\widehat{A}}, \boldsymbol{\widehat{O}}} \rho_{\sim_{O} \boldsymbol{\alpha}}\ket{\boldsymbol{\widetilde{\alpha}}, O} &\leq \frac{1}{|\mathcal{A}|^{n}} \sum_{O} P(O) \sum_{\boldsymbol{\alpha} \in \mathcal{A}^{n}/\sim_{O}} 1 \\
&= \frac{1}{|\mathcal{A}|^{n}} \sum_{O} \frac{P(O)|\mathcal{A}|^{n}}{2^{|O|}} \\
&= \frac{1}{|\mathcal{O}|}\sum_{O}\frac{P(O)}{2^{|O|}}
\end{align}

Since this holds for any $\widehat{E}$, it in particular also holds for the one that maximises the trace, so
\begin{align}
\max_{\widehat{E}} \Tr_{\boldsymbol{A}, \boldsymbol{O}, \boldsymbol{\widehat{A}}, \boldsymbol{\widehat{O}}}\left[ D_{\client}^{(m)}\widehat{E}^{\top}\left(\sum_{\boldsymbol{\alpha}, O} \ket{\boldsymbol{\alpha},O,\boldsymbol{\alpha},O}\!\!\bra{\boldsymbol{\alpha},O,\boldsymbol{\alpha},O}_{\boldsymbol{A},\boldsymbol{O},\boldsymbol{\widehat{A}}, \boldsymbol{\widehat{O}}}\right) \right] \leq \sum_{O}\frac{P(O)}{2^{|O|}}
\end{align}

proving the theorem.
\end{proof}


\section{Grey Box MBQC Supplementary} \label{sec:app_grey_box_examples}

This section contains the proofs of the analytical results presented in \Cref{sec:grey_MBQC} as well as the gflow catalogue used in the examples of the same section.


\subsection{Correctness of $\sigma_{\MBQC}^{g}$} \label{subsec:app_sigma_MBQC_correctness}

The following proposition demonstrates that the operator $\sigma_{\MBQC}^{g} \ast_{\boldsymbol{Q'}} \rho_{G}$ correctly produces the right output state for all compatible measurement channels. We denote a measurement channel via its Choi representation as
\begin{align}
\M_{\alpha_{i}, \omega(i)} := \ket{0}\!\!\bra{0}_{C_{i}} \otimes \ket{+_{\alpha_{i}}}\!\!\bra{+_{\alpha_{i}}}_{\omega(i),Q_{i}}^{\top} + \ket{1}\!\!\bra{1}_{C_{i}} \otimes \ket{-_{\alpha_{i}}}\!\!\bra{-_{\alpha_{i}}}_{\omega(i),Q_{i}}^{\top}
\end{align}

where $\omega(i) \in \{XY, XZ, YZ\}$ denotes the measurement plane, $\alpha_{i}$ the angle from one of the positive axes in that plane, and the subscript $i$ indicates which qubits the measurement operator is related to. The partial transposes $\top$ act just on the $A_{i}$ factor of the operator. We denote by $\M_{\alpha_{i}, \omega(i)}^{c_{i}}$ the part of the operator related to outcome $c_{i}$, i.e. $\M_{\alpha_{i}, \omega(i)}^{1} = \ket{1}\!\!\bra{1}_{C_{i}} \otimes \ket{-_{\alpha_{i}}}\!\!\bra{-_{\alpha_{i}}}_{\omega(i),Q_{i}}$.

\begin{Proposition} \label{prop:correctness_sigma_gflow} Let $g$ be a gflow for $(G, I, O, \omega)$ where $G$ is a graph on vertices $V$, let $\sigma_{\MBQC}^{g}$ be defined as in \Cref{eq:sigma_MBQC}, and let $\mathcal{M}_{\alpha_{i}, \omega(i)}^{c_{i}}$ denote the $c_{i}$ measurement outcome of the measurement channels defined above ($c_{i} = 0$ for positive outcome and $c_{i} = 1$ for the negative outcome). Then
\begin{align}
\left(\bigotimes_{i \in V\setminus O} \mathcal{M}_{\alpha_{i}, \omega(i)}^{c_{i}} \right) \ast_{\boldsymbol{C}_{\setminus O},\boldsymbol{Q}_{\setminus O}} \sigma_{\MBQC}^{g} \ast_{\boldsymbol{Q'}} \rho_{G} \label{eq:sigma_gflow_correctness}
\end{align}

is the same state for all $\boldsymbol{c}_{\setminus O} = c_{1}...c_{|V\setminus O|}$. 
\end{Proposition}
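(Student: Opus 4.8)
The plan is to first unfold the link products in \Cref{eq:sigma_gflow_correctness} into an explicit operator expression, and then to exploit the stabiliser symmetries of $\rho_G$ together with the gflow conditions to absorb all dependence on $\boldsymbol{c}$. Contracting $\sigma_{\MBQC}^{g}$ (\Cref{eq:sigma_MBQC}) with $\rho_G$ on the $A'$ systems realises the classically-controlled channel $\rho \mapsto U_{\text{corr}(\boldsymbol{c})}\,\rho\, U_{\text{corr}(\boldsymbol{c})}^{\dagger}$, and inserting the outcome-$c_i$ branch of each $\mathcal{M}_{\alpha_i, i, \omega(i)}$ feeds the projector $P_i^{c_i}$ — equal to $\ket{+_{\alpha_i}}\!\!\bra{+_{\alpha_i}}_{\omega(i)}$ for $c_i=0$ and $\ket{-_{\alpha_i}}\!\!\bra{-_{\alpha_i}}_{\omega(i)}$ for $c_i=1$ — onto the corresponding output $A_i$ while supplying $c_i$ to $C_i$. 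Tracking the transposes in the link product, the state in \Cref{eq:sigma_gflow_correctness} becomes
\begin{align}
\rho_{\text{out}}(\boldsymbol{c}) = \Tr_{V\setminus O}\left[\left(\bigotimes_{i\in V\setminus O} P_i^{c_i}\right) U_{\text{corr}(\boldsymbol{c})}\, \rho_G\, U_{\text{corr}(\boldsymbol{c})}^{\dagger}\right],
\end{align}
and the goal reduces to proving $\rho_{\text{out}}(\boldsymbol{c}) = \rho_{\text{out}}(\boldsymbol{0})$ for every $\boldsymbol{c}$.

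The heart of the argument is a factorisation of the product of corrective stabilisers $W_{\boldsymbol{c}} := \prod_{i\in V\setminus O} K_{g(i)}^{c_i}$. Writing $S_v$ for the measurement-plane symmetry of \Cref{eq:meas_plane_syms} ($Z$ for $XY$, $XZ$ for $XZ$, $X$ for $YZ$), I would show, qubit by qubit, that $W_{\boldsymbol{c}} = S_{\boldsymbol{c}}\, U_{\text{corr}(\boldsymbol{c})}$ up to a global phase, where $S_{\boldsymbol{c}} := \prod_{v\in V\setminus O} S_v^{c_v}$ is supported only on the measured qubits. This is exactly where the gflow conditions are used: conditions 3--5 of \Cref{def:gflow} force the restriction of $K_{g(v)}$ to qubit $v$ to equal $S_v$, while for $v\neq i$ the generator $K_{g(i)}$ contributes an $X$ on $v$ iff $v\in g(i)$ and a $Z$ on $v$ iff $v\in\Odd(g(i))$; summed over the measured $i$ with $c_i=1$, these cross-terms reproduce precisely the exponents $\bigoplus_{i\in\mathcal{X}_v} c_i$ and $\bigoplus_{i\in\mathcal{Z}_v} c_i$ defining $U_{\text{corr}(\boldsymbol{c}),v}$ (recall \Cref{eq:U_corr_i,eq:X_Z_correction_sets}).

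Because every generator $K_w$ stabilises $\rho_G$ and the $K_w$ mutually commute, $W_{\boldsymbol{c}}$ is Hermitian, squares to the identity, and satisfies $W_{\boldsymbol{c}}\,\rho_G\, W_{\boldsymbol{c}} = \rho_G$; rearranging the factorisation then yields $U_{\text{corr}(\boldsymbol{c})}\,\rho_G\, U_{\text{corr}(\boldsymbol{c})}^{\dagger} = S_{\boldsymbol{c}}^{\dagger}\,\rho_G\, S_{\boldsymbol{c}}$. Substituting this into the expression for $\rho_{\text{out}}(\boldsymbol{c})$, invoking cyclicity of the partial trace over $V\setminus O$ (on which both $S_{\boldsymbol{c}}$ and each $P_i^{c_i}$ are supported), and finally using \Cref{eq:meas_plane_syms} in the form $S_i^{c_i} P_i^{c_i} S_i^{c_i\dagger} = P_i^{0}$ collapses every projector to its all-positive version, giving $\rho_{\text{out}}(\boldsymbol{c}) = \Tr_{V\setminus O}[(\bigotimes_{i} P_i^{0})\rho_G] = \rho_{\text{out}}(\boldsymbol{0})$, independent of $\boldsymbol{c}$.

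The main obstacle I anticipate is the bookkeeping of Pauli phases incurred when reordering the $X$- and $Z$-factors on a common qubit in the factorisation $W_{\boldsymbol{c}} = S_{\boldsymbol{c}} U_{\text{corr}(\boldsymbol{c})}$ — this is precisely the operator-ordering subtlety of \Cref{eq:U_corr_i} flagged in the main text — but since $W_{\boldsymbol{c}}$ acts on $\rho_G$ by conjugation, any such global phase is irrelevant. The only other care needed is in the transpose-sensitive link-product unfolding of the first step, which is routine, and in confirming that the stabiliser identity $W_{\boldsymbol{c}}\rho_G W_{\boldsymbol{c}} = \rho_G$ is the operator-level content of the sufficiency direction of \cite[Theorem 2,][]{browne2007generalized}.
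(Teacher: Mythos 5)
Your proof is correct, and it is worth noting how its organisation differs from the paper's. The paper's argument (\Cref{subsec:app_sigma_MBQC_correctness}) stays inside the link-product calculus and absorbs negative outcomes one qubit at a time: for a single $c_i = 1$ it rewrites $\ket{-_{\alpha_i}}\!\!\bra{-_{\alpha_i}}_{\omega(i)}$ as $K_{g(i)}\vert_i^\dagger \ket{+_{\alpha_i}}\!\!\bra{+_{\alpha_i}}_{\omega(i)} K_{g(i)}\vert_i$, pushes $K_{g(i)}\vert_i$ through $U_{\text{corr}(1\boldsymbol{c}\vert_{\setminus i})}$ to obtain $U_{\text{corr}(0\boldsymbol{c}\vert_{\setminus i})} K_{g(i)}$ up to a sign that cancels under conjugation, and then transfers the full stabiliser $K_{g(i)}$ through the Choi operator onto $\rho_G$, where it acts trivially; this is then repeated for every flipped outcome. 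You instead unfold all link products at the outset into the plain operator expression $\Tr_{V\setminus O}\bigl[\bigl(\bigotimes_i P_i^{c_i}\bigr) U_{\text{corr}(\boldsymbol{c})}\, \rho_G\, U_{\text{corr}(\boldsymbol{c})}^{\dagger}\bigr]$ and prove a single global factorisation $\prod_i K_{g(i)}^{c_i} \propto S_{\boldsymbol{c}}\, U_{\text{corr}(\boldsymbol{c})}$, which is precisely the paper's per-qubit identity applied to all flipped outcomes simultaneously; the two proofs use identical ingredients (conditions 3--5 of \Cref{def:gflow} forcing $K_{g(v)}\vert_v$ to equal the plane symmetry, the cross terms of the $K_{g(i)}$ reproducing $\mathcal{X}_v$ and $\mathcal{Z}_v$ from \Cref{eq:U_corr_i}, stabiliser invariance of $\rho_G$, and cancellation of Pauli-reordering signs under conjugation). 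What your route buys is a single invocation of stabiliser invariance in place of an outcome-by-outcome iteration, and no link-product bookkeeping after the first step, which makes the role of the gflow conditions especially transparent; the cost is that the "routine" unfolding step does deserve a few explicit lines (the transposes in \Cref{def:link_product} and in the measurement Choi operators must be seen to cancel, and one must note that corrections and projectors on distinct qubits commute so the time-ordered contraction collapses to your static expression). One small correction: the stabiliser identity $W_{\boldsymbol{c}}\rho_G W_{\boldsymbol{c}} = \rho_G$ needs no appeal to the sufficiency theorem of Browne et al.; it is immediate from $K_w\ket{G} = \ket{G}$ for every generator, and it is rather the proposition itself that constitutes the operator-level content of that theorem.
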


\begin{proof} The proposition is a consequence of the fact that $\sigma_{\MBQC}^{g}$ is defined directly from gflow. We write out the details in part to highlight that no issues arise with the choice of ordering of $X$ and $Z$ operators in \Cref{eq:U_corr_i}. We proceed by showing that the state produced by any series of measurement outcomes is the same as that produced by all positive measurement outcomes (which encodes the computation for the MBQC):
\begin{align}
\left(\bigotimes_{i \in V\setminus O} \mathcal{M}_{\alpha_{i}, \omega(i)}^{c_{i}} \right) \ast_{\boldsymbol{C}_{\setminus 0}, \boldsymbol{Q}_{\setminus O}} \sigma_{\MBQC}^{g} \ast_{\boldsymbol{Q'}} \rho_{G} = \left(\bigotimes_{i \in V\setminus O} \mathcal{M}_{\alpha_{i}, \omega(i)}^{0} \right) \ast_{\boldsymbol{C}_{\setminus O}, \boldsymbol{Q}_{\setminus O}} \sigma_{\MBQC}^{g} \ast_{\boldsymbol{Q'}} \rho_{G} \label{eq:tensor_product_meas_equiv}
\end{align}

Consider first only a single measurement channel for qubit $i \in V\setminus O$, which obtains the negative outcome:
\begin{align}
\mathcal{M}_{\alpha_{i}, \omega(i)}^{1}\ast_{C_{i},Q_{i}} \sigma_{\MBQC}^{g} \ast_{\boldsymbol{Q'}} \rho_{G} &= \left(\ket{1}\!\!\bra{1}_{C_{i}} \otimes \ket{-_{\alpha_{i}}}\!\!\bra{-_{\alpha_{i}}}_{\omega(i),Q_{i}} \right) \ast_{C_{i},Q_{i}} \sigma_{\MBQC}^{g} \ast_{\boldsymbol{Q'}} \rho_{G}.
\end{align}

By the definition of gflow, we are guaranteed that
\begin{align}
\ket{-_{\alpha_{i}}}\!\!\bra{-_{\alpha_{i}}}_{\omega(i),Q_{i}}^{\top} \equiv K_{g(i)\vert_{i}}^{\dagger}\ket{+_{\alpha_{i}}}\!\!\bra{+_{\alpha_{i}}}_{\omega(i),Q_{i}}^{\top}K_{g(i)\vert_{i}}
\end{align}

where $K_{g(i)\vert_{i}}$ denotes the operator which appears as the tensor factor of $i$ in the stabiliser $K_{g(i)}$ and we have used that $K_{v}^{\dagger} \equiv K_{v}^{\top}$ for all $v$. Thus, using the properties of the link product, we have
\begin{align}
\left(\ket{1}\!\!\bra{1}_{C_{i}} \otimes \ket{-_{\alpha_{i}}}\!\!\bra{-_{\alpha_{i}}}_{\omega(i),Q_{i}}^{\top} \right) \ast_{C_{i},Q_{i}} \sigma_{\MBQC}^{g} \ast_{\boldsymbol{Q'}} \rho_{G} &= \left(\ket{1}\!\!\bra{1}_{C_{i}} \otimes K_{g(i)\vert_{i}}^{\dagger}\ket{+_{\alpha_{i}}}\!\!\bra{+_{\alpha_{i}}}_{\omega(i),Q_{i}}^{\top}K_{g(i)\vert_{i}} \right) \ast_{C_{i},Q_{i}} \sigma_{\MBQC}^{g} \ast_{\boldsymbol{Q'}} \rho_{G}  \\
&= \left(\ket{1}\!\!\bra{1}_{C_{i}} \otimes \ket{+_{\alpha_{i}}}\!\!\bra{+_{\alpha_{i}}}_{\omega(i),Q_{i}}^{\top} \right) \ast_{C_{i},Q_{i}} K_{g(i)\vert_{i}} \sigma_{\MBQC}^{g} K_{g(i)\vert_{i}}^{\dagger} \ast_{\boldsymbol{Q'}} \rho_{G}.
\end{align}

By contracting the link product over $C_{i}$ and writing $\sigma_{\MBQC}^{g}$ in the form of \Cref{eq:sigma_MBQC}, we get
\begin{align}
\ket{+_{\alpha_{i}}}\!\!\bra{+_{\alpha_{i}}}_{\omega(i),Q_{i}}^{\top} \ast_{Q_{i}} \left(\sum_{\boldsymbol{a}, \boldsymbol{b}, \boldsymbol{c}_{\setminus i}} K_{g(i)\vert_{i}} U_{\text{corr}(1_{i}\boldsymbol{c}_{\setminus i})} \ket{\boldsymbol{a}}\!\!\bra{\boldsymbol{b}}_{Q_{i}} U_{\text{corr}(1_{i}\boldsymbol{c}_{\setminus i})}^{\dagger} K_{g(i)\vert_{i}}^{\dagger} \otimes \ket{\boldsymbol{c}_{\setminus i}\boldsymbol{a}}\!\!\bra{\boldsymbol{c}_{\setminus i}\boldsymbol{b}}_{\boldsymbol{C}_{\setminus i}\boldsymbol{Q'}} \right) \ast_{\boldsymbol{Q'}} \rho_{G} \label{eq:single_meas_neg_outcome}
\end{align}

where the sum over $\boldsymbol{c}_{\setminus i}$ indicates the sum over basis element of all the $\H_{C_{j}}$ for $j \neq i$ and $1_{i}\boldsymbol{c}_{\setminus i}$ denotes the binary string $\boldsymbol{c}$ with the entry corresponding to $C_{i}$ a $1$ and the other entries given by $\boldsymbol{c}_{\setminus i}$. By the definition of the correction sets $\mathcal{X}_{j}$ and $\mathcal{Z}_{j}$, as well as the definition of $U_{\text{corr}(\boldsymbol{c})}$, we can write
\begin{align}
U_{\text{corr}(1_{i}\boldsymbol{c}_{\setminus i})} = (-1)^{f(\boldsymbol{c}_{\setminus i})}U_{\text{corr}(0_{i}\boldsymbol{c}_{\setminus i})}K_{g(i)\vert_{\setminus i}}
\end{align}

where the exponent $f(\boldsymbol{c}_{\setminus i})$ encodes the coefficient that arises from commuting the factors of $K_{g(i)\vert_{\setminus i}}$ through the other factors of $U_{\text{corr}(1_{i}\boldsymbol{c}_{\setminus i})}$ and where $U_{\text{corr}(0_{i}\boldsymbol{c}_{\setminus i})}$ denotes the correction operator with $c_{i} = 0$. In particular, this means that
\begin{align}
K_{g(i)\vert_{i}} U_{\text{corr}(1_{i}\boldsymbol{c}_{\setminus i})} &= (-1)^{f(\boldsymbol{c}_{\setminus i}) + t_{i}}U_{\text{corr}(0_{i}\boldsymbol{c}_{\setminus i})}K_{g(i)\vert_{i}} K_{g(i)\vert_{\setminus i}}  \\
&= (-1)^{f(\boldsymbol{c}_{\setminus i}) + t_{i}}U_{\text{corr}(0_{i}\boldsymbol{c}_{\setminus i})}K_{g(i)}
\end{align}

where $t_{i}$ encodes the coefficient arises from commuting $K_{g(i)\vert_{i}}$ through $U_{\text{corr}(0_{i}\boldsymbol{c}_{\setminus i})}$. Since the same coefficient arises from the conjugate term (i.e. $U_{\text{corr}(1_{i}\boldsymbol{c}_{\setminus i})}^{\dagger}K_{g(i)\vert_{i}}^{\dagger}$), the net result is that no $-1$ factor can appear. That is, we have
\begin{align}
K_{g(i)\vert_{i}} U_{\text{corr}(1_{i}\boldsymbol{c}_{\setminus i})} (\cdot)  U_{\text{corr}(1_{i}\boldsymbol{c}_{\setminus i})}^{\dagger}K_{g(i)\vert_{i}}^{\dagger} = U_{\text{corr}(0_{i}\boldsymbol{c}_{\setminus i})}K_{g(i)}(\cdot)K_{g(i)}^{\dagger}U_{\text{corr}(0_{i}\boldsymbol{c}_{\setminus i})}^{\dagger}
\end{align}

The cancelling of any phase factor arising from commuting $X$ and $Z$ factors in $U_{\text{corr}(\boldsymbol{c})}$ is the reason why the ordering in \Cref{eq:U_corr_i} is justified. Continuing from \Cref{eq:single_meas_neg_outcome}, we have
\begin{align}
\ket{+_{\alpha_{i}}}\!\!\bra{+_{\alpha_{i}}}_{\omega(i),Q_{i}}^{\top} &\ast_{Q_{i}} \left(\sum_{\boldsymbol{a}, \boldsymbol{b}, \boldsymbol{c}_{\setminus i}} U_{\text{corr}(0_{i}\boldsymbol{c}_{\setminus i})}K_{g(i)} \ket{\boldsymbol{a}}\!\!\bra{\boldsymbol{b}}_{\boldsymbol{Q}} K_{g(i)}^{\dagger}U_{\text{corr}(0_{i}\boldsymbol{c}_{\setminus i})}^{\dagger} \otimes \ket{\boldsymbol{c}_{\setminus i}\boldsymbol{a}}\!\!\bra{\boldsymbol{c}_{\setminus i}\boldsymbol{b}}_{\boldsymbol{C}_{\setminus i}\boldsymbol{Q'}} \right) \ast_{\boldsymbol{Q'}} \rho_{G} \\
&= \ket{+_{\alpha_{i}}}\!\!\bra{+_{\alpha_{i}}}_{\omega(i),Q_{i}}^{\top} \ast_{Q_{i}} \left(\sum_{\boldsymbol{a}, \boldsymbol{b}, \boldsymbol{c}_{\setminus i}} U_{\text{corr}(0_{i}\boldsymbol{c}_{\setminus i})} \ket{\boldsymbol{a}}\!\!\bra{\boldsymbol{b}}_{\boldsymbol{Q}} U_{\text{corr}(0_{i}\boldsymbol{c}_{\setminus i})}^{\dagger}  \otimes \ket{\boldsymbol{c}_{\setminus i}}\!\!\bra{\boldsymbol{c}_{\setminus i}}_{\boldsymbol{C}_{\setminus i}} \otimes K_{g(i)}\ket{\boldsymbol{a}}\!\!\bra{\boldsymbol{b}}_{\boldsymbol{Q'}}K_{g(i)}^{\dagger} \right) \ast_{\boldsymbol{Q'}} \rho_{G} \label{eq:properties_of_choi}\\
&\equiv \left(\ket{0}\!\!\bra{0}_{C_{i}} \otimes \ket{+_{\alpha_{i}}}\!\!\bra{+_{\alpha_{i}}}_{\omega(i),Q_{i}}^{\top} \right) \ast_{C_{i},Q_{i}} K_{g(i), \boldsymbol{Q'}} \sigma_{\MBQC}^{g} K_{g(i), \boldsymbol{Q'}}^{\dagger} \ast_{\boldsymbol{Q'}} \rho_{G} \label{eq:K_g_Aprime} \\
&= \mathcal{M}_{\alpha_{i}, \omega(i)}^{0} \ast_{C_{i},Q_{i}}  \sigma_{\MBQC}^{g}  \ast_{\boldsymbol{Q'}} K_{g(i)} \rho_{G} K_{g(i)}^{\dagger}\\
&= \mathcal{M}_{\alpha_{i}, \omega(i)}^{0} \ast_{C_{i},Q_{i}}  \sigma_{\MBQC}^{g}  \ast_{\boldsymbol{Q'}} \rho_{G} 
\end{align}

where the extra subscript $\boldsymbol{Q'}$ on the operators in \Cref{eq:K_g_Aprime} indicates that they act on the appropriate spaces $Q'_{j}$ rather than $Q_{j}$ as before, and can thus be transferred to the $\rho_{G}$ via the properties of the link product (which also uses the fact that $K_{g(i)}^{\dagger} = K_{g(i)}^{T}$). We have also made use of properties of the Choi operator in \Cref{eq:properties_of_choi} to transfer the $K_{g(i)}$ from the $\boldsymbol{Q}$-spaces to the $\boldsymbol{Q'}$-spaces in the first place.

Since the left-hand side of \Cref{eq:tensor_product_meas_equiv} contains a tensor product of terms $\mathcal{M}_{\alpha_{i}, \omega(i)}^{c_{i}}$, we can apply the same reasoning as above to each factor for which $c_{i} = 1$ which thus establishes that \Cref{eq:tensor_product_meas_equiv} does indeed hold for any measurement outcomes $\boldsymbol{c}$. This means that once all the link products are evaluated, which in particular includes the trace over all spaces except for $\bigotimes_{i \in O} \H_{Q_{i}}$, the same state is produced on the output space.
\end{proof}


\subsection{Equivalence of Gflows Operators} \label{subsec:app_gflow_causal_equivalence}

The following is the proof of \Cref{prop:no_causal_learning}, which showed that, in the absence of noise, the gflows for a given measurement plane and partial order are indistinguishable.

\begin{proof}[Proof of \Cref{prop:no_causal_learning}] Consider
\begin{align}
\left(\bigotimes_{j = 1}^{|V \setminus O|} \ket{c_{j}}\!\!\bra{c_{j}}_{C_{j}} \right) \ast_{\boldsymbol{C}_{\setminus O}} \sigma_{\MBQC}^{g} \ast_{\boldsymbol{Q'}} \rho_{G}
\end{align}

for some $g \sim (G, I, O, \omega)$. Evaluating the left-hand link product gives
\begin{align}
\left(\sum_{\boldsymbol{a}, \boldsymbol{b}} U_{\text{corr}(\boldsymbol{c}_{\setminus O})}^{g}\ket{\boldsymbol{a}}\!\!\bra{\boldsymbol{b}}_{\boldsymbol{Q}}(U_{\text{corr}(\boldsymbol{c}_{\setminus O})}^{g})^{\dagger} \otimes \ket{\boldsymbol{a}}\!\!\bra{\boldsymbol{b}}_{\boldsymbol{Q'}} \right) \ast_{\boldsymbol{Q'}} \rho_{G} = \left(\sum_{\boldsymbol{a}, \boldsymbol{b}}\ket{\boldsymbol{a}}\!\!\bra{\boldsymbol{b}}_{\boldsymbol{Q}} \otimes \ket{\boldsymbol{a}}\!\!\bra{\boldsymbol{b}}_{\boldsymbol{Q'}} \right) \ast_{\boldsymbol{Q'}}  U_{\text{corr}(\boldsymbol{c}_{\setminus O})}^{g} \rho_{G} (U_{\text{corr}(\boldsymbol{c}_{\setminus O})}^{g})^{\dagger}
\end{align}

where $\boldsymbol{c}_{\setminus O} = c_{1}...c_{|V \setminus O|}$. To prove the proposition, it suffices to show that all the $U_{\text{corr}(\boldsymbol{c}_{\setminus O})}^{g}$ as $g$ varies are mutually related by stabilisers of $G$ for each $\boldsymbol{c}_{\setminus O}$, up to a phase (which is cancelled by the corresponding conjugate phase from the adjoint $(U_{\text{corr}(\boldsymbol{c}_{\setminus O})}^{g})^{\dagger}$). 

Let $g, g' \sim (G, I, O, \omega)$ be arbitrary. By definition of $U_{\text{corr}(\boldsymbol{c}_{\setminus O})}$, we have
\begin{align}
U_{\text{corr}(\boldsymbol{c}_{\setminus O})}^{g} &\propto \prod_{j = 1}^{|V \setminus O|} K_{g(j)}^{c_{j}}\vert_{\setminus j} \\
U_{\text{corr}(\boldsymbol{c}_{\setminus O})}^{g'} &\propto \prod_{j = 1}^{|V \setminus O|} K_{g'(j)}^{c_{j}}\vert_{\setminus j} 
\end{align}

where $\propto$ denotes that a $-1$ phase may arise from commuting $X$ and $Z$ operators to arrive at the above form of the operator in terms of $K_{g(j)}$ from the canonical form of $U_{\text{corr}(\boldsymbol{c}_{\setminus O})}^{g}$ (respectively $U_{\text{corr}(\boldsymbol{c}_{\setminus O})}^{g'}$) as in \Cref{eq:U_corr} and \Cref{eq:U_corr_i}. Despite that $K_{g(j)}$ and $K_{g'(j)}$ may be different stabilisers, by the fact that both $g$ and $g'$ observe the same measurement planes, the $j$th tensor factor of each is the same. Thus,
\begin{align}
U_{\text{corr}(\boldsymbol{c}_{\setminus O})}^{g} &\propto \left(\bigotimes_{l = 1}^{|V \setminus O|} K_{g(l)}^{c_{l}}\vert_{l} \right) \prod_{j = 1}^{|V \setminus O|} K_{g(j)} \\
&= \left(\bigotimes_{l = 1}^{|V \setminus O|} K_{g(l)}^{c_{l}}\vert_{l} \right) \prod_{j = 1}^{|V \setminus O|} K_{g'(j)} K_{g'(j)} K_{g(j)}  \\
&=  \left(\bigotimes_{l = 1}^{|V \setminus O|} K_{g'(l)}^{c_{l}}\vert_{l} \right) \left( \prod_{j = 1}^{|V \setminus O|} K_{g'(j)} \right) \left(\prod_{i = 1}^{|V \setminus O|}  K_{g'(i)} K_{g(i)} \right) \\
&\propto U_{\text{corr}(\boldsymbol{c}_{\setminus O})}^{g'}\left(\prod_{i = 1}^{|V \setminus O|} K_{g(i)} K_{g'(i)}  \right).
\end{align}

The restriction to only those gflows $g \sim (G, I, O, \omega)$ that have mutually compatible partial orders that is made in the statement of the proposition is required when considering $\sigma_{\MBQC}^{g}$ with a total order on input and output spaces.
\end{proof}


\subsection{Gflow-Induced Quantum Causal Models} \label{subsec:app_gflow_dag}

This subsection provides some further details regarding the the quantum causal model (QCM) induced by gflow for MBQC and thus supports \Cref{subsec:caus_inf} in the main text. Firstly, we confirm here that gflow does indeed define a DAG, which is a requirement for showing \Cref{prop:MBQC_QCM}. Thereafter we provide a table that elucidates the comparisons between the components of the QCM defined for MBQC and the components of classical causal models as presented eg., by Pearl in \cite{pearl2009causality}.

\begin{Proposition} \label{prop:acyclic} Let $(G, I, O, \omega)$ be such that a gflow exists and let $(g, <)$ be a choice of such a gflow. The directed graph $\overline{G}$ on vertex set $V$ and edge set defined from the gflow via 
\begin{align*}
E &:= \left\{(i, j) \in V \times V| j\in V, i \in \mathcal{X}_{j} \cup \mathcal{Z}_{j} \right\}
\end{align*}

is acyclic.
\end{Proposition}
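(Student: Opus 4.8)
The plan is to show that every directed edge of $\overline{G}$ points strictly forward in the partial order $<$ supplied by the gflow, and then to deduce acyclicity from the irreflexivity and transitivity of a strict partial order.

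First I would unpack the edge set in terms of the map $g$. By the definition of the correction sets in \Cref{eq:X_Z_correction_sets}, an edge $(i,j) \in E$ holds precisely when $i \in \mathcal{X}_{j} \cup \mathcal{Z}_{j}$, which unwinds to the condition that $i \neq j$ together with either $j \in g(i)$ or $j \in \Odd(g(i))$. Two small bookkeeping points are worth recording at this stage: the exclusion of the measured qubit itself built into the definitions of $\mathcal{X}_{j}$ and $\mathcal{Z}_{j}$ guarantees that $\overline{G}$ has no self-loops, and the source $i$ of any edge must lie in $O^{c}$, since $g$ is only defined there.

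Next I would invoke the order conditions of \Cref{def:gflow}. Conditions $1$ and $2$ state exactly that if $j \in g(i)$ with $j \neq i$ then $i < j$, and likewise that if $j \in \Odd(g(i))$ with $j \neq i$ then $i < j$. Hence in either of the two cases unwound above, the presence of an edge $(i,j)$ forces $i < j$. In other words, the edge relation of $\overline{G}$ is contained in the strict partial order $<$.

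Acyclicity then follows formally. Suppose, toward a contradiction, that $\overline{G}$ contained a directed cycle $i_{1} \to i_{2} \to \cdots \to i_{k} \to i_{1}$. Applying the previous step to each edge of the cycle yields $i_{1} < i_{2} < \cdots < i_{k} < i_{1}$, and transitivity of $<$ gives $i_{1} < i_{1}$, contradicting the irreflexivity of a strict partial order. There is essentially no hard step here; the only place demanding care is the translation between the correction-set notation and the $g(v)$/$\Odd(g(v))$ language of the gflow axioms, and in particular checking that the $v' \neq v$ provisos in conditions $1$ and $2$ line up with the self-exclusions baked into $\mathcal{X}_{v}$ and $\mathcal{Z}_{v}$, so that the order conclusion genuinely applies to every edge of $\overline{G}$.
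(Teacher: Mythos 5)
Your proof is correct and follows essentially the same route as the paper's: translate each edge $(i,j)$ of $\overline{G}$ into the conditions $j \in g(i)$ or $j \in \Odd(g(i))$ (with $i \neq j$), invoke conditions $1$ and $2$ of \Cref{def:gflow} to get $i < j$, and derive a contradiction from a hypothetical cycle via transitivity and strictness of the order. The extra bookkeeping you record (no self-loops, edge sources lying in $O^{c}$) is accurate but not needed beyond what the paper's argument already uses.
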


\begin{proof} Suppose for a contradiction that $\overline{G}$ is not acyclic, that is, there exists a directed path $v_{0} \rightarrow v_{1} \rightarrow ... \rightarrow v_{k} = v_{0}$ for some sequence of vertices $v_{0}, ..., v_{k}$. By the definition of the edge set of $\overline{G}$, it follows that either $v_{i+1} \in g(v_{i})\setminus\{v_{i}\}$ or $v_{i} \neq v_{i+1} \in \Od(g(v_{i}))$. In either case $v_{i} < v_{i+1}$ in the ordering of the given gflow. Thus, $v_{0} < v_{k} = v_{0}$ by transitivity of the order, which gives the desired contradiction as the order is strict by definition.
\end{proof}

Classical causal models are typically presented as consisting of a set of observed variables $V = \{V_{1}, ..., V_{n}\}$ (those within the model), a set of unobserved variables $U = \{U_{1}, ..., U_{n}\}$ (those determined by factors outside of the model), a DAG that specifies which variables (both unobserved and observed) that have a causal influence on a given observed variable, and set of functions $F = \{f_{i}: \Pa(V_{i}) \cup U_{i} \rightarrow V_{i} \}_{i=1}^{n}$ that map from the parents of a variable to the variable itself. These functions are often called structural equations and uniquely specify the values for the observed variables given values of the unobserved variables. A probabilistic causal model further includes a distribution $P(U)$ over the unobserved variables. Using this terminology, the following table outlines the correspondence between classical causal models and the QCM $\sigma_{\MBQC}^{g}$ defined in \Cref{subsec:caus_inf}.

\begin{center}
\begin{tabular}{|c|c|c|c|c|}
\hline
&\multicolumn{4}{c|}{Comparing the QCM $\sigma_{\MBQC}^{g}$ to Classical Causal Models}\\
\hline
\textbf{Classical Causal Model} & Unobserved Variable $U_{i}$ & Observed Variable $V_{i}$ & $f_{i}: \Pa(V_{i}) \cup U_{i} \rightarrow V_{i}$ & DAG \\
\textbf{MBQC QCM $\sigma_{\MBQC}^{g}$} & Input Space $\H_{A'_{i}}$ & Quantum Node $\H_{A_{i}} \otimes \H_{C_{i}}$ & $\rho_{A_{i}|C_{j:j\in\Pa(i)}, A'_{i}}$  & $\overline{G}$\\
\hline
\end{tabular}
\end{center}


\subsection{Gflow Catalogue} \label{subsec:app_gflow_catalogue}

\Cref{fig:XY_gflows}, \Cref{fig:XZ_gflows}, and \Cref{fig:YZ_gflows} depict the DAGs corresponding to the $15$ different gflows for the four-vertex graph considered in \Cref{sec:grey_MBQC} and depicted in \Cref{fig:grey_box_graph}, grouped by the assigned measurement plane for the second qubit (the first is always measured in the $XY$-plane). The details of the gflow and corresponding $U_{\text{corr}(\boldsymbol{c})}$ for each DAG are gives in the caption.


\section{Details of SDP Implementation} \label{sec:app_SDP_implementation}

The numerical calculations of the guessing probabilities throughout this work made use of the convex optimisation library CVXPY \cite{diamond2016cvxpy, agrawal2018rewriting}. We primarily used the Splitting Conic Solver (SCS) \cite{scs}. Our code is provided at \cite{Smith_Min_Entropy_and_MBQC}. Calculations were run on an HP $Z4$ $G4$ $9980XE$ workstation. 

To provide some indication of the limitations of the numerical approach in its current form, to complement those detailed in \Cref{subsec:limitations}, we document here some of hurdles we faced when calculating guessing probabilities. For the BQC examples, we generated the classical combs as $1$-dimensional objects representing the corresponding diagonals. Despite this, size issues played a role when storing the combs even before calling the SDP solver: $D_{\client}$ for a single round and for any $\mathcal{A}$ of size greater than $12$ and $D_{\client}^{(2)}$ for $\mathcal{A}$ of size greater than $4$ caused problems. The guessing probability for $D_{\client}^{(1)}$ for $|\mathcal{A}| = 4$ was calculated within a day, whereas for $|\mathcal{A}| = 8$, it took on the order of a week and required approximately $110 GB$ of RAM. Attempting to calculate the guessing probability for $D_{\client}^{(2)}$ with $|\mathcal{A}| = 4$ exceeded the available RAM of the machine.

For the Grey Box MBQC examples, the guessing probabilities for all single round combs could be calculated quickly (on the order of minutes) however all multi-round combs again caused size problems. Typically, the primary bottleneck occurred when enforcing the comb constraints (i.e. sequential partial trace constraints) on the variable in the SDP solver, which tended to dominate the runtime of the algorithm. Otherwise, some size error would occur during the Cone Matrix Stuffing reduction phase of the solver (see the CVXPY documentation for details).

\begin{figure}
\captionsetup{justification=centering}
\begin{subfigure}{0.3\textwidth}
\captionsetup{justification=centering}
\centering
\includegraphics[width=0.7\textwidth]{Figures/g1}
\caption{$g_{1}: 1 \mapsto \{2\}, 2 \mapsto \{3,4\}$, \newline $U_{\text{corr}(\boldsymbol{c})}^{g_{1}} = X_{2}^{c_{1}} \otimes X_{3}^{c_{2}}Z_{3}^{c_{2}} \otimes X_{4}^{c_{2}}Z_{4}^{c_{1}c_{2}}$}
\label{fig:g1}
\end{subfigure}
\hfill
\begin{subfigure}{0.3\textwidth}
\captionsetup{justification=centering}
\centering
\includegraphics[width=0.7\textwidth]{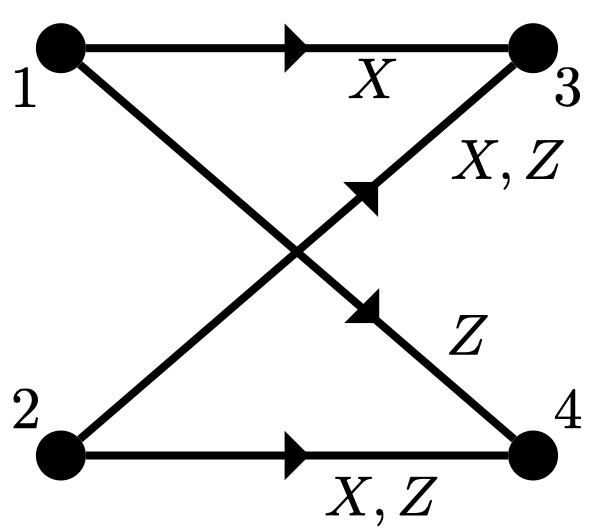}
\caption{$g_{2}: 1 \mapsto \{3\}, 2 \mapsto \{3,4\}$, \newline $U_{\text{corr}(\boldsymbol{c})}^{g_{2}} = X_{3}^{c_{1}c_{2}}Z_{3}^{c_{2}} \otimes X_{4}^{c_{2}}Z_{4}^{c_{1}c_{2}}$}
\label{fig:g2}
\end{subfigure}
\hfill
\begin{subfigure}{0.3\textwidth}
\captionsetup{justification=centering}
\centering
\includegraphics[width=0.7\textwidth]{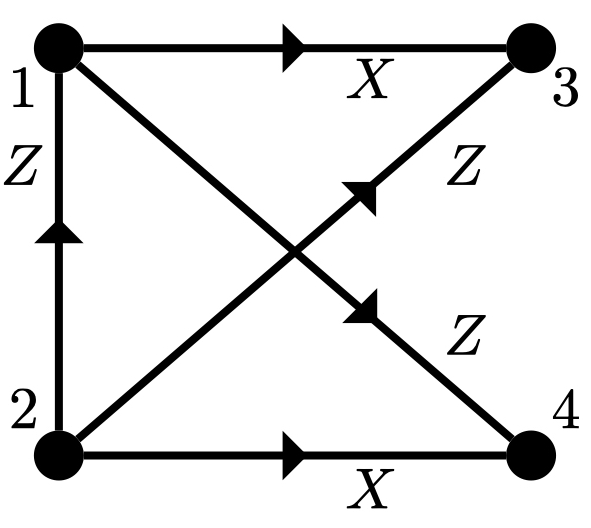}
\caption{$g_{3}: 1 \mapsto \{3\}, 2 \mapsto \{4\}$, \newline $U_{\text{corr}(\boldsymbol{c})}^{g_{3}} = Z_{1}^{c_{2}} \otimes X_{3}^{c_{1}}Z_{3}^{c_{2}} \otimes X_{4}^{c_{2}}Z_{4}^{c_{1}}$}
\label{fig:g3}
\end{subfigure} \vspace{15mm}\\

\begin{subfigure}{0.5\textwidth}
\captionsetup{justification=centering}
\centering
\includegraphics[width=0.425\textwidth]{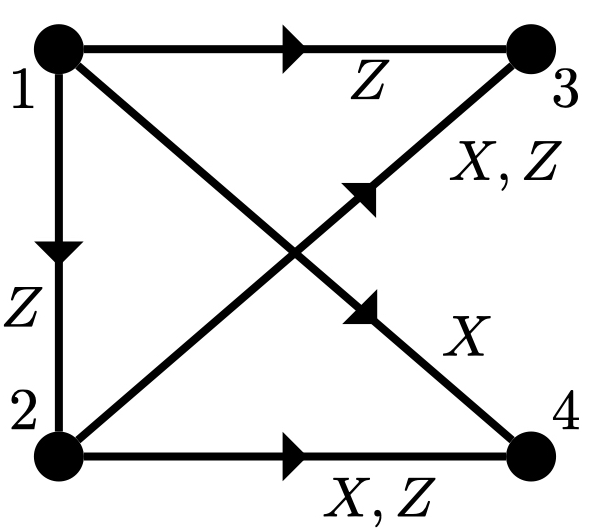}
\caption{$g_{4}: 1 \mapsto \{4\}, 2 \mapsto \{3,4\}$, \newline $U_{\text{corr}(\boldsymbol{c})}^{g_{4}} =  Z_{2}^{c_{1}}\otimes X_{3}^{c_{2}}Z_{3}^{c_{1}c_{2}} \otimes X_{4}^{c_{1}c_{2}}Z_{4}^{c_{2}}$}
\label{fig:g4}
\end{subfigure}
\hfill
\begin{subfigure}{0.5\textwidth}
\captionsetup{justification=centering}
\centering
\includegraphics[width=0.425\textwidth]{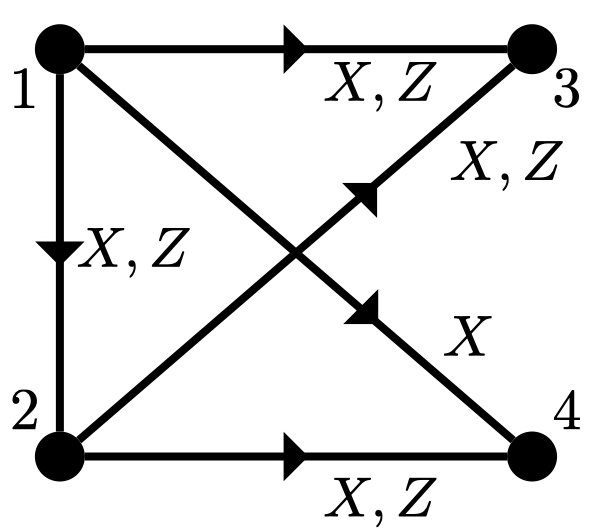}
\caption{$g_{5}: 1 \mapsto \{2, 3, 4\}, 2 \mapsto \{3,4\}$, \newline $U_{\text{corr}(\boldsymbol{c})}^{g_{5}} = X_{2}^{c_{1}}Z_{2}^{c_{1}}\otimes X_{3}^{c_{1}c_{2}}Z_{3}^{c_{1}c_{2}} \otimes X_{4}^{c_{1}c_{2}}Z_{4}^{c_{2}}$}
\label{fig:g5}
\end{subfigure}
\caption{(a) - (e) display the directed acyclic graphs and correction operators for the gflows $g_{1}, ..., g_{5}$ respectively which are compatible with $(G,I,O, \omega)$ where $\omega(1) = \omega(2) = XY$. The corresponding captions detail the gflows themselves and the associated $U_{\text{corr}(\boldsymbol{c})}$. The partial order for $g_{1}, g_{2}, g_{4}$ and $g_{5}$ is given by $1 < 2$ , and that for $g_{3}$ is $2<1$.}
\label{fig:XY_gflows}
\end{figure}

\begin{figure}
\captionsetup{justification=centering}
\begin{subfigure}{0.3\textwidth}
\captionsetup{justification=centering}
\centering
\includegraphics[width=0.7\textwidth]{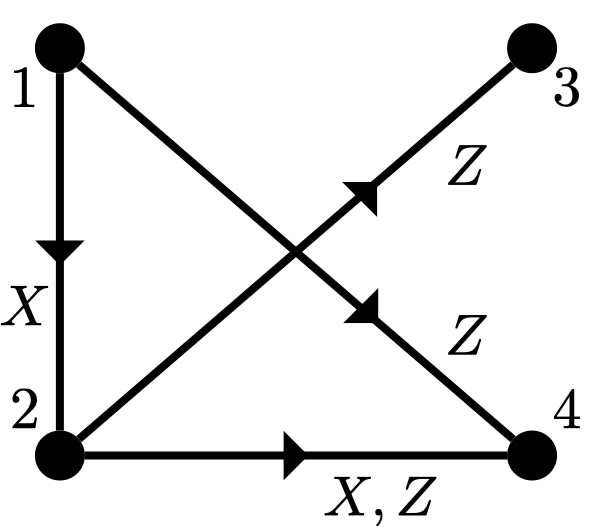}
\caption{$g_{6}: 1 \mapsto \{2\}, 2 \mapsto \{2,4\}$, \newline $U_{\text{corr}(\boldsymbol{c})}^{g_{6}} =  X_{2}^{c_{1}}\otimes Z_{3}^{c_{2}} \otimes X_{4}^{c_{2}}Z_{4}^{c_{1}c_{2}}$}
\label{fig:g6}
\end{subfigure}
\hfill
\begin{subfigure}{0.3\textwidth}
\captionsetup{justification=centering}
\centering
\includegraphics[width=0.7\textwidth]{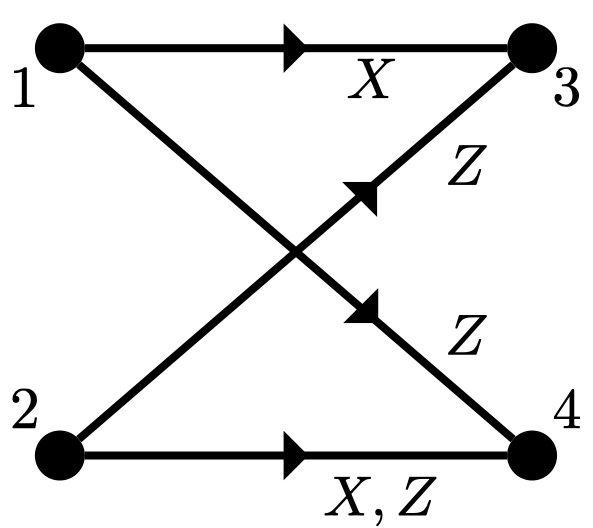}
\caption{$g_{7}: 1 \mapsto \{3\}, 2 \mapsto \{2,4\}$, \newline $U_{\text{corr}(\boldsymbol{c})}^{g_{7}} =  X_{3}^{c_{1}}Z_{3}^{c_{2}} \otimes X_{4}^{c_{2}}Z_{4}^{c_{1}c_{2}}$}
\label{fig:g7}
\end{subfigure} 
\hfill
\begin{subfigure}{0.3\textwidth}
\captionsetup{justification=centering}
\centering
\includegraphics[width=0.7\textwidth]{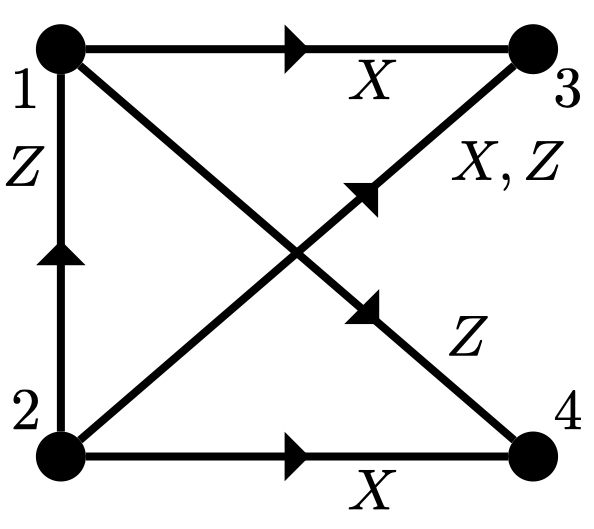}
\caption{$g_{8}: 1 \mapsto \{3\}, 2 \mapsto \{2,3,4\}$, \newline $U_{\text{corr}(\boldsymbol{c})}^{g_{8}} = Z_{1}^{c_{2}} \otimes X_{3}^{c_{1}c_{2}}Z_{3}^{c_{2}} \otimes X_{4}^{c_{2}}Z_{4}^{c_{1}}$}
\label{fig:g8}
\end{subfigure} \vspace{15mm} \\
\begin{subfigure}{0.5\textwidth}
\captionsetup{justification=centering}
\centering
\includegraphics[width=0.425\textwidth]{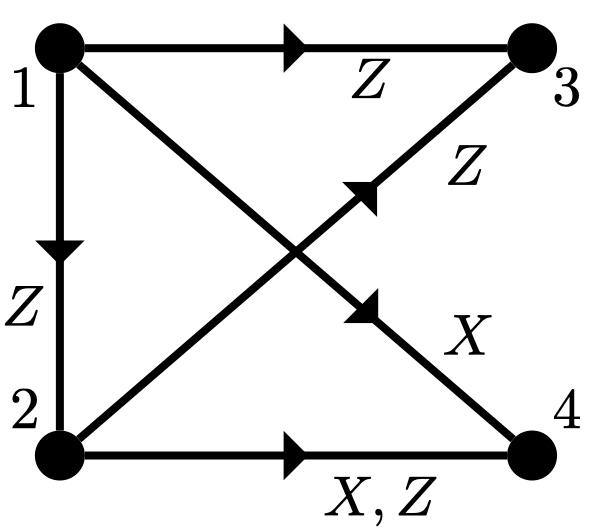}
\caption{$g_{9}: 1 \mapsto \{4\}, 2 \mapsto \{2,4\}$, \newline $U_{\text{corr}(\boldsymbol{c})}^{g_{9}} = Z_{2}^{c_{1}}\otimes Z_{3}^{c_{1}c_{2}} \otimes X_{4}^{c_{1}c_{2}}Z_{4}^{c_{2}}$}
\label{fig:g9}
\end{subfigure} 
\hfill
\begin{subfigure}{0.5\textwidth}
\captionsetup{justification=centering}
\centering
\includegraphics[width=0.425\textwidth]{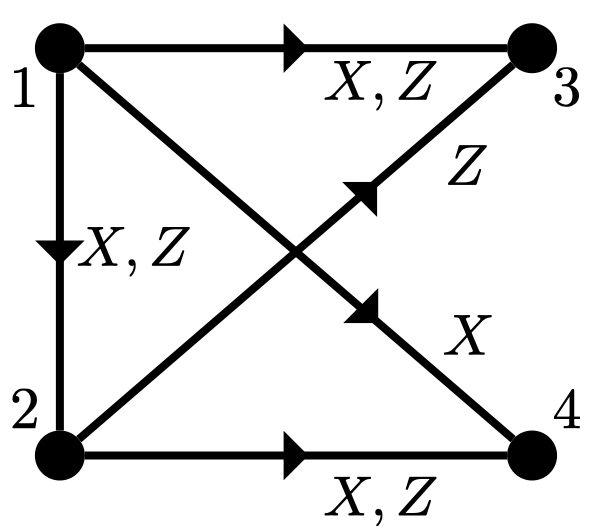}
\caption{$g_{10}: 1 \mapsto \{2,3,4\}, 2 \mapsto \{2,4\}$, \newline $U_{\text{corr}(\boldsymbol{c})}^{g_{10}} =  X_{2}^{c_{1}}Z_{2}^{c_{1}}\otimes X_{3}^{c_{1}}Z_{3}^{c_{1}c_{2}} \otimes X_{4}^{c_{1}c_{2}}Z_{4}^{c_{2}}$}
\label{fig:g10}
\end{subfigure}
\caption{(a) - (e) display the directed acyclic graphs and correction operators for the gflows $g_{6}, ..., g_{10}$ respectively which are compatible with $(G,I,O, \omega)$ where $\omega(1) = XY$ and $\omega(2) = XZ$. The corresponding captions detail the gflows themselves and the associated $U_{\text{corr}(\boldsymbol{c})}$. The partial order for $g_{6}, g_{7}, g_{9}$ and $g_{10}$ is given by $1 < 2$ , and that for $g_{8}$ is $2<1$.}
\label{fig:XZ_gflows}
\end{figure}

\begin{figure}
\captionsetup{justification=centering}
\begin{subfigure}{0.3\textwidth}
\captionsetup{justification=centering}
\centering
\includegraphics[width=0.7\textwidth]{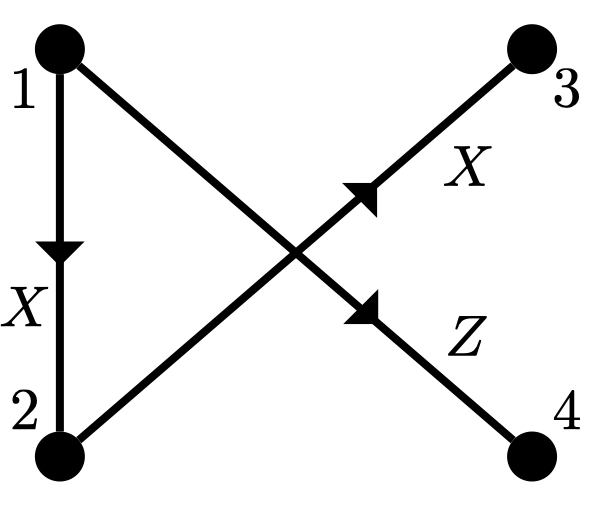}
\caption{$g_{11}: 1 \mapsto \{2\}, 2 \mapsto \{2,3\}$, \newline $U_{\text{corr}(\boldsymbol{c})}^{g_{11}} =  X_{2}^{c_{1}}\otimes X_{3}^{c_{2}} \otimes Z_{4}^{c_{1}}$}
\label{fig:g11}
\end{subfigure}
\hfill
\begin{subfigure}{0.3\textwidth}
\captionsetup{justification=centering}
\centering
\includegraphics[width=0.7\textwidth]{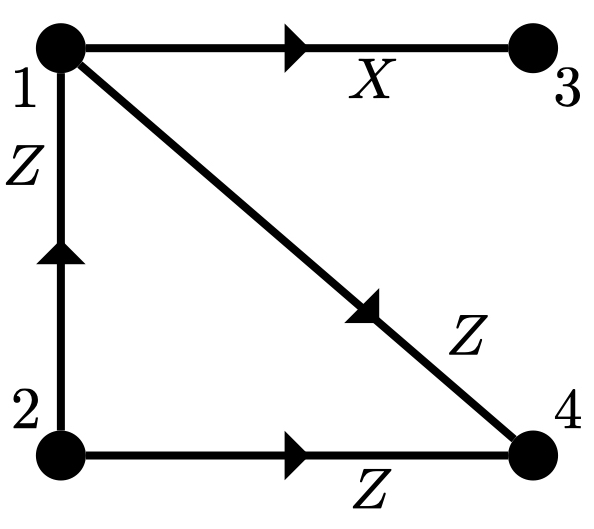}
\caption{$g_{12}: 1 \mapsto \{3\}, 2 \mapsto \{2\}$, \newline $U_{\text{corr}(\boldsymbol{c})}^{g_{12}} = Z_{1}^{c_{2}} \otimes X_{3}^{c_{1}} \otimes Z_{4}^{c_{1}c_{2}}$}
\label{fig:g12}
\end{subfigure}
\hfill
\begin{subfigure}{0.3\textwidth}
\captionsetup{justification=centering}
\centering
\includegraphics[width=0.7\textwidth]{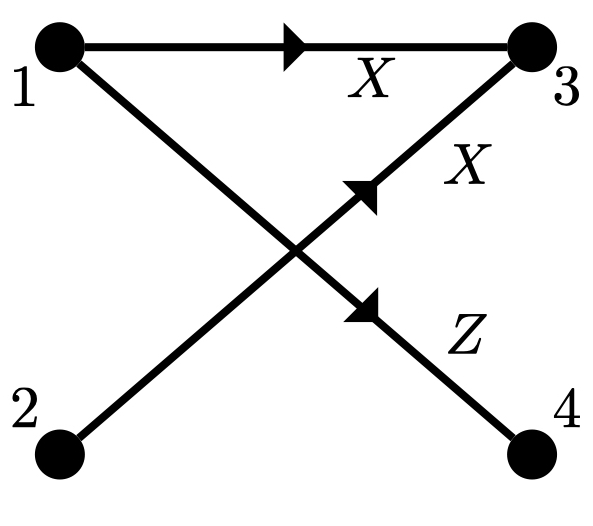}
\caption{$g_{13}: 1 \mapsto \{3\}, 2 \mapsto \{2,3\}$, \newline $U_{\text{corr}(\boldsymbol{c})}^{g_{13}} = X_{3}^{c_{1}c_{2}} \otimes Z_{4}^{c_{1}}$}
\label{fig:g13}
\end{subfigure} \vspace{15mm} \\

\begin{subfigure}{0.5\textwidth}
\captionsetup{justification=centering}
\centering
\includegraphics[width=0.425\textwidth]{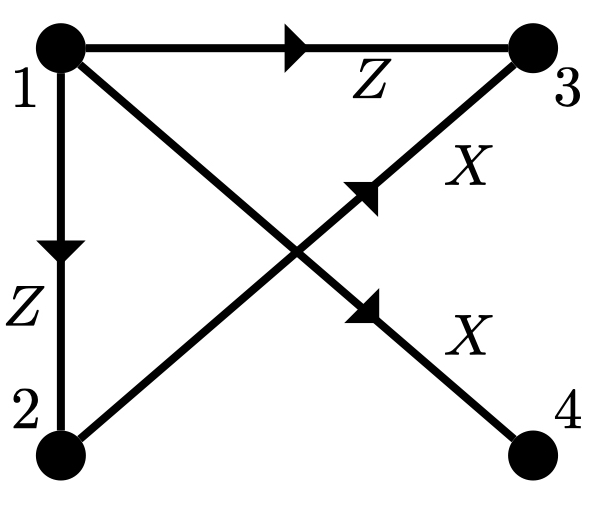}
\caption{$g_{14}: 1 \mapsto \{4\}, 2 \mapsto \{2,3\}$, \newline $U_{\text{corr}(\boldsymbol{c})}^{g_{14}} = Z_{2}^{c_{1}}\otimes X_{3}^{c_{2}}Z_{3}^{c_{1}} \otimes X_{4}^{c_{1}}$}
\label{fig:g14}
\end{subfigure}
\hfill
\begin{subfigure}{0.5\textwidth}
\captionsetup{justification=centering}
\centering
\includegraphics[width=0.425\textwidth]{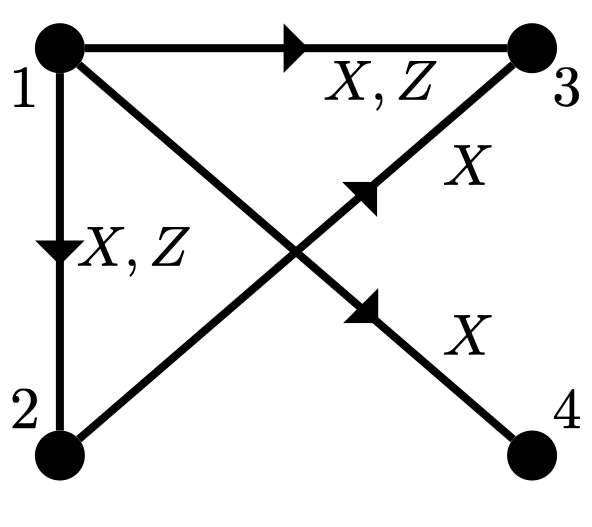}
\caption{$g_{15}: 1 \mapsto \{2,3,4\}, 2 \mapsto \{2,3\}$, \newline $U_{\text{corr}(\boldsymbol{c})}^{g_{15}} = X_{2}^{c_{1}}Z_{2}^{c_{1}}\otimes X_{3}^{c_{1}c_{2}}Z_{3}^{c_{1}} \otimes X_{4}^{c_{1}}$}
\label{fig:g15}
\end{subfigure}
\caption{(a) - (e) display the directed acyclic graphs and correction operators for the gflows $g_{11}, ..., g_{15}$ respectively which are compatible with $(G,I,O, \omega)$ where $\omega(1) = XY$ and $\omega(2) = YZ$. The corresponding captions detail the gflows themselves and the associated $U_{\text{corr}(\boldsymbol{c})}$. The partial order for $g_{11}, g_{13}, g_{14}$ and $g_{15}$ is given by $1 < 2$ , and that for $g_{12}$ is $2<1$.}
\label{fig:YZ_gflows}
\end{figure}

\end{document}